\newcommand{\R}{{\Bbb R}}
\newcommand{\C}{{\Bbb C}}
\newcommand{\Z}{{\Bbb Z}}
\newcommand{\res}{\text{\upshape Res\,}}
\newcommand{\re}{\text{\upshape Re\,}}
\newcommand{\im}{\text{\upshape Im\,}}
\newcommand{\mKdV}{\text{\upshape mKdV}}
\newcommand{\KdV}{\text{\upshape KdV}}
\def\XXint#1#2#3{{\setbox0=\hbox{$#1{#2#3}{\int}$}
\vcenter{\hbox{$#2#3$}}\kern-.5\wd0}}
\newtheorem{theorem}{Theorem}[section]
\newtheorem{lemma}[theorem]{Lemma}
\newtheorem{assumption}[theorem]{Assumption}
\newtheorem{remark}[theorem]{Remark}
\newtheorem{figuretext}[theorem]{Figure}
\newtheorem{RHproblem}[theorem]{RH problem}
\numberwithin{equation}{section}
\title[Miura transformation for the ``good'' Boussinesq equation]
{Miura transformation for the \\ ``good'' Boussinesq equation}
\author{C. Charlier}
\address{CC: Centre for Mathematical Sciences, Lund University, 221 00 Lund, Sweden.}
\email{christophe.charlier@math.lu.se}
\author{J. Lenells}
\address{JL: Department of Mathematics, KTH Royal Institute of Technology, 100 44 Stockholm, Sweden.}
\email{jlenells@kth.se}
\dedicatory{Dedicated to Professor A. S. Fokas on the occasion of his 70th birthday}
\begin{document}

\begin{abstract}
It is well-known that each solution of the mKdV equation gives rise, via the Miura transformation, to a solution of the KdV equation. In this work, we show that a similar Miura-type transformation exists also for the ``good'' Boussinesq equation. This transformation maps solutions of a second-order equation to solutions of the fourth-order Boussinesq equation. Just like in the case of mKdV and KdV, the correspondence exists also at the level of the underlying Riemann--Hilbert problems and this is in fact how we construct the new transformation.
\end{abstract}

\maketitle

\noindent
{\small{\sc AMS Subject Classification (2020)}: 35G20, 35Q15, 37K15.}

\noindent
{\small{\sc Keywords}: Miura transformation, Boussinesq equation, integrable system, Riemann-Hilbert problem.}

%\setcounter{tocdepth}{1} 
%\tableofcontents

\section{Introduction}
The Boussinesq equation was derived in \cite{B1872} as a model for small-amplitude waves propagating in shallow water. There are two different versions of the Boussinesq equation, often referred to as the ``good'' and ``bad'' Boussinesq equations \cite{M1981}; this work focuses on the ``good'' Boussinesq equation given by
\begin{align}\label{goodboussinesq}
u_{tt} - u_{xx} + (u^2)_{xx} + u_{xxxx} = 0.
\end{align}
Equation (\ref{goodboussinesq}) is relevant for multiple applications \cite{B1872, FST1983, S1974, ZK1965, Z1974} and possesses a rich mathematical structure, partly due to its integrability \cite{Z1974, ZS1974, H1973}.
The well-posedness of (\ref{goodboussinesq}) has been thoroughly studied (see e.g. \cite{KT2010, BS1988, CT2017, F2009, HM2015, L1993}) and asymptotic questions have also been investigated \cite{F2008, L1997, LS1995, CLW2022, WZ2022}.

In this paper, we construct a Miura-type transformation for (\ref{goodboussinesq}).
Our main result is the discovery that each complex-valued function $q$ satisfying
\begin{align}\label{3x3eq}
 iq_t - \frac{1}{\sqrt{3}}q_{xx} + 2\sqrt{3}\bar{q}\bar{q}_x =0,  
\end{align}
gives rise, via an explicit non-linear transformation, to a solution $u$ of \eqref{goodboussinesq}. More precisely, we have the following.

\begin{theorem}[Miura transformation for the ``good'' Boussinesq equation]\label{miurath}
If $q(x,t)$ is a complex-valued solution of \eqref{3x3eq}, then
\begin{align}\label{miuragoodintro}
u(x,t) = - \frac{4}{\sqrt{3}} \bigg( \frac{9}{2}\Big|q\Big(\frac{x}{3^{1/4}},t\Big)\Big|^2 + 3 \, \re\Big[e^{\frac{2\pi i}{3}}  q_x\Big(\frac{x}{3^{1/4}},t\Big)\Big] \bigg) + \frac{1}{2}
\end{align}
is a real-valued solution of the ``good'' Boussinesq equation \eqref{goodboussinesq}.
\end{theorem}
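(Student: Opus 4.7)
The plan is a direct algebraic verification: substitute \eqref{miuragoodintro} into \eqref{goodboussinesq}, use \eqref{3x3eq} and its complex conjugate to eliminate all time derivatives of $q$, and check that the result vanishes identically. To streamline the computation I introduce the rescaled variable $y := 3^{-1/4}x$, write $Q(y,t) := q(y,t)$ and $\omega := e^{2\pi i/3}$, and note that $Q$ then satisfies \eqref{3x3eq} with $x$ replaced by $y$. In these variables,
\[
u(x,t) = -\frac{4}{\sqrt{3}}\Bigl(\tfrac{9}{2}|Q|^2 + 3\,\re\bigl[\omega Q_y\bigr]\Bigr) + \tfrac{1}{2}, \qquad \partial_x^k = 3^{-k/4}\partial_y^k.
\]
Reality of $u$ is immediate, since $|Q|^2$ and $\re[\omega Q_y]$ are individually real.

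The first substantive step is to express $u_{tt}$ in spatial variables only. From \eqref{3x3eq} one has
\[
Q_t = -\frac{i}{\sqrt{3}}Q_{yy} + 2\sqrt{3}\,i\,\bar{Q}\bar{Q}_y,
\]
with the analogous formula for $\bar Q_t$ by conjugation. Differentiating once more in $t$ and resubstituting $Q_t,\bar Q_t$ from these formulas gives $Q_{tt}$ and $\bar Q_{tt}$ purely in terms of $Q,\bar Q$ and their $y$-derivatives up to order four. Combined with the straightforward computation of $u_{xx}$, $u_{xxxx}$ and $(u^2)_{xx}$ in the $y$-variable (which only involves derivatives of the two building blocks $|Q|^2$ and $\re[\omega Q_y]$), this reduces the left-hand side of \eqref{goodboussinesq} to a polynomial in $Q,\bar Q,Q_y,\bar Q_y,\ldots,Q_{yyyy},\bar Q_{yyyy}$ with coefficients in $\mathbb{Z}[\omega,1/\sqrt{3}]$.

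The final step is to verify that this polynomial is identically zero. Grouping monomials by their $\omega$-content (using $\omega^3=1$ and $1+\omega+\bar\omega=0$) and by their degree in $Q,\bar Q$, each homogeneous block should cancel on its own. The specific coefficients $-4/\sqrt{3}$, $9/2$, $3$, and the additive $1/2$ in \eqref{miuragoodintro} are exactly what is needed for this cancellation; indeed, tracking a single representative monomial already fixes most of them, and the remaining ones are pinned down by the requirement that the constant and linear pieces of $u_{tt}-u_{xx}$ match the $(u^2)_{xx}+u_{xxxx}$ contribution.

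The main obstacle is purely the bookkeeping: after elimination of $Q_{tt}$ the expression for $u_{tt}$ already contains a large number of cubic and quartic monomials in $Q$ and $\bar Q$, and the nonlinear term $(u^2)_{xx}$ contributes many more, so the matching has to be organised with care (splitting $u$ into its quadratic and linear-in-$Q_y$ parts and treating each $\omega$-sector separately is a reasonable strategy). A cleaner but less self-contained alternative, announced in the abstract, is to \emph{derive} \eqref{miuragoodintro} from a gauge equivalence between the $3\times 3$ Lax pairs of \eqref{3x3eq} and \eqref{goodboussinesq}: this reduces the verification to a matrix identity, explains the appearance of the cube root $\omega$ and the prefactor $3^{1/4}$, and produces the correct numerical coefficients automatically, in complete analogy with the classical $\mKdV \to \KdV$ Miura derivation.
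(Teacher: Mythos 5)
Your strategy is viable, and the paper itself vouches for it: the authors note that Theorem 1.1 ``can be verified by a long but straightforward computation,'' which is exactly the direct substitution you describe. But this is genuinely not the route the paper takes. In the paper, Theorem 1.1 is a two-line corollary of the main Theorem 2.7: part (e) of that theorem establishes the unscaled identity $u = -\tfrac{9}{2}|q|^2 - 3\,\re(\omega q_x)$ as a by-product of the Riemann--Hilbert correspondence (the explicit factor $T = I + A/k + B/k^2$ mapping the regular solution $m$ to the singular solution $M$, the recovery formula \eqref{recoverugood}, and the relation $\partial_x m_{33}^{(1)} = 3|q|^2$ read off from the Lax pair), and Theorem 1.1 then follows from the elementary scaling \eqref{fromboussinesqtogoodboussinesq} converting \eqref{goodboussinesqnouxx} into \eqref{goodboussinesq}. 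Your change of variables $y = 3^{-1/4}x$ with the constants $4/\sqrt{3}$ and $1/2$ reproduces exactly that last scaling step, so that part of your argument coincides with the paper's. What the two approaches buy is different: yours is elementary and self-contained but consists almost entirely of bookkeeping; the paper's requires the full RH machinery but explains where the coefficients come from, which is also the alternative you gesture at in your final paragraph. The one substantive reservation is that your proposal is a plan rather than a completed proof: the central claim --- that after eliminating $Q_t$ and $\bar Q_t$ via the evolution equation the resulting polynomial in $Q$, $\bar Q$ and their $y$-derivatives vanishes identically --- is asserted (``each homogeneous block should cancel on its own'') but never exhibited, and for a verification-type argument that cancellation \emph{is} the proof. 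To make this complete you would need to display the grouped cancellation, or at least record the intermediate expression for $u_{tt}$ in spatial variables so the identity can be checked.
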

Two concrete examples of applications of the transformation \eqref{miuragoodintro} are provided in Section \ref{section:example}.

The assertion of Theorem \ref{miurath} can be verified by a long but straightforward computation. However, as we will explain in this paper, the Miura transformation (\ref{miuragoodintro}) can also be derived in a systematic manner. 
The underlying idea is that, thanks to its integrability, the ``good'' Boussinesq equation has a Riemann--Hilbert (RH) problem associated to it. This RH problem is singular and, after a transformation, this singular behavior manifests itself as a double pole in the solution at $k = 0$, where $k$  is the spectral parameter. Associated to the singular RH problem is a regular RH problem, which has the same jumps as the singular problem but no pole at $k = 0$. The integrable equation associated to this regular problem turns out to be equation (\ref{3x3eq}). As we describe in this paper, it is possible to relate the solutions of the regular and singular RH problems explicitly. This leads to an explicit relation between the associated integrable equations, which is exactly the transformation (\ref{miuragoodintro}).

\subsection{Background}
In 1968, Miura \cite{M1968} introduced an explicit transformation that maps solutions $q$ of the modified Korteveg--de Vries (mKdV) equation 
\begin{align}\label{mKdV}
 \mbox{mKdV: } \quad  q_{t} + q_{xxx} - 6q^{2}q_{x} = 0
\end{align}
to solutions $u$ of the Korteveg--de Vries (KdV) equation 
\begin{align}\label{KdV}
\mbox{KdV: } \quad & u_{t}+u_{xxx}-6uu_{x} = 0.
\end{align} 
More precisely, if $q$ satisfies \eqref{mKdV}, then 
\begin{align}\label{originalmiura}
u = q_{x} + q^{2}
\end{align}
satisfies (\ref{KdV}). The mapping (\ref{originalmiura}) from $q$ to $u$ is the Miura transformation. The Miura transformation was instrumental in the early days of the development of a theory for nonlinear integrable PDEs and has many applications; for example, it was used by Miura, Gardner, and Kruskal to construct infinitely many conservation laws for the KdV equation \cite{MGK1968}. 
In fact, the transformation (\ref{originalmiura}) is a reflection of a much deeper correspondence between the mKdV and KdV equations that extends to the associated RH formulations. The present paper was inspired by work of Fokas and Its \cite{FI1994} where this correspondence is discussed, see \cite[Proposition 2.3 and Remark 2.5]{FI1994}. Roughly speaking, the correspondence exists because the RH problem associated to KdV is singular and the associated regular RH problem (i.e., the RH problem with the same jumps, but with no singularity) is the RH problem for mKdV, see Section \ref{kdvsec} for details.
It is possible to relate the solutions of the regular and singular RH problems explicitly and this is one way to arrive at the Miura transformation (\ref{originalmiura}).

The relationship between the RH problems for mKdV and KdV is analogous to the relationship between the RH problems for equations (\ref{goodboussinesq}) and (\ref{3x3eq}); the only difference is that whereas the solution of the RH problem for Boussinesq is a $3 \times 3$-matrix-valued function with a double pole at the origin, the solution of the RH problem for KdV is a $2 \times 2$-matrix-valued function with a simple pole at the origin. Other than that the constructions underlying the Miura transformation (\ref{originalmiura}) and the transformation (\ref{miuragoodintro}) are conceptually identical; this is why we refer to (\ref{miuragoodintro}) as a Miura transformation. 

Equation (\ref{3x3eq}) was listed as an integrable system obtained by reduction in \cite{Mik1981} and was, as far as we know, first studied in \cite{L3x3}. In \cite{CL2021}, a Riemann--Hilbert representation for the solution of the initial value problem for (\ref{3x3eq}) was derived and used to obtain formulas for the long-time asymptotics. 
%Equations (\ref{goodboussinesq}) and (\ref{3x3eq}) both feature RH problems involving $3 \times 3$ matrices. 
Other works analyzing aspects of $3 \times 3$ RH problems for integrable evolution equations include \cite{BSZ2016, CIL2010, DTT1982, BLS2017, HL2020, LGX2018, CLW2022, BS2013, XF2016}.

\subsection{Outline of the paper}
Our main result is stated in Section \ref{mainsec}. It describes the correspondence between the RH problems associated to (\ref{goodboussinesq}) and (\ref{3x3eq}), and how it leads to the transformation (\ref{miuragoodintro}). Several basic properties of the solution of the RH problems for (\ref{goodboussinesq}) and (\ref{3x3eq}) are derived in Sections \ref{Msec} and \ref{msec}, respectively. The proof of the main theorem is presented in Section \ref{proofsec} and two illustrating examples are given in Section \ref{section:example}. A detailed account of the correspondence between the RH problems associated to mKdV and KdV, and how it leads to the Miura transformation (\ref{originalmiura}), is included in Section \ref{kdvsec}.
Appendix \ref{initialapp} reviews some facts related to the initial value problems for the Boussinesq  equation and equation (\ref{3x3eq}) on the real line. In Appendix \ref{solitonapp}, we consider soliton solutions of (\ref{3x3eq}).

\section{Main result}\label{mainsec}
Before we can state our main result, we need to introduce RH problems associated to (\ref{goodboussinesq}) and (\ref{3x3eq}).
Instead of considering a RH problem directly related to (\ref{goodboussinesq}), it is convenient to first transform (\ref{goodboussinesq}) into the equation
\begin{align}\label{goodboussinesqnouxx}
& u_{tt} + \frac{1}{3} u_{xxxx} + \frac{4}{3} (u^2)_{xx}  = 0
\end{align}
by means of the transformation
\begin{align}\label{fromboussinesqtogoodboussinesq}
  \hat{u}(x,t) = \frac{4}{\sqrt{3}}u\bigg(\frac{x}{3^{1/4}},t\bigg) + \frac{1}{2}.
\end{align}
It is easy to see that if $u$ satisfies (\ref{goodboussinesqnouxx}), then $\hat{u}$ satisfies (\ref{goodboussinesq}).
Following \cite{Z1974, DTT1982}, we rewrite (\ref{goodboussinesqnouxx}) as the system
\begin{align}\label{boussinesqsystem}
& \begin{cases}
 v_{t} + \frac{1}{3}u_{xxx} + \frac{4}{3}(u^{2})_{x} = 0,
 \\
 u_t = v_x.
\end{cases}
\end{align}
A Riemann--Hilbert approach to the system (\ref{boussinesqsystem}) was developed in \cite{CL2022} and we will work with the RH problem for (\ref{boussinesqsystem}) constructed in \cite{CL2022}. This RH problem has a jump contour $\Gamma$ consisting of six rays oriented away from the origin as in Figure \ref{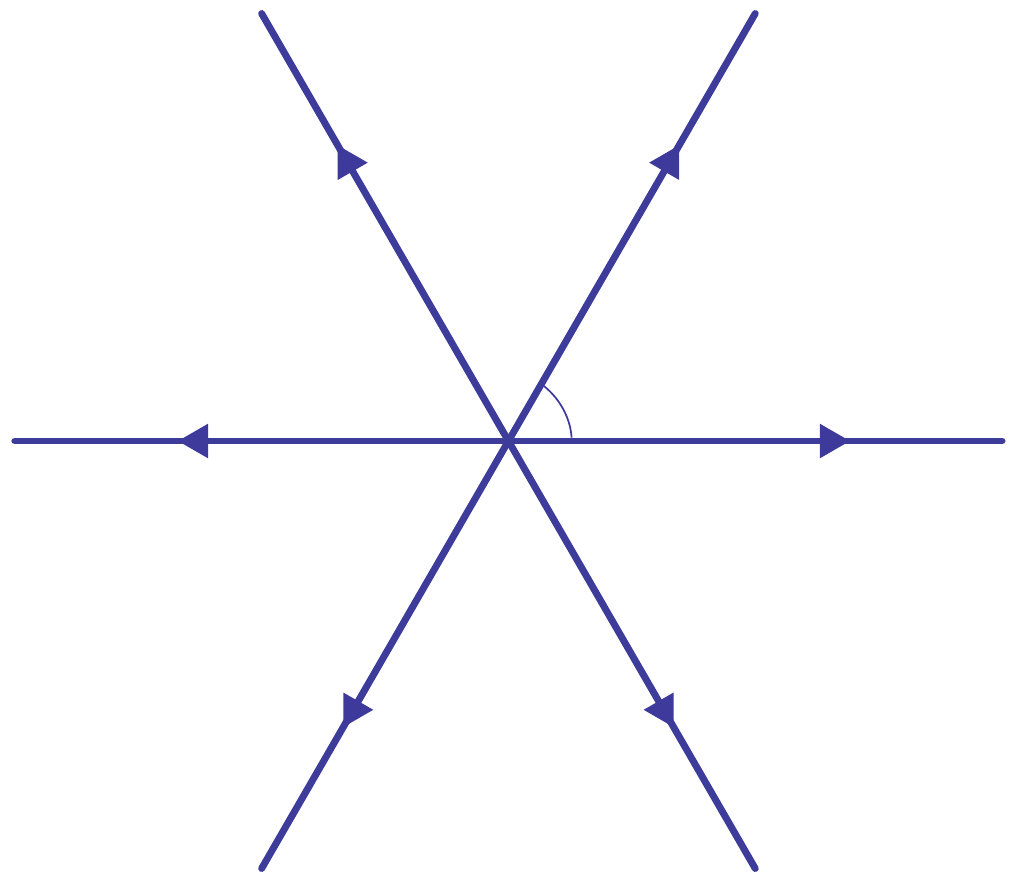}.
Let $\omega := e^{\frac{2\pi i}{3}}$ and define $\{l_j(k), z_j(k)\}_{j=1}^3$ by
\begin{align}\label{lmexpressions}
&l_j(k) = \omega^j k, \quad z_j(k) = \omega^{2j} k^{2}, \qquad k \in \C.
\end{align}
For $1 \leq i \neq j \leq 3$, let $\theta_{ij} = \theta_{ij}(x,t,k)$ be given by
$$\theta_{ij}(x,t,k) = (l_i - l_j)x + (z_i - z_j)t.$$
Given two functions $r_1:(0,\infty) \to \C$ and $r_2:(-\infty,0) \to \C$, we define the jump matrix $v(x,t,k)$ for $k \in \Gamma$ by
\begin{align}\nonumber
&  v_1 = %J_{12,1}
  \begin{pmatrix}  
 1 & - r_1(k)e^{-\theta_{21}} & 0 \\
  r_1^*(k)e^{\theta_{21}} & 1 - |r_1(k)|^2 & 0 \\
  0 & 0 & 1
  \end{pmatrix},
&&  v_2 = %J_{2,3}
  \begin{pmatrix}   
 1 & 0 & 0 \\
 0 & 1 - |r_2(\omega k)|^2 & -r_2^*(\omega k)e^{-\theta_{32}} \\
 0 & r_2(\omega k)e^{\theta_{32}} & 1 
    \end{pmatrix},
   	\\ \nonumber
  &v_3 = %J_{4,5}
  \begin{pmatrix} 
 1 - |r_1(\omega^2 k)|^2 & 0 & r_1^*(\omega^2 k)e^{-\theta_{31}} \\
 0 & 1 & 0 \\
 -r_1(\omega^2 k)e^{\theta_{31}} & 0 & 1  
  \end{pmatrix}, &&  v_4 = %J_{6,7}
  \begin{pmatrix}  
  1 - |r_2(k)|^2 & -r_2^*(k) e^{-\theta_{21}} & 0 \\
  r_2(k)e^{\theta_{21}} & 1 & 0 \\
  0 & 0 & 1
   \end{pmatrix},
   	\\ \label{vdef}
&  v_5 = %J_{8,9}
  \begin{pmatrix}
  1 & 0 & 0 \\
  0 & 1 & -r_1(\omega k)e^{-\theta_{32}} \\
  0 & r_1^*(\omega k)e^{\theta_{32}} & 1 - |r_1(\omega k)|^2
  \end{pmatrix},
&& v_6 = %J_{10,11}
  \begin{pmatrix} 
  1 & 0 & r_2(\omega^2 k)e^{-\theta_{31}} \\
  0 & 1 & 0 \\
  -r_2^*(\omega^2 k)e^{\theta_{31}} & 0 & 1 - |r_2(\omega^2 k)|^2
   \end{pmatrix},
\end{align}
where $v_j$ denotes the restriction of $v$  to the subcontour of $\Gamma$ labeled by $j$ in Figure \ref{Gamma.pdf}.
Let $D_n$, $n = 1, \dots, 6$, be the open subsets displayed in Figure \ref{Gamma.pdf}.

We can now state the RH problem for (\ref{boussinesqsystem}). We denote the solutions of the RH problems associated to (\ref{3x3eq}) and (\ref{boussinesqsystem}) by $m$ and $M$, respectively.

\begin{figure}
\begin{center}
 \begin{overpic}[width=.5\textwidth]{Gamma.pdf}
  \put(101,42.5){\small $\Gamma$}
 \put(57,47){\small $\pi/3$}
 \put(80,60){\small $D_1$}
 \put(48,74){\small $D_2$}
 \put(17,60){\small $D_3$}
 \put(17,25){\small $D_4$}
 \put(48,12){\small $D_5$}
 \put(80,25){\small $D_6$}
  \put(81,38){\small $1$}
 \put(68,69){\small $2$}
 \put(30,69){\small $3$}
 \put(18,38){\small $4$}
 \put(30,16){\small $5$}
 \put(67,16){\small $6$}
   \end{overpic}
     \begin{figuretext}\label{Gamma.pdf}
       The contour $\Gamma$ and the open sets $D_n$, $n = 1, \dots, 6$, which decompose the complex $k$-plane.
     \end{figuretext}
     \end{center}
\end{figure}

\begin{RHproblem}[RH problem for $M$]\label{RHgoodboussinesq}
Let $r_1:(0,\infty) \to \C$ and $r_2:(-\infty,0) \to \C$ be given functions. Find a $3 \times 3$-matrix valued function $M(x,t,k)$ with the following properties:
\begin{enumerate}[(a)]
\item $M(x,t,\cdot) : \mathbb{C}\setminus \Gamma \to \mathbb{C}^{3 \times 3}$ is analytic.

\item The limits of $M(x,t,k)$ as $k$ approaches $\Gamma\setminus \{0\}$ from the left $(+)$ and right $(-)$ exist, are continuous on $\Gamma\setminus \{0\}$, and satisfy the jump condition
\begin{align}\label{Mjumpcondition}
& M_{+}(x,t,k) = M_{-}(x,t,k)v(x,t,k), \qquad k \in \Gamma \setminus \{0\},
\end{align}
where $v$ is defined in terms of $r_{1}$ and $r_{2}$ by \eqref{vdef}.

\item As $k \to \infty$, $k \notin \Gamma$, we have
\begin{align}\label{Matinfty}
M(x,t,k) = I + \frac{M^{(1)}(x,t)}{k} + \frac{M^{(2)}(x,t)}{k^{2}} + O\bigg(\frac{1}{k^3}\bigg),
\end{align}
where the matrices $M^{(1)}$ and $M^{(2)}$ depend on $x$ and $t$ but not on $k$, and satisfy
\begin{align}\label{singRHMatinftyb}
M_{12}^{(1)} = M_{13}^{(1)} = M_{12}^{(2)} + M_{13}^{(2)} = 0.
\end{align}

\item There exist matrices $\{\mathcal{M}_1^{(l)}(x,t)\}_{l=-2}^{+\infty}$ such that, for any $N \geq -2$,
\begin{align}\label{singRHMat0}
M(x,t,k) = \sum_{l=-2}^{N} \mathcal{M}_1^{(l)}(x,t)k^{l} + O(k^{N+1}) \qquad \text{as}\ k \to 0, \ k \in \bar{D}_1.
\end{align}
Moreover, there exist complex coefficients $\alpha, \beta, \gamma, \delta, \epsilon$ depending on $x$ and $t$ such that
\begin{align} \label{explicitMcalpm2p}
\mathcal{M}_{1}^{(-2)}(x,t) & = \alpha(x,t) \begin{pmatrix}
\omega & 0 & 0 \\
\omega & 0 & 0 \\
\omega & 0 & 0
\end{pmatrix}, 
	\\
\mathcal{M}_{1}^{(-1)}(x,t) & = \beta(x,t) \begin{pmatrix}
\omega^{2} & 0 & 0 \\
\omega^{2} & 0 & 0 \\
\omega^{2} & 0 & 0 
\end{pmatrix} + \gamma(x,t) \begin{pmatrix}
\omega^{2} & 0 & 0 \\
1 & 0 & 0 \\
\omega & 0 & 0
\end{pmatrix}  + \delta(x,t) \begin{pmatrix}
0 & 1-\omega & 0 \\
0 & 1-\omega & 0 \\
0 & 1-\omega & 0
\end{pmatrix}, 
\end{align}
and the third column of $\mathcal{M}_{1}^{(0)}(x,t)$, denoted by $[\mathcal{M}_{1}^{(0)}(x,t)]_{3}$, is given by
\begin{align}
[\mathcal{M}_{1}^{(0)}(x,t)]_{3} = \epsilon(x,t) \begin{pmatrix}
1 \\
1 \\
1 
\end{pmatrix}. \label{explicit Mcalp0p third column}
\end{align}

\item $M$ satisfies the symmetries
\begin{align}\label{singRHsymm}
M(x,t, k) = \mathcal{A} M(x,t,\omega k)\mathcal{A}^{-1} = \mathcal{B} \overline{M(x,t,\overline{k})}\mathcal{B}, \qquad k \in \C \setminus \Gamma,
\end{align}
where
\begin{align}\label{AcalBcaldef}
\mathcal{A} := \begin{pmatrix}
0 & 0 & 1 \\
1 & 0 & 0 \\
0 & 1 & 0
\end{pmatrix} \qquad \mbox{ and } \qquad \mathcal{B} := \begin{pmatrix}
0 & 1 & 0 \\
1 & 0 & 0 \\
0 & 0 & 1
\end{pmatrix}.
\end{align}
\end{enumerate}
\end{RHproblem}

\begin{remark}
We show in Section \ref{Msec} that the solution $M$ of RH problem \ref{RHgoodboussinesq} is unique if it exists. 
Moreover, we review in Appendix \ref{initialapp} how the initial value problem for (\ref{boussinesqsystem}) can be solved in terms of the solution $M$ of RH problem \ref{RHgoodboussinesq}; in the context of the initial value problem, the functions $r_1(k)$ and $r_2(k)$ are defined in terms of the given initial data $q_0(x) = q(x,0)$ and can be thought of as nonlinear Fourier transforms of $q_0$. For the present purposes, there is no need to assume that $r_1(k)$ and $r_2(k)$ originate from any initial data. 
\end{remark}

Note that $M(x,t,k)$ has a double pole at $k = 0$ as a consequence of (\ref{singRHMat0}) in the generic case when $\alpha$ is non-zero.
The RH problem associated to \eqref{3x3eq} has the same form as RH problem \ref{RHgoodboussinesq} except that its solution $m(x,t,k)$ does not have a singularity at the origin.

\begin{RHproblem}[RH problem for $m$]\label{RH3x3}
Let $r_1:(0,\infty) \to \C$ and $r_2:(-\infty,0) \to \C$ be some given functions. Find a $3 \times 3$-matrix valued function $m(x,t,k)$ with the following properties:
\begin{enumerate}[(a)]
\item $m(x,t,\cdot) : \mathbb{C}\setminus \Gamma \to \mathbb{C}^{3 \times 3}$ is analytic.

\item The limits of $m(x,t,k)$ as $k$ approaches $\Gamma\setminus \{0\}$ from the left $(+)$ and right $(-)$ exist, are continuous on $\Gamma\setminus \{0\}$, and satisfy the jump condition
\begin{align}\label{mjumpcondition}
& m_{+}(x,t,k) = m_{-}(x,t,k)v(x,t,k), \qquad k \in \Gamma \setminus \{0\},
\end{align}
where $v$ is defined in terms of $r_{1}$ and $r_{2}$ by \eqref{vdef}.
\item As $k \to \infty$, $k \notin \Gamma$, we have
\begin{align}\label{asymp m at inf in RH def}
m(x,t,k) = I + O(k^{-1}).
\end{align}
\item There exist matrices $\{\mathfrak{m}_1^{(l)}(x,t)\}_{l=0}^{+\infty}$ such that, for any $N \geq 0$,
\begin{align}
m(x,t,k) = \sum_{l=0}^{N} \mathfrak{m}_1^{(l)}(x,t)k^{l} + O(k^{N+1}) \qquad \text{as}\ k \to 0, \ k \in \bar{D}_1.
\end{align}
\end{enumerate}
\end{RHproblem}

\begin{remark}
We show in Section \ref{msec} that the solution $m$ of RH problem \ref{RH3x3} is unique if it exists, and that $m$ automatically satisfies the same symmetries (\ref{singRHsymm}) as the solution $M$ of RH problem \ref{RHgoodboussinesq}; we have included (\ref{singRHsymm}) as part of the formulation of RH problem \ref{RHgoodboussinesq} to ensure uniqueness of the solution $M$.
\end{remark}

Our main result is formulated under the the following two assumptions on the spectral functions $r_1(k)$ and $r_2(k)$.  
 
\begin{assumption}\label{r1r2assumption}
Assume $r_{1}$ and $r_{2}$ satisfy the following properties:
\begin{enumerate}[$(i)$]
\item $r_1 \in C^\infty((0,\infty))$ and $r_2 \in C^\infty((-\infty,0))$. 
\item The functions $r_1(k)$, $r_2(k)$, and their derivatives $\partial_{k}^{j}r_{l}(k)$ have continuous boundary values at $k=0$ for $l = 1,2$ and for all $j = 0,1,2,\ldots$, and there exist expansions
 \begin{subequations}\label{r1r2atzero assumptions}
\begin{align}
& r_{1}(k) = r_{1}(0) + r_{1}'(0)k + \tfrac{1}{2}r_{1}''(0)k^{2} + \cdots, & & k \to 0, \ k >0, \\
& r_{2}(k) = r_{2}(0) + r_{2}'(0)k + \tfrac{1}{2}r_{2}''(0)k^{2} + \cdots, & & k \to 0, \ k <0,
\end{align}
\end{subequations}
which can be differentiated termwise any number of times.
\item $r_1(k)$ and $r_2(k)$ are rapidly decreasing as $k \to \pm \infty$, i.e., for each integer $N \geq 0$,
\begin{subequations}\label{r1r2rapiddecay assumptions}
\begin{align}
& \max_{j=0,1,\dots,N}\sup_{k \in (0,\infty)} (1+|k|)^N |\partial_k^jr_1(k)| < \infty,  
	\\
& \max_{j=0,1,\dots,N} \sup_{k \in (-\infty, 0)} (1+|k|)^N|\partial_k^jr_2(k)| < \infty.
\end{align}
\end{subequations} 
\end{enumerate}
\end{assumption}

\begin{assumption}\label{r1r2at0assumption}
Assume $r_{1}$ and $r_{2}$ satisfy
\begin{align*}
r_1(0) = \omega, \qquad r_2(0) = 1.
\end{align*}
\end{assumption} 

\begin{remark}
The RH problems \ref{RHgoodboussinesq} and \ref{RH3x3} can be formulated for any choice of the functions $\{r_j(k)\}_1^2$ and they give rise to solutions of (\ref{goodboussinesq}) and (\ref{3x3eq}), respectively, for a very large class of such functions. 
Here, we will restrict ourselves to functions $\{r_j(k)\}_1^2$ satisfying Assumptions \ref{r1r2assumption} and \ref{r1r2at0assumption}. These assumptions are natural from the point of view of the initial value problem for the ``good'' Boussinesq equation (\ref{boussinesqsystem}), in the sense that they are fulfilled for generic solitonless Schwartz class initial data, see Appendix \ref{initialapp}. 
\end{remark}

The following is our main result.

\begin{theorem}[Miura correspondence for ``good'' Boussinesq]\label{mainth}
Let $r_{1}(k)$ and $r_{2}(k)$ be such that Assumptions \ref{r1r2assumption} and  \ref{r1r2at0assumption} hold. 
Define the $3 \times 3$-matrix valued jump matrix $v(x,t,k)$ in terms of $r_{1}$ and $r_{2}$ by (\ref{vdef}). Let $\mathcal{N}$ be an open subset of $\mathbb{R}\times [0,+\infty)$ and suppose RH problem \ref{RH3x3} has a (necessarily unique) solution $m(x,t,\cdot)$ for each $(x,t) \in \mathcal{N}$.

\begin{enumerate}[$(a)$]
\item \label{partagood}
The function $q$ defined by
\begin{align}\label{recoverq3x3}
q(x,t) = \lim_{k\to \infty} \big(k \, m(x,t,k)\big)_{13}
\end{align}
is smooth and satisfies (\ref{3x3eq}) for $(x,t) \in \mathcal{N}$.

\item \label{partbgood}
The functions $m_{33}^{(1)}(x,t)$ and $m_{13}^{(2)}(x,t)$ are well-defined by 
\begin{subequations}\label{m133m213limm}
\begin{align}
& m_{33}^{(1)}(x,t) =  \lim_{k\to \infty} k\big(m_{33}(x,t,k) - 1\big), 
	\\
& m_{13}^{(2)}(x,t) =  \lim_{k\to \infty} k^2\bigg(m_{13}(x,t,k) - \frac{q(x,t)}{k}\bigg).
\end{align}
\end{subequations}

\item \label{partcgood}
Define $A(x,t)$ and $B(x,t)$ by
\begin{subequations}\label{BAdef}
\begin{align}
& B(x,t) = y_1(x,t) \begin{pmatrix} 
\omega & \omega^2 & 1 \\ 
\omega & \omega^2 & 1  \\ 
\omega & \omega^2 & 1 
\end{pmatrix},
	\\
& A(x,t) = y_2(x,t)\begin{pmatrix} 
\omega^2 & \omega & 1 \\ 
\omega^2 & \omega & 1 \\ 
\omega^2 & \omega & 1
\end{pmatrix}
+ y_3(x,t)\begin{pmatrix} \omega^2 & 1 & \omega \\ 
1 & \omega & \omega^2 \\ 
\omega & \omega^2 & 1
\end{pmatrix},
\end{align}
\end{subequations}
where $\{y_j(x,t)\}_1^3$ are real-valued and given by
\begin{align}\nonumber
y_1 = &\; \omega\overline{m_{13}^{(2)}}  + \omega^{2} m_{13}^{(2)}-(\omega\bar{q}+\omega^{2} q)m_{33}^{(1)}  + 2 |q|^2 
	\\\nonumber
& - (1 -\omega)\bar{q}^2 - (1-\omega^{2})q^2,
	\\\label{yjdef}
y_2 = &\; \frac{\omega q -  \bar{q}}{1-\omega}, \qquad
y_3 = \frac{\omega(q-\bar{q})}{1-\omega^2}.
\end{align}	
Then the $3\times 3$-matrix valued function $M(x,t,k)$ defined by
\begin{align}\label{Mfrommdefgood}
M(x,t,k) = \bigg(I + \frac{A(x,t)}{k} + \frac{B(x,t)}{k^2}\bigg)m(x,t,k)
\end{align}
satisfies RH problem \ref{RHgoodboussinesq} for each $(x,t) \in \mathcal{N}$.

\item \label{partdgood}
The function $u$ defined by
\begin{align}\label{recoverugood}
 u(x,t) = -\frac{3}{2}\frac{\partial}{\partial x}\lim_{k\to \infty}k\big(M_{33}(x,t,k) - 1\big),
\end{align}
is smooth and real-valued on $\mathcal{N}$ and satisfies the ``good'' Boussinesq equation (\ref{goodboussinesqnouxx}) for $(x,t) \in \mathcal{N}$. Moreover, if 
\begin{align}\label{vintermsofq}
v(x,t) = -\sqrt{3}  \, \im\big(6q^3 - 3q\bar{q}_x + 6\omega^2 q q_x + \omega q_{xx}\big),
\end{align}
then $u$ and $v$ satisfy the system (\ref{boussinesqsystem}) on $\mathcal{N}$.

\item \label{partegood}
The solutions $u$ and $q$ are related by the Miura-type transformation
\begin{align}\label{miuragood}
u(x,t) = -\frac{9}{2}|q(x,t)|^2 - 3 \, \re\big(\omega q_x(x,t)\big), \qquad (x,t) \in \mathcal{N}.
\end{align}

\end{enumerate}

\end{theorem}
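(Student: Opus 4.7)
\medskip

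\noindent \textbf{Proof strategy.} The plan is to proceed through parts $(a)$--$(e)$ in order. Parts $(a)$ and $(d)$ are direct consequences of the dressing/reconstruction theory for \eqref{3x3eq} and \eqref{boussinesqsystem} developed respectively in Sections \ref{msec} and \ref{Msec}: once $m$ (resp.\ $M$) solves its RH problem, compatibility of the associated Lax pair forces the reconstructed $q$ (resp.\ $(u,v)$) to satisfy the corresponding PDE. Part $(b)$ amounts to justifying the higher-order terms in the $k\to\infty$ expansion of $m$, which follows from the small-norm theory together with the smoothness and decay of $r_1,r_2$ guaranteed by Assumption \ref{r1r2assumption}.

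For part $(c)$, set $P(x,t,k):=I+A(x,t)/k+B(x,t)/k^2$ so that $M=Pm$, and verify each requirement of RH problem \ref{RHgoodboussinesq} in turn. The jump condition is immediate since $P$ is analytic on $\C\setminus\{0\}$ and $\Gamma$ avoids the origin. The symmetries \eqref{singRHsymm} for $M$ reduce to the algebraic identities $A=\omega^{-1}\mathcal{A}A\mathcal{A}^{-1}=\mathcal{B}\bar A\mathcal{B}$ and $B=\omega^{-2}\mathcal{A}B\mathcal{A}^{-1}=\mathcal{B}\bar B\mathcal{B}$, which are checked by direct inspection of \eqref{BAdef} and which force $A$ and $B$ into the displayed circulant shapes parametrized by $y_1,y_2,y_3\in\R$. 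The asymptotic conditions \eqref{singRHMatinftyb} at infinity translate into three real linear equations on $(y_1,y_2,y_3)$ with coefficients given by the symmetry-constrained expansion of $m$; solving these recovers exactly \eqref{yjdef}. Finally, the singular structure at $k=0$ follows from the analyticity of $m$ there: Taylor-expanding $Pm$ at the origin yields $\mathcal{M}_1^{(-2)}=B\,m(x,t,0)$, $\mathcal{M}_1^{(-1)}=A\,m(x,t,0)+B\,\mathfrak{m}_1^{(1)}(x,t)$, and so on, and the required rank-one and prescribed-column shapes \eqref{explicitMcalpm2p}--\eqref{explicit Mcalp0p third column} then follow from the rank-one factorization of $B$ combined with the degeneracies forced on $m(x,t,0)$ by Assumption \ref{r1r2at0assumption} through the jump relations at the origin.

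For part $(e)$, a direct expansion of $M_{33}(k)=m_{33}(k)+(Am(k))_{33}/k+(Bm(k))_{33}/k^2$ at $k=\infty$ gives
\begin{align*}
M_{33}^{(1)}(x,t)=m_{33}^{(1)}(x,t)+A_{33}(x,t),
\end{align*}
and a short manipulation of \eqref{yjdef} yields $A_{33}=y_2+y_3=\omega q+\omega^2\bar q=2\,\re\big(\omega q\big)$. It remains to establish the reconstruction identity $\partial_x m_{33}^{(1)}=3|q|^2$. Writing the $x$-part of the Lax pair for \eqref{3x3eq} as $m_x=Um-m(k\Lambda)$ with $\Lambda=\diag(\omega,\omega^2,1)$ and matching the coefficient of $k^{-1}$, the $(3,3)$-entry yields $\partial_x m_{33}^{(1)}=\sum_{j\neq 3} m_{3j}^{(1)}(\lambda_j-\lambda_3)\,m_{j3}^{(1)}$; inserting the symmetry-determined off-diagonal values $m_{13}^{(1)}=q$, $m_{23}^{(1)}=\bar q$, $m_{31}^{(1)}=\omega^2\bar q$, $m_{32}^{(1)}=\omega q$ gives $[\omega^2(\omega-1)+\omega(\omega^2-1)]|q|^2=3|q|^2$. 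Substituting both pieces into \eqref{recoverugood} yields $u=-\tfrac{9}{2}|q|^2-3\,\re\big(\omega q_x\big)$, which is \eqref{miuragood}.

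The main technical obstacle lies in the analysis at $k=0$ in part $(c)$: verifying that $B\,m(x,t,0)$ and $A\,m(x,t,0)+B\,\mathfrak{m}_1^{(1)}$ actually take the specific rank-one and prescribed-column shapes demanded by \eqref{explicitMcalpm2p}--\eqref{explicit Mcalp0p third column}. This is precisely where the boundary values $r_1(0)=\omega$, $r_2(0)=1$ of Assumption \ref{r1r2at0assumption} enter essentially: they cause the jump matrices meeting at the origin to degenerate, which in turn collinearizes certain columns of $m(x,t,0)$. By contrast, the analogous step in the mKdV/KdV correspondence discussed in Section \ref{kdvsec} involves only a simple pole at $k=0$ and is therefore considerably simpler.
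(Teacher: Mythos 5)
Your overall architecture coincides with the paper's: a dressing argument for part $(a)$, the Cauchy-integral and symmetry expansion for $(b)$, multiplication by $I+A/k+B/k^{2}$ with term-by-term verification of the RH conditions for $(c)$, and the identity $\partial_{x}m_{33}^{(1)}=3|q|^{2}$ combined with $A_{33}=\omega q+\omega^{2}\bar{q}=2\re(\omega q)$ for $(e)$; your computations in $(e)$ are correct and are exactly those of the paper. One minor imprecision: $m$ is \emph{not} analytic at $k=0$ (the six rays of $\Gamma$ meet there); it only admits sectorial asymptotic expansions in the sense of Lemma \ref{mat0lemma}, with coefficients $\mathfrak{m}_{j}^{(l)}$ that differ from sector to sector. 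Your identities $\mathcal{M}_{1}^{(-2)}=B\,\mathfrak{m}_{1}^{(0)}$, $\mathcal{M}_{1}^{(-1)}=A\,\mathfrak{m}_{1}^{(0)}+B\,\mathfrak{m}_{1}^{(1)}$ are nevertheless the right ones provided $m(x,t,0)$ is read as the boundary value from $\bar{D}_{1}$, which is all that condition \eqref{singRHMat0} concerns, so this is a matter of wording rather than substance; your identification of where Assumption \ref{r1r2at0assumption} enters is also the correct one.

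The genuine gap is in part $(d)$. You dispose of it by appeal to ``the dressing/reconstruction theory for \eqref{boussinesqsystem} developed in Section \ref{Msec}'', but that section contains no such theory: it only establishes uniqueness and structural properties of $M$. The dressing argument for $M$ is materially harder than the one for $m$ because $M$ has a double pole at the origin, so $(M_{x}+kMJ)M^{-1}$ and $(M_{t}+k^{2}MJ^{2})M^{-1}$ are Laurent polynomials with terms down to $k^{-2}$, and one must identify them with the specific $\mathcal{U},\mathcal{V}$ of \eqref{UVexpressions}, i.e.\ verify that the coefficient functions are exactly $u$ from \eqref{recoverugood} and $v$ from \eqref{vintermsofq}. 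Your proposal never touches the formula \eqref{vintermsofq} for $v$, and without that identification the claim that $(u,v)$ solves the system \eqref{boussinesqsystem} is unproved. The paper supplies this either through the explicit Lax pair \eqref{Msinglax} for $M$ (whose verification uses the relation \eqref{get rid of u}), or, more cheaply, by observing that once \eqref{miuragood} is established, all of part $(d)$ reduces to a direct (if long) computation from the fact that $q$ solves \eqref{3x3eq}; you need to carry out one of these two routes.
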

\begin{proof}
See Section \ref{proofsec}.
\end{proof}

\begin{remark}
The Miura transformation of Theorem \ref{miurath} follows immediately from \eqref{miuragood} and \eqref{fromboussinesqtogoodboussinesq}.
\end{remark}

\section{Properties of the solution $M$ of RH problem \ref{RHgoodboussinesq}}\label{Msec}
We establish several properties of the solution $M$ of RH problem \ref{RHgoodboussinesq}.
Throughout this section, we assume that $\mathcal{N}$ is a (not necessarily open) subset of $\mathbb{R}\times [0,+\infty)$ and that $M(x,t,\cdot)$ satisfies RH problem \ref{RHgoodboussinesq} for each $(x,t) \in \mathcal{N}$.

\begin{lemma}[Asymptotics of $M$ as $k \to \infty$]\label{Matinftylemma}
For each $(x,t) \in \mathcal{N}$, $M(x,t,k)$ admits an expansion of the following form as $k \to \infty$:
\begin{align}
M(x,t,k) = & \; I + \frac{M_{33}^{(1)}}{k} \begin{pmatrix} \omega^2 & 0 & 0 \\ 
0 & \omega & 0 \\ 
0 & 0 & 1
\end{pmatrix} 
+ \frac{M_{33}^{(2)}}{k^2}\begin{pmatrix} 
\omega  & 0 & 0 \\ 
0 & \omega^2 & 0 \\ 
0 & 0 &  1
\end{pmatrix} \nonumber
	 \\ \label{singMatinfty} 
& \; + \frac{\widetilde{M}_{13}^{(2)}}{(1-\omega)k^2}\begin{pmatrix} 
0 & 1 & -1 \\ 
-\omega & 0 & \omega \\ 
\omega^2 & -\omega^2  &  0
\end{pmatrix} + O(k^{-3}), 
\end{align}
where $M_{33}^{(1)}(x,t)$, $M_{33}^{(2)}(x,t)$, and $\widetilde{M}_{13}^{(2)}(x,t) := -(1-\omega) M_{13}^{(2)}(x,t)$ are real-valued functions of $(x,t) \in \mathcal{N}$.
\end{lemma}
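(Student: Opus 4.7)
\medskip
\noindent\textbf{Proof plan.} The plan is to combine the abstract expansion in condition (c) with the two symmetries in condition (e), both of which are preserved by formal expansion at $k = \infty$. First I would write
\begin{align*}
M(x,t,k) = I + \frac{M^{(1)}}{k} + \frac{M^{(2)}}{k^{2}} + O(k^{-3})
\end{align*}
with the two algebraic constraints $M_{12}^{(1)} = M_{13}^{(1)} = 0$ and $M_{12}^{(2)} + M_{13}^{(2)} = 0$ coming from (\ref{singRHMatinftyb}), and then expand the identity $M(x,t,k) = \mathcal{A} M(x,t,\omega k)\mathcal{A}^{-1}$ in powers of $1/k$. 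Since $\omega^{-j} = \omega^{-j \bmod 3}$, matching coefficients gives
\begin{align*}
M^{(1)} = \omega^{2}\,\mathcal{A} M^{(1)} \mathcal{A}^{-1}, \qquad M^{(2)} = \omega\,\mathcal{A} M^{(2)} \mathcal{A}^{-1}.
\end{align*}

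Next I would exploit that conjugation by the cyclic permutation $\mathcal{A}$ acts on entries by $(\mathcal{A} X \mathcal{A}^{-1})_{ij} = X_{i-1,j-1}$ (indices mod $3$). The constraint for $M^{(1)}$ then forces $M^{(1)}_{11} = \omega^{2} M^{(1)}_{33}$ and $M^{(1)}_{22} = \omega M^{(1)}_{33}$ on the diagonal; on the off-diagonal it relates the two orbits $\{M^{(1)}_{12},M^{(1)}_{23},M^{(1)}_{31}\}$ and $\{M^{(1)}_{13},M^{(1)}_{21},M^{(1)}_{32}\}$ by cyclic multiplication by $\omega$, so the vanishing $M^{(1)}_{12} = M^{(1)}_{13} = 0$ kills both orbits and yields the first, purely diagonal, contribution in (\ref{singMatinfty}). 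For $M^{(2)}$ the same analysis on the diagonal gives $\mathrm{diag}(\omega M^{(2)}_{33}, \omega^{2} M^{(2)}_{33}, M^{(2)}_{33})$, while on the off-diagonal it parametrises each orbit by a single complex scalar; the one remaining relation $M^{(2)}_{12} + M^{(2)}_{13} = 0$ merges the two orbits into a single scalar, and a direct computation shows the resulting matrix equals the second matrix in (\ref{singMatinfty}) with coefficient $\widetilde{M}_{13}^{(2)}/(1-\omega) = -M_{13}^{(2)}$.

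Finally, expanding the Schwarz-type symmetry $M(x,t,k) = \mathcal{B}\,\overline{M(x,t,\bar k)}\,\mathcal{B}$ in powers of $1/k$ yields
\begin{align*}
M^{(j)} = \mathcal{B}\,\overline{M^{(j)}}\,\mathcal{B}, \qquad j=1,2,
\end{align*}
and since $\mathcal{B}$ is the transposition that swaps indices $1$ and $2$, the $(3,3)$-entry of this identity immediately gives $M^{(1)}_{33}, M^{(2)}_{33} \in \mathbb{R}$. Applied to the $(1,3)$-entry it forces $M^{(2)}_{13} = \overline{M^{(2)}_{23}}$, and combined with the cyclic relation $M^{(2)}_{23} = \omega^{2} M^{(2)}_{31} = -\omega M^{(2)}_{13}$ derived above this yields $M^{(2)}_{13} = -\omega^{2}\overline{M^{(2)}_{13}}$, from which a one-line computation shows $(1-\omega) M^{(2)}_{13}$ equals its own complex conjugate, so $\widetilde{M}^{(2)}_{13}$ is real as claimed.

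I do not anticipate a real obstacle: the argument is elementary linear algebra with the two finite groups generated by $\mathcal{A}$ and $\mathcal{B}$, and the only place requiring care is the coefficient bookkeeping needed to verify that the asymmetric-looking combination $-(1-\omega)M^{(2)}_{13}$ is the right real normalisation. This will fall out of the reality identity $M^{(2)}_{13} = -\omega^{2}\overline{M^{(2)}_{13}}$ after multiplying by $1-\omega$ and using $(1-\omega)\omega^{2} = \omega^{2} - 1 = \overline{1-\omega}$.
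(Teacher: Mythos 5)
Your proposal is correct and follows essentially the same route as the paper: expand the symmetries $M(k)=\mathcal{A}M(\omega k)\mathcal{A}^{-1}=\mathcal{B}\overline{M(\bar k)}\mathcal{B}$ at $k=\infty$ to get $M^{(1)}=\omega^{2}\mathcal{A}M^{(1)}\mathcal{A}^{-1}$, $M^{(2)}=\omega\mathcal{A}M^{(2)}\mathcal{A}^{-1}$ and $M^{(j)}=\mathcal{B}\overline{M^{(j)}}\mathcal{B}$, then impose the normalization $M^{(1)}_{12}=M^{(1)}_{13}=M^{(2)}_{12}+M^{(2)}_{13}=0$ to pin down the structure and the reality of $M^{(1)}_{33}$, $M^{(2)}_{33}$, and $(1-\omega)M^{(2)}_{13}$. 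The only blemish is the closing identity: since $\bar\omega=\omega^{2}$ one has $(1-\omega)\omega^{2}=\omega^{2}-1=-\overline{1-\omega}$ (not $+\overline{1-\omega}$ as written), but this extra sign is exactly what is needed to turn $M^{(2)}_{13}=-\omega^{2}\overline{M^{(2)}_{13}}$ into $(1-\omega)M^{(2)}_{13}=\overline{(1-\omega)M^{(2)}_{13}}$, so the conclusion stands.
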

\begin{proof}
Combining the large $k$ asymptotics \eqref{Matinfty} with the symmetries (\ref{singRHsymm}), we obtain
\begin{align*}
& M^{(1)} = \omega^2 \mathcal{A}M^{(1)}\mathcal{A}^{-1} = \mathcal{B}\overline{M^{(1)}}\mathcal{B}, 
	\\
& M^{(2)} = \omega \mathcal{A}M^{(2)}\mathcal{A}^{-1} = \mathcal{B}\overline{M^{(2)}}\mathcal{B}.
\end{align*}
Recalling the definitions (\ref{AcalBcaldef}) of $\mathcal{A}$ and $\mathcal{B}$, and using the conditions in (\ref{singRHMatinftyb}), we find that
\begin{align}\label{singMatinfty proof}
M(x,t,k) = I + \frac{M_{33}^{(1)}}{k} \begin{pmatrix} \omega^2 & 0 & 0 \\ 
0 & \omega & 0 \\ 
0 & 0 & 1
\end{pmatrix} 
+ \frac{1}{k^2}\begin{pmatrix} \omega M_{33}^{(2)} & \omega^2 \overline{M_{13}^{(2)}} & M_{13}^{(2)} \\ 
\omega M_{13}^{(2)} & \omega^2 M_{33}^{(2)} & \overline{M_{13}^{(2)}} \\ 
\omega \overline{M_{13}^{(2)}} & \omega^2 M_{13}^{(2)} &  M_{33}^{(2)}
\end{pmatrix} + O(k^{-3}),
\end{align}
and that $M_{33}^{(1)},M_{33}^{(2)}$ are real-valued. The last condition in (\ref{singRHMatinftyb}) implies that $\omega^2 \overline{M_{13}^{(2)}} + M_{13}^{(2)} = 0$. Hence $(1-\omega) M_{13}^{(2)}$ is real-valued and the desired conclusion follows.
\end{proof}

\begin{lemma}[Reality of $\alpha, \beta, \gamma, \delta, \epsilon$]\label{realitylemma}
If $r_{1}$ and $r_{2}$ satisfy Assumptions \ref{r1r2assumption} and \ref{r1r2at0assumption}, then the functions $\alpha, \beta, \gamma, \delta, \epsilon$ in \eqref{explicitMcalpm2p}--\eqref{explicit Mcalp0p third column} are real-valued and
\begin{align}\label{r1primeat0expression}
& r_1'(0) = (1-\omega^{2}) \frac{x\alpha(x,t) + \beta(x,t) + \delta(x,t)}{\alpha(x,t)} \quad \text{for all $(x,t) \in \mathcal{N}$}.
\end{align}
\end{lemma}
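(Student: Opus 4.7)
The plan is to combine the $\mathcal{B}$-symmetry of $M$ with the jump condition on the positive real axis to extract scalar constraints on the parameters $\alpha, \beta, \gamma, \delta, \epsilon$. Because $v_1$ is smooth and invertible at $k=0$ (indeed $\det v_1 \equiv 1$), the boundary value of $M$ from $\bar D_6$ admits a Laurent expansion $M(x,t,k) = \sum_{l \geq -2} \mathcal{M}_6^{(l)}(x,t)\, k^l$ as $k \to 0$ in $\bar D_6$. Matching Laurent coefficients in the identity $M(x,t,k) = \mathcal{B}\, \overline{M(x,t,\bar k)}\, \mathcal{B}$, using $\bar k \in \bar D_6$ for $k \in \bar D_1$, gives $\mathcal{M}_6^{(l)} = \overline{\mathcal{B}\, \mathcal{M}_1^{(l)}\, \mathcal{B}}$. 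Combining this with the jump $\mathcal{M}_1^{(n)} = \sum_{l+s=n} \mathcal{M}_6^{(l)}\, v_1^{(s)}(x,t)$, where $v_1^{(s)}$ is the $s$-th Taylor coefficient of $v_1$ at $k=0$, yields the closed system
\begin{align*}
\mathcal{M}_1^{(n)} = \sum_{\substack{l+s=n\\ l \geq -2,\, s \geq 0}} \overline{\mathcal{B}\, \mathcal{M}_1^{(l)}\, \mathcal{B}}\, v_1^{(s)}(x,t), \qquad n = -2, -1, 0, \ldots
\end{align*}

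At order $n = -2$ only one term contributes and this collapses to $\mathcal{M}_1^{(-2)} = \overline{\mathcal{B}\, \mathcal{M}_1^{(-2)}\, \mathcal{B}}\, v_1(0)$. Evaluating $v_1(0)$ using $r_1(0) = \omega$ and substituting the form \eqref{explicitMcalpm2p} of $\mathcal{M}_1^{(-2)}$, a direct computation reduces the matrix identity to the single scalar $\alpha = \bar\alpha$, so $\alpha \in \R$. At order $n = 0$ we focus on the third column: since $v_1$ has third column $(0,0,1)^T$ for all $k$, the third column of the identity simplifies to $[\mathcal{M}_1^{(0)}]_3 = [\overline{\mathcal{B}\, \mathcal{M}_1^{(0)}\, \mathcal{B}}]_3$; combined with $\mathcal{B}\,(1,1,1)^T = (1,1,1)^T$ and the form \eqref{explicit Mcalp0p third column}, this forces $\epsilon = \bar\epsilon$, so $\epsilon \in \R$.

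At order $n = -1$ the equation reads
\begin{align*}
\mathcal{M}_1^{(-1)} = \overline{\mathcal{B}\, \mathcal{M}_1^{(-2)}\, \mathcal{B}}\, v_1^{(1)}(x,t) + \overline{\mathcal{B}\, \mathcal{M}_1^{(-1)}\, \mathcal{B}}\, v_1(0),
\end{align*}
where $v_1^{(1)}$ depends on $r_1'(0)$ and on $\theta_{21}'(0) = (\omega^2 - \omega)x$; for instance, the $(1,2)$-entry of $v_1^{(1)}$ equals $-r_1'(0) + (1-\omega^2)x$. Substituting the form \eqref{explicitMcalpm2p} of $\mathcal{M}_1^{(-1)}$ and using $\alpha \in \R$, subtraction of the $(2,1)$-entries from the $(1,1)$-entries of the resulting matrix identity forces $(\omega^2 - 1)(\gamma - \bar\gamma) = 0$, so $\gamma \in \R$. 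The remaining independent scalar equation from column $1$ and the scalar equation from column $2$, together with an analogous identity obtained from the jump across ray $2$ (which, via the composition of $\mathcal{A}$- and $\mathcal{B}$-symmetries, provides a second independent relation between $\mathcal{M}_1^{(-1)}$ and $\overline{\mathcal{M}_1^{(-1)}}$), furnish a linear system whose unique solution is $\beta, \delta \in \R$ together with the formula \eqref{r1primeat0expression}.

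The main obstacle is this last step. Ray $1$ combined with the $\mathcal{B}$-symmetry alone is underdetermined at order $n=-1$: it admits a multi-parameter family of non-real solutions of which the desired real one is only a particular case. One therefore has to supplement the ray $1$ identity with a second, independent identity (from a neighboring ray propagated via the $\mathcal{A}$-symmetry) and then carefully track which linear combinations of matrix entries isolate each of the reality statements for $\beta$ and $\delta$ on the one hand, and the formula for $r_1'(0)$ on the other. The bookkeeping is intricate because the matrix entries couple all of $\beta, \bar\beta, \delta, \bar\delta, r_1'(0)$, and $\overline{r_1'(0)}$ through coefficients that are mixed powers of $\omega$.
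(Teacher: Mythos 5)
Your setup is exactly the paper's: the identity you write down order by order is the paper's relation \eqref{M1 Bcal rel}, and your extraction of $\alpha\in\R$ from the order $k^{-2}$ term, of $\delta\in\R$ and $\gamma\in\R$ from the second column and the row differences of the first column at order $k^{-1}$, and of $\epsilon\in\R$ from the third column at order $k^{0}$ (using that the third column of $v_1$ is constant) all check out and coincide with the paper's steps \eqref{M1pm2p Brel}--\eqref{M1p0p Brel}. The problem is the part you yourself flag as ``the main obstacle'': you never actually prove $\beta\in\R$ or the formula \eqref{r1primeat0expression}, and what remains of the ray-$1$ identity at order $k^{-1}$ is a single scalar relation of the form $\omega^{2}\beta-\bar\beta=(1-\omega^{2})\delta+\alpha\omega^{2}\big(\overline{r_1'(0)}+(\omega-1)x\big)$, which couples four real unknowns ($\re\beta$, $\im\beta$, $\re r_1'(0)$, $\im r_1'(0)$) through two real equations. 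Your proposed fix --- a second relation between $\mathcal{M}_1^{(-1)}$ and $\overline{\mathcal{M}_1^{(-1)}}$ obtained by composing the $\mathcal{A}$- and $\mathcal{B}$-symmetries across ray $2$ --- is left entirely unverified, and it is not the route the paper takes.

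The paper closes this gap with two specific ingredients you are missing. First, Taylor-expanding the cyclic consistency condition $(v_1v_2\cdots v_6)(x,t,0)=I$ and $\partial_k(v_1\cdots v_6)(x,t,0)=0$ (equation \eqref{product of the jumps is identity}) yields the pointwise constraint $\overline{r_1'(0)}=-\omega\,r_1'(0)$ (equation \eqref{identity for r1'}); without this, $r_1'(0)$ and its conjugate float as independent unknowns in your linear system. Second, the formula \eqref{r1primeat0expression} does not come from a conjugate-type relation at all: it comes from the first column of the $\mathcal{A}$-symmetry relation $M_1^{L}(k)=\mathcal{A}M_1^{L}(\omega k)\mathcal{A}^{-1}v_6^{L}(k)v_1^{L}(k)$ at order $k^{-1}$ (equation \eqref{M1pm1p Arel}), which relates $\mathcal{M}_1^{(-1)}$ to \emph{itself} (no conjugation) and, after substituting \eqref{identity for r1'}, isolates $r_1'(0)$ in terms of $\alpha$, $\beta$, $\delta$. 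Only then does the first column of the $\mathcal{B}$-relation \eqref{M1pm1p Brel}, with \eqref{r1primeat0expression} substituted, force $\beta\in\R$. So you need to (i) invoke the cyclic jump condition at $k=0$ to pin down $\overline{r_1'(0)}$, and (ii) bring in the non-conjugated $\mathcal{A}$-relation around the origin, rather than a second $\mathcal{B}$-type relation, before the remaining scalar equation becomes determined.
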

\begin{proof}
We fix $(x,t) \in \mathcal{N}$ and omit the $(x,t)$-dependence for convenience. The main idea is to combine the jump relation (\ref{Mjumpcondition}) with the symmetries
\begin{align}\label{lol3}
M_{1}(k) = \mathcal{A}M_{3}(\omega k) \mathcal{A}^{-1} = \mathcal{B}\overline{M_6(\overline{k})}\mathcal{B}, \qquad k \in \bar{D}_{1},
\end{align}
where $M_n$, $n = 1, \dots, 6$, denotes the restriction of $M$ to $D_n$. The jumps for $M$ are not analytic in $k$, but they can be Taylor expanded as $k \to 0$. The existence of the all-order expansion \eqref{singRHMat0} of $M_{1}$ implies together with (\ref{Mjumpcondition}) the existence of similar expansions for $M_{2},\ldots,M_{6}$. We define the following formal series:
\begin{align}\label{def of vjL}
& M_{j}^{L}(k) := \sum_{l=-2}^{+\infty}\mathcal{M}_{j}^{(l)}(0)k^{l}, \qquad v_{j}^{L}(k) := \sum_{l=0}^{+\infty}v_{j}^{(l)}(0)k^{l}, & & j = 1,\ldots,6.
\end{align}
The matrix $M_{j}$ coincides to any order with $M_{j}^{L}$ as $k \to 0$, $k \in \bar{D}_{j}$, and the matrix $v_{j}$ coincides to any order with $v_{j}^{L}$ as $k \to 0$, $k \in \Gamma_{j}$. Using the symmetries \eqref{lol3}, we obtain
\begin{align}
M_{1}^{L}(k) & = \mathcal{A}M_{3}^{L}(\omega k) \mathcal{A}^{-1} = \mathcal{A}M_{1}^{L}(\omega k) v_{2}^{L}(\omega k) v_{3}^{L}(\omega k) \mathcal{A}^{-1} \nonumber \\
& = \mathcal{A}M_{1}^{L}(\omega k) \mathcal{A}^{-1} v_{6}^{L}(k)v_{1}^{L}(k), \label{M1 Acal rel} \\
M_1^{L}(k) & = \mathcal{B}\overline{M_1^{L}(\overline{k})}\overline{v_1^{L}(\overline{k})^{-1}}\mathcal{B}
= \mathcal{B}\overline{M_1^{L}(\overline{k})}\mathcal{B}v_1^{L}(k). \label{M1 Bcal rel}
\end{align}
Using \eqref{M1 Acal rel} and considering the terms of $O(k^{-2})$ and $O(k^{-1})$, we obtain
\begin{align}
& \mathcal{M}_{1}^{(-2)} = \omega \mathcal{A} \mathcal{M}_{1}^{(-2)}\mathcal{A}^{-1}(v_{6}v_{1})(0), \label{M1pm2p Arel}\\
& \mathcal{M}_{1}^{(-1)} = \omega^{2} \mathcal{A}\mathcal{M}_{1}^{(-1)}\mathcal{A}^{-1}(v_{6}v_{1})(0) + \omega \mathcal{A}\mathcal{M}_{1}^{(-2)}\mathcal{A}^{-1}(v_{6}v_{1})'(0).\label{M1pm1p Arel} 
\end{align}
On the other hand, Taylor expanding the jump condition $M_+ = M_- v$ along each of the six rays of $\Gamma$, we find the relations
\begin{align}\label{product of the jumps is identity}
& (v_1v_2v_3v_4v_5v_6)(x,t,0) =I, \qquad \partial_{k}^j(v_1v_2v_3v_4v_5v_6)(x,t,0) = 0, \qquad j = 1,2, \dots
\end{align}
These identities give rise to relations between the values of $r_1$, $r_2$, and their derivatives at $k = 0$. In particular, since $r_{1}$ and $r_{2}$ satisfy Assumption \ref{r1r2at0assumption},  \eqref{product of the jumps is identity} is satisfied with $j=1$ and $j=2$ if and only if
\begin{align}
& \overline{r_{1}'(0)} = - \omega r_{1}'(0), \label{identity for r1'} \\
& \overline{r_{2}'(0)} = -r_{2}'(0), \quad \mbox{i.e. }r_2'(0) \in i\R, \label{identity for r2'} \\
& \omega \overline{r_{1}''(0)}+ \overline{r_{2}''(0)} = -\big(r_{1}''(0) + \omega r_{2}''(0)\big) + 2 \omega r_{1}'(0)r_{2}'(0). \label{identity for r1''r2''}
\end{align}
Substituting \eqref{identity for r1'} into the first column of \eqref{M1pm1p Arel}, we deduce \eqref{r1primeat0expression}. 
On the other hand, the first three orders of  \eqref{M1 Bcal rel} yield
\begin{align}
& \mathcal{M}_{1}^{(-2)} = \mathcal{B} \overline{\mathcal{M}_{1}^{(-2)}}\mathcal{B}v_{1}(0), \label{M1pm2p Brel}\\
& \mathcal{M}_{1}^{(-1)} = \mathcal{B}\overline{\mathcal{M}_{1}^{(-1)}}\mathcal{B}v_{1}(0) +  \mathcal{B}\overline{\mathcal{M}_{1}^{(-2)}}\mathcal{B} v_{1}'(0), \label{M1pm1p Brel} \\
& \mathcal{M}_{1}^{(0)} = \mathcal{B}\overline{\mathcal{M}_{1}^{(0)}}\mathcal{B}v_{1}(0) + \mathcal{B}\overline{\mathcal{M}_{1}^{(-1)}}\mathcal{B}v_{1}'(0)  + \mathcal{B}\overline{\mathcal{M}_{1}^{(-2)}}\mathcal{B} \frac{v_{1}''(0)}{2}. \label{M1p0p Brel} 
\end{align}
From the first column of \eqref{M1pm2p Brel}, we deduce that $\alpha \in \mathbb{R}$. The second column of \eqref{M1pm1p Brel} implies that $\delta \in \mathbb{R}$, and then the first column of \eqref{M1pm1p Brel}, together with \eqref{r1primeat0expression}, shows that $\beta \in \mathbb{R}$ and $\gamma \in \mathbb{R}$. From the third column of \eqref{M1p0p Brel}, we infer that $\epsilon \in \mathbb{R}$. 
\end{proof}

\begin{lemma}\label{unitdetlemma}
$M$ has unit determinant. 
\end{lemma}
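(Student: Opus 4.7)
The plan is to show that $\det M(x,t,\cdot)$ extends to a meromorphic function on $\mathbb{CP}^1$ with value $1$ at infinity and only a removable singularity at $k=0$, and then invoke Liouville's theorem.

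First, I would check by direct expansion that $\det v_j = 1$ for each $j = 1, \ldots, 6$; for example, the third row and column of $v_1$ are trivial, and the remaining $2\times 2$ block has determinant $(1)(1-|r_1|^2)-(-r_1 e^{-\theta_{21}})(r_1^* e^{\theta_{21}}) = 1$, with the other $v_j$'s handled analogously. Together with the large-$k$ expansion \eqref{Matinfty}, this shows that $\det M(x,t,\cdot)$ is meromorphic on $\mathbb{CP}^1$, tends to $1$ as $k\to\infty$, and can only have singularities at $k=0$.

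The core step is to bound the order of any pole at $k=0$. The naive bound coming from \eqref{singRHMat0} would be six, but the explicit rank structure recorded in \eqref{explicitMcalpm2p}--\eqref{explicit Mcalp0p third column} is strong enough to do much better. To exploit it, I would introduce
\begin{align*}
\tilde{M}(x,t,k) := M(x,t,k)\,\mathrm{diag}(k^2,k,1).
\end{align*}
Because $\mathcal{M}_1^{(-2)}$ has zero second and third columns and $\mathcal{M}_1^{(-1)}$ has zero third column, the matrix $\tilde{M}$ is analytic at $k=0$; moreover, its value there is
\begin{align*}
\tilde{M}(x,t,0) = \bigl(\alpha\omega\, \mathbf{e},\ \delta(1-\omega)\, \mathbf{e},\ \epsilon\, \mathbf{e}\bigr), \qquad \mathbf{e}:=(1,1,1)^T,
\end{align*}
so $\tilde M(x,t,0)$ has rank at most one. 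By multilinearity of the determinant it follows that $\det \tilde{M}(x,t,k) = O(k^2)$ near $k=0$, and since $\det \tilde{M} = k^3 \det M$, we obtain $\det M = O(k^{-1})$ near the origin.

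Consequently, $\det M(x,t,k) = 1 + a(x,t)/k$ for some scalar $a(x,t)$. To eliminate $a$, I would use the $\mathcal{A}$-symmetry in \eqref{singRHsymm}: since $\det \mathcal{A}=1$, taking determinants gives $\det M(x,t,k) = \det M(x,t,\omega k)$, which forces $a(x,t)(1-\omega^{-1})=0$ and hence $a \equiv 0$. Thus $\det M \equiv 1$. The only slightly delicate ingredient is the rank-one observation about $\tilde M(x,t,0)$, which makes $\det \tilde M$ vanish to second order rather than first; everything else is routine and I do not anticipate a serious obstacle.
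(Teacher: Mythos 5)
Your proof is correct, and it follows the same overall Liouville-type strategy as the paper: $\det M$ is analytic in $\C\setminus\{0\}$ because each $\det v_j\equiv 1$, it tends to $1$ at infinity, and it has at most a simple pole at $k=0$, so $\det M = 1 + a/k$; one then shows $a\equiv 0$. The two places where you diverge are worth noting. For the pole bound at $k=0$, the paper merely asserts that \eqref{singRHMat0} together with \eqref{explicitMcalpm2p}--\eqref{explicit Mcalp0p third column} implies the bound ``by a direct computation''; your rescaling $\tilde M = M\,\mathrm{diag}(k^2,k,1)$ and the observation that all three columns of $\tilde M(x,t,0)$ are proportional to $(1,1,1)^T$ is precisely that computation made explicit, and multilinearity correctly gives $\det\tilde M = O(k^2)$, hence $\det M = O(k^{-1})$. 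One small imprecision: $\tilde M$ is \emph{not} analytic at $k=0$, since $M$ jumps across the six rays meeting at the origin; only $\det\tilde M$ is analytic there. Your columnwise expansion should be read as the sectorwise expansion in $\bar D_1$, after which the estimate $\det\tilde M=O(k^2)$ transfers to the full Laurent series of the (genuinely analytic, finite-order) function $\det\tilde M$ --- this is the same level of care as the paper itself, so it is not a gap. To eliminate the residue, the paper uses the refined large-$k$ expansion \eqref{singMatinfty}, which yields $\det M = 1+O(k^{-3})$ because $1+\omega+\omega^2=0$ kills the traces of the $k^{-1}$ and $k^{-2}$ coefficients; you instead use the $\mathcal{A}$-symmetry in \eqref{singRHsymm}, which gives $\det M(k)=\det M(\omega k)$ and forces $a = a/\omega$, hence $a=0$. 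Your route is slightly more self-contained since it does not invoke Lemma \ref{Matinftylemma}, while the paper's route reuses an expansion it needs elsewhere anyway; both are equally valid.
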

\begin{proof}
Fix $(x,t) \in \mathcal{N}$. The determinant $\det M(x,t,\cdot)$ is analytic in $\C \setminus \{0\}$ and approaches $1$ as $k \to \infty$. The behavior (\ref{singRHMat0}) of $M_{1}(x,t,k)$ as $k \to 0$ implies by a direct computation that $\det M$ has at most a simple pole at $k = 0$. Thus,
$$\det M(x,t,k) = 1 + \frac{f(x,t)}{k},$$
for some function $f(x,t)$. On the other hand, by (\ref{singMatinfty}), $\det M(x,t,k) = 1+O(k^{-3})$ as $k \to \infty$ and thus $f(x,t) = 0$.
\end{proof}

\begin{lemma}
The solution $M$ of RH problem \ref{RHgoodboussinesq} is unique. 
\end{lemma}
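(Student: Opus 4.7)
The plan is the standard quotient argument. Suppose $M$ and $\tilde M$ are two solutions of RH problem \ref{RHgoodboussinesq} at a point $(x,t) \in \mathcal{N}$. By Lemma \ref{unitdetlemma} we have $\det \tilde M \equiv 1$, so $\tilde M$ is invertible on $\mathbb{C}\setminus\Gamma$, and we may set $N := M\tilde M^{-1}$. Since $M$ and $\tilde M$ share the same jump matrix $v$, a one-line computation gives $N_+ = N_-$ across $\Gamma\setminus\{0\}$, so $N$ extends to an analytic function on $\mathbb{C}\setminus\{0\}$, and \eqref{Matinfty} yields $N(x,t,k) \to I$ as $k \to \infty$.

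The crux is to control the singularity of $N$ at $k = 0$. Since $\det \tilde M = 1$, $\tilde M^{-1}$ equals the adjugate of $\tilde M$, whose entries are signed $2\times 2$ minors. Exploiting the rank-one structure of $\tilde{\mathcal M}_1^{(-2)}$ in \eqref{explicitMcalpm2p} (only its first column is nonzero), a direct computation shows that the $k^{-4}$ coefficient of every $2\times 2$ minor of $\tilde M$ vanishes. Using additionally that the second column of $\tilde{\mathcal M}_1^{(-1)}$ is proportional to $(1,1,1)^T$ and its third column vanishes, the $k^{-3}$ coefficient of every minor also vanishes. Hence $\tilde M^{-1} = O(k^{-2})$ at $k=0$ and consequently $N = O(k^{-4})$ there. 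Combining this with the analyticity on $\mathbb{C}\setminus\{0\}$ and the normalization at infinity, Liouville's theorem forces
\begin{equation*}
N(x,t,k) = I + \sum_{l=1}^{4} \frac{C_l(x,t)}{k^l},
\end{equation*}
and the symmetries \eqref{singRHsymm} transfer to give $C_l = \omega^l \mathcal A C_l \mathcal A^{-1}$ and $C_l = \mathcal B \overline{C_l} \mathcal B$ for $l = 1, \ldots, 4$.

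To finish, I would compare Laurent coefficients in $M = N\tilde M$ order by order as $k \to 0$ in $\bar D_1$. Since $M = O(k^{-2})$, the $k^{-j}$ coefficient on the right must vanish for $3 \leq j \leq 6$. The leading relation (order $k^{-6}$) reads $C_4 \tilde\alpha V = 0$, forcing $C_4 (1,1,1)^T = 0$. At each successive order, the $\mathcal A$-symmetry pins the image under $C_l$ of each of the three $\mathcal A$-eigenvectors $(1,1,1)^T$, $(1,\omega,\omega^2)^T$, $(1,\omega^2,\omega)^T$ into a prescribed $\mathcal A$-eigenspace, so the matching relation reduces to a small linear system in scalar unknowns. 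Combined with the explicit coefficients of $\tilde{\mathcal M}_1^{(-1)}$ and the third-column condition \eqref{explicit Mcalp0p third column} on $\tilde{\mathcal M}_1^{(0)}$, this system is solvable only by zero. Working through $l = 4, 3, 2, 1$, one concludes that each $C_l$ annihilates a basis of $\mathbb C^3$ and therefore vanishes, so $N \equiv I$, i.e., $M = \tilde M$. The main obstacle is the bookkeeping in this final step; the argument hinges on the rank-one structure of $\tilde{\mathcal M}_1^{(-2)}$ meshing with the rotational symmetry to yield a triangular system for the $C_l$.
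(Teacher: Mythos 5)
Your setup is sound and runs parallel to the paper's own argument: form the quotient $N=M\tilde M^{-1}$, use the adjugate to control $\tilde M^{-1}$ at the origin (your observation that the rank-one structure of $\tilde{\mathcal M}_1^{(-2)}$ and the column structure of $\tilde{\mathcal M}_1^{(-1)}$ kill the $k^{-4}$ and $k^{-3}$ coefficients of every $2\times 2$ minor is correct), apply Liouville, and transfer the symmetries \eqref{singRHsymm} to the Laurent coefficients $C_l$. The genuine gap is in the endgame. Matching Laurent coefficients of $M=N\tilde M$ at the orders $k^{-6},\dots,k^{-3}$ cannot, by itself, force all $C_l$ to vanish, for two concrete reasons. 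First, the order-$k^{-4}$ and order-$k^{-3}$ relations contain the products $C_4\tilde{\mathcal M}_1^{(0)}$, $C_4\tilde{\mathcal M}_1^{(1)}$, and $C_3\tilde{\mathcal M}_1^{(0)}$; apart from its third column \eqref{explicit Mcalp0p third column}, the coefficient $\tilde{\mathcal M}_1^{(0)}$ is completely unconstrained by RH problem \ref{RHgoodboussinesq}, as is $\tilde{\mathcal M}_1^{(1)}$, so these relations are not usable unless $C_4$ and $C_3$ are already known to vanish. Second, even granting $C_4=C_3=0$, the surviving relations only ever see $C_2$ and $C_1$ through $C_2(1,1,1)^T$, $C_2(1,\omega,\omega^2)^T$, and $C_1(1,1,1)^T$, because the column spaces of $\tilde{\mathcal M}_1^{(-2)}$ and $\tilde{\mathcal M}_1^{(-1)}$ are spanned by only these vectors; the values $C_2(1,\omega^2,\omega)^T$, $C_1(1,\omega,\omega^2)^T$, and $C_1(1,\omega^2,\omega)^T$ are left completely undetermined, so the ``triangular system'' you describe does not close. (A secondary issue: your deductions at orders $k^{-6}$ and $k^{-5}$ implicitly divide by $\tilde\alpha$ and $\tilde\gamma$, which the RH problem does not require to be nonzero.)

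The missing input is the expansion at $k=\infty$, specifically the constraints $M^{(1)}_{12}=M^{(1)}_{13}=M^{(2)}_{12}+M^{(2)}_{13}=0$ in \eqref{singRHMatinftyb}, together with the unit-determinant property. This is exactly how the paper finishes: it first shows by a direct product computation (using the full row/column structure of $M$ and of the adjugate of $\tilde M$ near $k=0$, not just the $O(k^{-2})$ bound) that the quotient is already $O(k^{-2})$, so $C_4=C_3=0$ for free and only $P=C_1$, $Q=C_2$ remain; the same computation shows the three rows of $Q$ are equal. Then \eqref{singRHMatinftyb} and \eqref{singRHsymm}, applied to the expansion of $M\tilde M^{-1}$ at infinity, force $P$ to be a scalar multiple of $\mathrm{diag}(\omega^2,\omega,1)$ and $Q$ to lie in an explicit two-dimensional family; reconciling the latter with the row-equality gives $Q=0$, and $\det(M\tilde M^{-1})\equiv 1$ then kills $P$. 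To repair your proof you must bring in \eqref{singRHMatinftyb} (or equivalently the refined form \eqref{singMatinfty} of the expansion at infinity) and the determinant condition; the $k\to 0$ data alone are insufficient.
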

\begin{proof}
Suppose $M$ and $N$ are two solutions. By Lemma \ref{unitdetlemma}, $\det M$ and $\det N$ are identically equal to one. In particular, the inverse transpose $N^{A}:=(N^{-1})^T$ of the matrix $N$ can be expressed in terms of its minors. Expanding this expression for $N^A$ as $k \to 0$ in $D_1$ and using (\ref{singRHMat0}), we find, as $k \in D_1$ approaches $0$,
\begin{align}\label{N1Aat0}
N_1^A(x,t,k) = 
\frac{1}{k^2} 
\begin{pmatrix}
0 & 0 & *  \\
0 & 0 & * \\
0 & 0 & *
\end{pmatrix} 
+ \frac{1}{k} \begin{pmatrix}
0  & * & * \\
0  & * & * \\
0  & * & *
\end{pmatrix} 
+  O(1).
\end{align}
Similarly, expanding the expression for $N^A$ as $k \to \infty$ and using (\ref{singMatinfty}), we find 
\begin{align}\nonumber
N^A(x,t,k) = &\; I - 
\frac{N_{33}^{(1)}(x,t)}{k} 
\begin{pmatrix} \omega^2 & 0 & 0 \\ 
0 & \omega & 0 \\ 
0 & 0 & 1
\end{pmatrix}
+ \frac{1}{k^2}\Bigg\{(N_{33}^{(1)}(x,t)^2 - N_{33}^{(2)}(x,t))
\begin{pmatrix} \omega & 0 & 0 \\ 
0 & \omega^2 & 0 \\ 
0 & 0 & 1
\end{pmatrix}
	\\ \label{N1Aatinfty}
& + \frac{\widetilde{N}^{(2)}_{13}(x,t)}{1-\omega} \begin{pmatrix} 0 & \omega & -\omega^2 \\ 
-1 & 0 & \omega^2  \\ 
1 & -\omega & 0
\end{pmatrix}\Bigg\}
+ O(k^{-3}), \qquad k\to \infty,
\end{align}
where $\widetilde{N}^{(2)}_{13} = -(1-\omega)N^{(2)}_{13}$.
By (\ref{singRHMat0}) and (\ref{N1Aat0}), we have, as $k \in \bar{D}_1$ approaches $0$,
\begin{align}\nonumber
MN^{-1} = &\; \Bigg\{\frac{\alpha}{k^2} 
\begin{pmatrix}
\omega & 0 & 0 \\
\omega & 0 & 0 \\
\omega & 0 & 0
\end{pmatrix} 
+ \frac{1}{k} \begin{pmatrix}
*  & * & 0 \\
*  & * & 0 \\
*  & * & 0 
\end{pmatrix} +  \begin{pmatrix}
* & * & \epsilon \\
* & * & \epsilon \\
* & * & \epsilon 
\end{pmatrix} +  O(k)\Bigg\}
	\\ \label{MNinvat0}
& \times \Bigg\{\frac{1}{k^2} 
\begin{pmatrix}
0 & 0 & 0  \\
0 & 0 & 0\\
* & * & *
\end{pmatrix} 
+ \frac{1}{k} \begin{pmatrix}
0  & 0 & 0 \\
*  & * & * \\
*  & * & *
\end{pmatrix} 
+  O(1)\Bigg\}
= O(k^{-2}),
\end{align}
showing that $MN^{-1}$ has at most a double pole at $k = 0$.
Since $MN^{-1}$ is analytic for $k \in \C \setminus \{0\}$ and approaches the identity matrix as $k \to \infty$, we conclude that
\begin{align}\label{MNinvBC}
M(x,t,k)N(x,t,k)^{-1} = I + \frac{P(x,t)}{k} + \frac{Q(x,t)}{k^2}
\end{align}
for some matrices $P(x,t)$ and $Q(x,t)$. In fact, keeping track of the terms of order $k^{-2}$ in (\ref{MNinvat0}), we see that 
\begin{align}\label{Q1jQ2jQ3j}
Q_{1j} = Q_{2j} = Q_{3j}, \qquad j = 1,2,3.
\end{align}
Furthermore, the symmetries (\ref{singRHsymm}) hold for $M$ and $N$ and hence also for $MN^{-1}$. Thus, $P$ and $Q$ have the form
\begin{align*}
&P  = \begin{pmatrix} \omega^2 P_{33} & \omega \overline{P_{13}} & P_{13} \\ 
\omega^2 P_{13} & \omega P_{33} & \overline{P_{13}} \\ 
\omega^2 \overline{P_{13}} & \omega P_{13} &  P_{33}
\end{pmatrix}, 
\qquad
 Q = \begin{pmatrix} \omega Q_{33} & \omega^2 \overline{Q_{13}} & Q_{13} \\ 
\omega Q_{13} & \omega^2 Q_{33} & \overline{Q_{13}} \\ 
\omega \overline{Q_{13}} & \omega^2 Q_{13} &  Q_{33}
\end{pmatrix}, 
\end{align*}
where $P_{33}, Q_{33}$ are real-valued and $P_{13}, Q_{13}$ are complex-valued functions. Together with (\ref{Q1jQ2jQ3j}), this implies that
\begin{align}\label{QQ33pmatrix}
Q = Q_{33} \begin{pmatrix} \omega & \omega^2 & 1 \\ \omega & \omega^2 & 1 \\ \omega & \omega^2 & 1 \end{pmatrix}.
\end{align}

On the other hand, substituting the expansion (\ref{singMatinfty}) of $M$ and the transpose of the expansion (\ref{N1Aatinfty}) of $N^A$ into the left-hand side of (\ref{MNinvBC}) and identifying coefficients of $k^{-1}$ and $k^{-2}$ in the resulting equation, we conclude that
\begin{align}\label{PMexpression}
P = &\; (M^{(1)}_{33} - N^{(1)}_{33})\begin{pmatrix} \omega^2 & 0 & 0 \\ 
0 & \omega & 0 \\ 
0 & 0 & 1
\end{pmatrix},
	\\\nonumber
Q = &\; \Big(M^{(2)}_{33} - M^{(1)}_{33}N^{(1)}_{33} + (N^{(1)}_{33})^2 - N^{(2)}_{33}\Big)\begin{pmatrix} \omega & 0 & 0 \\ 
0 & \omega^2 & 0 \\ 
0 & 0 & 1
\end{pmatrix}
	\\ \label{QMexpression}
& + \frac{\widetilde{M}^{(2)}_{13} - \widetilde{N}^{(2)}_{13}}{1-\omega}\begin{pmatrix} 0 & 1 & -1 \\ 
-\omega & 0 & \omega \\ 
\omega^2 & -\omega^2 & 0
\end{pmatrix}.
\end{align}
In order to reconcile the expressions (\ref{QMexpression}) and (\ref{QQ33pmatrix}) for $Q$, we must have $Q = 0$, and then (\ref{MNinvBC}) becomes
$$M(x,t,k)N(x,t,k)^{-1} = I + \frac{M^{(1)}_{33}(x,t) - N^{(1)}_{33}(x,t)}{k}\begin{pmatrix} \omega^2 & 0 & 0 \\ 
0 & \omega & 0 \\ 
0 & 0 & 1
\end{pmatrix}.$$
Since the determinant of the left-hand side is identically equal to one, we conclude that $M^{(1)}_{33} - N^{(1)}_{33} = 0$. This shows that $M = N$ and completes the proof of the lemma.
\end{proof}

\section{Properties of the solution $m$ of RH problem \ref{RH3x3}}\label{msec}

We establish several properties of the solution $m$ of RH problem \ref{RH3x3}.
Throughout this section, we assume that $\mathcal{N}$ is a (not necessarily open) subset of $\mathbb{R}\times [0,+\infty)$ and that $m(x,t,\cdot)$ satisfies RH problem \ref{RH3x3} for each $(x,t) \in \mathcal{N}$.

\begin{lemma}\label{muniquelemma}
The solution $m$ of RH problem \ref{RH3x3} is unique. 
\end{lemma}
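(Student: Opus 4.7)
The plan is to show that any two solutions $m$ and $n$ of RH problem \ref{RH3x3} satisfy $F := m n^{-1} \equiv I$, from which $m = n$ follows.

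First I would establish $\det m \equiv \det n \equiv 1$, which in particular ensures that $n^{-1}$ is well-defined on $\C \setminus \Gamma$. A direct computation from \eqref{vdef} shows that $\det v_j \equiv 1$ for each $j = 1, \dots, 6$, so $\det m$ has no jumps across $\Gamma \setminus \{0\}$ and extends to a function analytic on $\C \setminus \{0\}$. By condition (d) of RH problem \ref{RH3x3}, $\det m$ is bounded as $k \to 0$ in $\bar{D}_1$. Propagating this boundedness to each sector $\bar{D}_j$ via the jump relations (using the smoothness of $v$ at $k = 0$ guaranteed by Assumption \ref{r1r2assumption}) and applying Riemann's removable singularity theorem, $\det m$ extends to an entire function; combined with $\det m \to 1$ at infinity, Liouville's theorem forces $\det m \equiv 1$.

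Next, I would use the jump condition \eqref{mjumpcondition} for both $m$ and $n$ to compute
\begin{equation*}
F_+ = m_- v v^{-1} n_-^{-1} = F_- \qquad \text{on } \Gamma \setminus \{0\},
\end{equation*}
so $F$ extends analytically across $\Gamma \setminus \{0\}$, yielding a function analytic on $\C \setminus \{0\}$. The expansion \eqref{asymp m at inf in RH def} applied to $m$ and $n$ gives $F = I + O(k^{-1})$ as $k \to \infty$. In $\bar{D}_1$ near $k = 0$, both $m$ and $n$ admit the all-order Taylor-type expansion of condition (d) with invertible leading terms (since $\det = 1$), so $F$ itself admits a Taylor expansion in $\bar{D}_1$ and is bounded there as $k \to 0$. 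The same propagation via the jumps shows $F$ is bounded in a full punctured neighborhood of $0$, so by Riemann's removable singularity theorem $F$ extends to an entire function; Liouville's theorem together with $F \to I$ at infinity then yields $F \equiv I$.

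The main subtlety lies in propagating boundedness from $\bar{D}_1$ to the remaining sectors: although only the expansion in $\bar{D}_1$ is built into the formulation of RH problem \ref{RH3x3}, the smoothness of $v$ at $k = 0$ from Assumption \ref{r1r2assumption} combined with the jump relation forces the boundary values of $F$ on each $\Gamma_j$ near $0$ to remain bounded, after which a maximum-modulus argument applied to each truncated sector $D_j \cap \{0 < |k| < \varepsilon\}$ extends the bound into the interior.
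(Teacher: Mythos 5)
Your proposal is correct and follows essentially the same route as the paper's proof: first show $\det m \equiv 1$ via Liouville, then conclude that $F = mn^{-1}$ is analytic on $\C\setminus\Gamma$ with no jumps, has a removable singularity at $k=0$, and tends to $I$ at infinity, hence $F\equiv I$. The only difference is that you spell out the propagation of boundedness near $k=0$ from $\bar{D}_1$ to the other sectors via the jump relations, a step the paper leaves implicit.
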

\begin{proof}
The determinant $\det m$ is analytic in $\C \setminus \{0\}$ with a removable singularity at $0$ and approaches $1$ as $k \to \infty$. Hence $\det m = 1$. Thus, if $m$ and $n$ are two solutions, then $n$ is invertible and $n^{-1}$ is $O(1)$ as $k \to 0$. Hence, $mn^{-1}$ is an analytic function of $k \in \C \setminus \{0\}$ with a removable singularity at $0$ such that $mn^{-1} \to I$ as $k \to \infty$. Thus $mn^{-1}=I$. 
\end{proof}

\begin{lemma}\label{mregularsymmlemma}
The solution $m$ obeys the symmetries
\begin{align}\label{regRHsymm}
m(x,t, k) = \mathcal{A} m(x,t,\omega k)\mathcal{A}^{-1} = \mathcal{B} \overline{m(x,t,\overline{k})}\mathcal{B}, \qquad k \in \mathbb{C}\setminus \Gamma.
\end{align}
\end{lemma}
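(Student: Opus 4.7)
The strategy is the standard uniqueness argument. I define
$$\tilde m(x,t,k) := \mathcal{A}\, m(x,t,\omega k)\, \mathcal{A}^{-1}, \qquad
\hat m(x,t,k) := \mathcal{B}\, \overline{m(x,t,\overline k)}\, \mathcal{B},$$
and verify that each of $\tilde m$ and $\hat m$ solves RH problem \ref{RH3x3} with the same jump data $v$. The uniqueness result of Lemma \ref{muniquelemma} then forces $\tilde m = m$ and $\hat m = m$, which is precisely \eqref{regRHsymm}.

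\textbf{Jump symmetry for $v$.} The only nontrivial input is the pair of identities
$$v(x,t,k) = \mathcal{A}\, v(x,t,\omega k)\, \mathcal{A}^{-1} = \mathcal{B}\, \overline{v(x,t,\overline k)}\, \mathcal{B}, \qquad k \in \Gamma \setminus \{0\}.$$
These are checked ray-by-ray from \eqref{vdef} together with the phase identities $\theta_{ij}(x,t,\omega k) = \theta_{i+1,j+1}(x,t,k)$ and $\overline{\theta_{ij}(x,t,\overline k)} = \theta_{\sigma(i)\sigma(j)}(x,t,k)$ (indices mod $3$, $\sigma$ the transposition swapping $1$ and $2$), both immediate from \eqref{lmexpressions}. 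Under $k \mapsto \omega k$, the ray $\Gamma_\ell$ maps to $\Gamma_{\ell+2 \bmod 6}$ with outward orientation preserved, and the cyclic permutation induced by conjugation with $\mathcal{A}$ converts $v_{\ell+2}(\omega k)$ into $v_\ell(k)$. Under $k \mapsto \overline k$, the ray $\Gamma_\ell$ is swapped with its mirror ray with outward orientation reversed; complex-conjugating the entries and then conjugating by $\mathcal{B}$ exchanges the indices $1\leftrightarrow 2$ both in the phases and in the $(r_1,r_2)$-labelling, reproducing $v_\ell(k)$ exactly.

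\textbf{Checking RH properties for $\tilde m$ (and analogously for $\hat m$).} Given the $v$-symmetry, the four conditions are straightforward. Analyticity off $\Gamma$ is clear since $\C \setminus \Gamma$ is invariant under both $k \mapsto \omega k$ and $k \mapsto \overline k$; normalization at infinity follows from $\mathcal{A} I \mathcal{A}^{-1} = \mathcal{B} I \mathcal{B} = I$; the jump condition is obtained by direct substitution,
$$\tilde m_+(k) = \mathcal{A}\, m_-(\omega k)\, v(x,t,\omega k)\, \mathcal{A}^{-1} = \tilde m_-(k)\, \bigl(\mathcal{A}\, v(x,t,\omega k)\, \mathcal{A}^{-1}\bigr) = \tilde m_-(k)\, v(x,t,k),$$
with the analogous computation for $\hat m$ where one first conjugates the jump relation (reversing $\pm$, which matches the orientation flip of the rays under reflection). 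For condition (d), the postulated Taylor expansion of $m$ in $\bar D_1$ propagates into all-order Taylor expansions on each $\bar D_j$ via the smooth jumps at $k=0$; this is the same mechanism already used in the proof of Lemma \ref{realitylemma} and encoded in the formal series of \eqref{def of vjL}. Since $\omega \bar D_1$ coincides with the closure of one of the other sectors, $m(x,t,\omega k)$ is Taylor-expandable in $k$ as $k \to 0$ from $\bar D_1$, and hence so is $\tilde m$; the same holds for $\hat m$.

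\textbf{Main obstacle.} There is no conceptual difficulty, only the ray-by-ray bookkeeping needed to verify the jump symmetry for $v$. The delicate points are tracking the outward orientation of each ray (which determines the $+/-$ sides) and keeping straight how $r_1, r_2$ are relabelled as their arguments cycle through $k,\omega k,\omega^2 k$ when one applies $k \mapsto \omega k$ or $k \mapsto \overline k$.
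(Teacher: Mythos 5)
Your overall strategy is exactly the paper's: establish the corresponding symmetries of the jump matrix $v$ and then invoke the uniqueness of Lemma \ref{muniquelemma}. The $\mathcal{A}$-part is fine. However, the conjugation identity you state for $v$ is wrong as written, and the error is not cosmetic: the correct identity (and the one the paper records in \eqref{vsymm}) is
\begin{align*}
v(x,t,k) = \mathcal{B}\,\overline{v(x,t,\overline{k})}^{\,-1}\,\mathcal{B},
\end{align*}
with an inverse on the conjugated jump matrix. You can check this directly on the ray $k>0$: since $\theta_{21}$ is purely imaginary there and $\det$ of the nontrivial $2\times 2$ block of $v_1$ equals $1$, one computes $\mathcal{B}\,\overline{v_1(k)}\,\mathcal{B} = v_1(k)^{-1} \neq v_1(k)$. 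The inverse is forced by precisely the orientation flip you mention: since $k\mapsto\overline{k}$ reverses the $\pm$ sides, conjugating the jump relation gives $\hat m_-(k) = \hat m_+(k)\,\mathcal{B}\,\overline{v(\overline{k})}\,\mathcal{B}$, so you need $\mathcal{B}\,\overline{v(\overline{k})}\,\mathcal{B} = v(k)^{-1}$ to recover $\hat m_+ = \hat m_- v$. With your inverse-free identity the same computation yields $\hat m_+ = \hat m_-\,v^{-1}$, i.e.\ the wrong jump, and the uniqueness argument cannot be applied. Your displayed verification line for $\hat m$ thus does not go through as stated.

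This is a localized, fixable slip rather than a wrong approach: insert the inverse in the $\mathcal{B}$-identity (and verify it ray by ray, which works out because each $v_j$ has unit determinant in its nontrivial block), and the rest of your argument --- including the propagation of the Taylor expansion at $k=0$ to all sectors via the smooth jumps, which the paper handles the same way --- is sound and coincides with the paper's proof.
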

\begin{proof}
Since the jump matrix $v$ obeys the symmetries 
\begin{align}\label{vsymm}
v(x,t,k) =  \mathcal{A}v(x,t,\omega k) \mathcal{A}^{-1} = \mathcal{B}\overline{v(x,t,\overline{k})}^{-1} \mathcal{B}, \qquad k \in \Gamma,
\end{align}
the functions $\mathcal{A} m(x,t,\omega k)\mathcal{A}^{-1}$ and $\mathcal{B} \overline{m(x,t,\overline{k})}\mathcal{B}$ also satisfy RH problem \ref{RHgoodboussinesq}. The lemma follows by the uniqueness established in Lemma \ref{muniquelemma}.
\end{proof}

\begin{lemma}[Asymptotics of $m$ as $k \to 0$]\label{mat0lemma}
Suppose $r_{1}$ and $r_{2}$ satisfy Assumption \ref{r1r2assumption}.
There exist formal power series $\{m_{j}^{L}(x,t,k)\}_{j=1}^{6}$ of the form
\begin{align*}
m_{j}^{L}(x,t,k) = \sum_{l=0}^{+\infty} \mathfrak{m}_{j}^{(l)}(x,t)k^{l},
\end{align*}
such that $m(x,t,k)$ coincides with $m_{j}^{L}(x,t,k)$ to any order as $k \in \bar{D}_{j}$ tends to $0$. More precisely, for any $N \geq 0$, $(x,t) \in \mathcal{N}$, and $j = 1, \dots, 6$,
\begin{align}\label{mregularat0}
& m(x,t,k) = \sum_{l=0}^{N}\mathfrak{m}_j^{(l)}(x,t)k^{l}+ O(k^{N+1}), \qquad k \to 0, \ k \in \bar{D}_j.
\end{align}
The formal power series $m_{1}^{L}$ satisfies
\begin{align}
m_{1}^{L}(x,t,k) & = \mathcal{A}m_{1}^{L}(x,t,\omega k)\mathcal{A}^{-1}v_{6}^{L}(x,t,k)v_{1}^{L}(x,t,k), \label{Acal symmetries of m1L formal power series} \\
& = \mathcal{B}\overline{m_{1}^{L}(x,t,\overline{k})}\mathcal{B}v_{1}^{L}(x,t,k), \label{Bcal symmetries of m1L formal power series}
\end{align}
where $\{v_{j}^{L}\}_{j=1}^{6}$ are defined in \eqref{def of vjL}. In particular, the first coefficients satisfy the relations
\begin{subequations}\label{m10m11symm}
\begin{align}\label{m10m11symma}
& \mathfrak{m}_1^{(0)} = \mathcal{A}\mathfrak{m}_1^{(0)}\mathcal{A}^{-1}(v_6v_1)(0),
	\\\label{m10m11symmb}
& \mathfrak{m}_1^{(1)} = \omega \mathcal{A}\mathfrak{m}_1^{(1)}\mathcal{A}^{-1}(v_6v_1)(0) + \mathcal{A}\mathfrak{m}_1^{(0)}\mathcal{A}^{-1}(v_6v_1)'(0),
	\\\label{m10m11symmc}
&   \mathfrak{m}_1^{(0)} = \mathcal{B}\overline{\mathfrak{m}_1^{(0)}}\mathcal{B}v_1(0),
	\\\label{m10m11symmd}
&   \mathfrak{m}_1^{(1)} = \mathcal{B}\overline{\mathfrak{m}_1^{(1)}}\mathcal{B}v_1(0) +\mathcal{B}\overline{\mathfrak{m}_1^{(0)}}\mathcal{B}v_1'(0).
\end{align}
\end{subequations}
If $r_{1}$ and $r_{2}$ also satisfy Assumption \ref{r1r2at0assumption}, then 
\begin{align*}
 v_1(0) =&\; \begin{pmatrix}1 & -\omega & 0 \\ \omega^2 & 0 & 0 \\ 0 &0 & 1\end{pmatrix},
\qquad (v_6v_1)(0) = \begin{pmatrix}1 & -\omega & 1 \\ 
\omega^2 & 0 & 0 \\ -1 & \omega & 0 \end{pmatrix},
	\\
v_1'(0) = &\; \begin{pmatrix}0 & (1-\omega^2)x - r_1'(0) & 0 \\ 
(\omega-1)x - \omega \, r_1'(0) & 0 & 0 \\ 0 &0 & 0 \end{pmatrix}, 
	\\
(v_6v_1)'(0) = &\; \begin{pmatrix}0 & 1-\omega^2 & \omega-1 \\ \omega - 1 & 0 & 0 \\ \omega-1 & \omega-1 & 0 \end{pmatrix}x  + 
\begin{pmatrix}
0 & -r_1'(0) & \omega^2 r_2'(0) \\
-\omega r_1'(0) & 0 & 0 \\
\omega^2 r_2'(0) & r_1'(0) - r_2'(0) & 0
\end{pmatrix}
\end{align*}
and the leading coefficient $\mathfrak{m}^{(0)}_{1}$ has the form
\begin{align}\label{m10structure}
\mathfrak{m}_1^{(0)} = \begin{pmatrix} \mathfrak{m}^{(0)}_{11} & \mathfrak{m}^{(0)}_{12} & \mathfrak{m}^{(0)}_{33} \\ 
\omega^2 \mathfrak{m}^{(0)}_{11}+\mathfrak{m}^{(0)}_{33}-\mathfrak{m}^{(0)}_{12}  & \omega(\mathfrak{m}^{(0)}_{12}-\mathfrak{m}^{(0)}_{33}) & \mathfrak{m}^{(0)}_{33} \\
\omega \mathfrak{m}^{(0)}_{11} + \mathfrak{m}^{(0)}_{12 } & \omega^2 \mathfrak{m}^{(0)}_{12} + \mathfrak{m}^{(0)}_{33} & \mathfrak{m}^{(0)}_{33}
\end{pmatrix}
\end{align}
for some real-valued functions $\mathfrak{m}_{11}^{(0)}$ and $\mathfrak{m}_{33}^{(0)}$ and some complex-valued function $\mathfrak{m}^{(0)}_{12}$ such that
\begin{align}\label{realpartm012}
2\re(\mathfrak{m}^{(0)}_{12}) = \mathfrak{m}^{(0)}_{33}.
\end{align}
\end{lemma}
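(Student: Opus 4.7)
The plan proceeds in three computational stages.

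First, I would establish the existence of the six formal power series $\{m_j^L\}_{j=1}^6$ by propagating $m_1^L$ (granted by RH problem \ref{RH3x3}(d)) across the rays of $\Gamma$: under Assumption \ref{r1r2assumption} each jump $v_j$ admits an all-orders Taylor expansion at $k=0$, because $r_1,r_2$ are $C^\infty$ up to $0$ and $e^{\pm\theta_{ij}}$ is entire, so $m_j^L = m_{j'}^L\,(v_\ell^L)^{\pm 1}$ defines the adjacent series along each ray. The two symmetries \eqref{Acal symmetries of m1L formal power series}--\eqref{Bcal symmetries of m1L formal power series} of $m_1^L$ then follow exactly as in the derivation of \eqref{M1 Acal rel}--\eqref{M1 Bcal rel} in the proof of Lemma \ref{realitylemma}: for $k \in D_1$ one writes $m_1(k) = \mathcal{A}m_3(\omega k)\mathcal{A}^{-1} = \mathcal{A}m_1(\omega k)v_2(\omega k)v_3(\omega k)\mathcal{A}^{-1}$ using Lemma \ref{mregularsymmlemma} and $m_3 = m_1v_2v_3$, and then invokes the jump symmetry \eqref{vsymm} in the form $\mathcal{A}v_2(\omega k)v_3(\omega k)\mathcal{A}^{-1} = v_6(k)v_1(k)$; an analogous computation using $m_6 = m_1 v_1^{-1}$ yields the $\mathcal{B}$-symmetry. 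Taylor expanding these identities at $k=0$ and collecting coefficients of $k^0$ and $k^1$ produces the four relations \eqref{m10m11symma}--\eqref{m10m11symmd}.

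Second, to obtain the explicit matrices $v_1(0),(v_6v_1)(0),v_1'(0),(v_6v_1)'(0)$ I would substitute $r_1(0)=\omega$, $r_2(0)=1$ and $\theta_{ij}(x,t,0)=0$ directly into the formulas \eqref{vdef}. Using $\theta_{21}'(x,t,0) = (\omega^2-\omega)x$ and $\theta_{31}'(x,t,0) = (1-\omega)x$ and differentiating the relevant entries produces the two derivative matrices, provided one also uses the compatibility $\overline{r_1'(0)} = -\omega r_1'(0)$ from \eqref{identity for r1'}; this compatibility is itself a consequence of $\partial_k(v_1v_2\cdots v_6)(x,t,0)=0$, which is forced by the existence of the $m_j^L$ in every sector established in the previous step (a circuit around $k=0$ must return the expansion of $m$ to itself to all orders).

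Third, I would solve the two linear constraints \eqref{m10m11symma} and \eqref{m10m11symmc} using the matrices from the previous step in order to derive \eqref{m10structure} and \eqref{realpartm012}. Since the third column of $(v_6v_1)(0)$ vanishes, \eqref{m10m11symma} forces $\mathfrak{m}_{13}^{(0)} = \mathfrak{m}_{23}^{(0)} = \mathfrak{m}_{33}^{(0)}$, and the remaining equations determine the entries of rows $2$ and $3$ in terms of $\mathfrak{m}_{11}^{(0)}, \mathfrak{m}_{12}^{(0)}, \mathfrak{m}_{33}^{(0)}$, yielding the shape \eqref{m10structure}. The $\mathcal{B}$-constraint \eqref{m10m11symmc}, applied entry by entry, then gives $\mathfrak{m}_{11}^{(0)}, \mathfrak{m}_{33}^{(0)} \in \R$ (from the $(1,1)$ and $(3,3)$ positions) and $2\,\re(\mathfrak{m}_{12}^{(0)}) = \mathfrak{m}_{33}^{(0)}$ (from the $(1,2)$ position), which is \eqref{realpartm012}. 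The main obstacle lies precisely in this final step: $(v_6v_1)(0)$ is singular, so \eqref{m10m11symma} is an underdetermined linear system on $\mathfrak{m}_1^{(0)}$, and one must choose a convenient three-parameter parametrization of its solution space so that the resulting form simultaneously matches \eqref{m10structure} and is compatible with the further reality constraints imposed by \eqref{m10m11symmc}. The remaining stages are direct, if slightly tedious, matrix algebra.
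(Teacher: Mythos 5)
Your proposal follows the paper's proof essentially step for step: existence of the sectorial formal series, the $\mathcal{A}$- and $\mathcal{B}$-symmetries of $m_1^L$ obtained exactly as in the proof of Lemma \ref{realitylemma}, the cyclic relation $\partial_k^j(v_1\cdots v_6)(0)=0$ to extract \eqref{identity for r1'}, and finally linear algebra on \eqref{m10m11symma} and \eqref{m10m11symmc}. In outline this is correct and is the paper's argument.

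Two factual claims in your third stage are wrong, however, and one of them, taken literally, would derail the computation. The third column of $(v_6v_1)(0)$ is $(1,0,0)^T$, not zero; if it actually vanished, \eqref{m10m11symma} would force the third column of $\mathfrak{m}_1^{(0)}$ to be \emph{zero}, not to have equal entries. The correct deduction is that the third column of $\mathcal{A}\mathfrak{m}_1^{(0)}\mathcal{A}^{-1}(v_6v_1)(0)$ equals the first column of $\mathcal{A}\mathfrak{m}_1^{(0)}\mathcal{A}^{-1}$, namely $(\mathfrak{m}_{33}^{(0)},\mathfrak{m}_{13}^{(0)},\mathfrak{m}_{23}^{(0)})^T$, and equating this with $(\mathfrak{m}_{13}^{(0)},\mathfrak{m}_{23}^{(0)},\mathfrak{m}_{33}^{(0)})^T$ gives the cyclic identification $\mathfrak{m}_{13}^{(0)}=\mathfrak{m}_{23}^{(0)}=\mathfrak{m}_{33}^{(0)}$. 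Likewise $(v_6v_1)(0)$ is not singular: $\det v_6(0)=\det v_1(0)=1$, so it has determinant one. The system \eqref{m10m11symma} is underdetermined not because the jump matrix degenerates but because the linear map $X\mapsto X-\mathcal{A}X\mathcal{A}^{-1}(v_6v_1)(0)$ has a three-dimensional kernel. Neither slip invalidates the strategy, but as written the stated reasons do not support the conclusions you draw from them, so the final stage needs to be redone with the correct matrices.
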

\begin{proof}
For notational convenience, we omit the $(x,t)$-dependence of several quantities. 
The existence of the formal power series $m_{1}^{L}$ follows directly from condition $(d)$ of RH problem \ref{RH3x3}. The existence of the other formal series $m_{j}^{L}$, $j=2, \dots, 6$ then follows from the symmetries \eqref{regRHsymm} of $m$. Since the jumps $v_{j}$ can be Taylor expanded as $k \to 0$, a Taylor expansion of \eqref{regRHsymm} together with \eqref{vsymm} implies that
\begin{align*}
m_{1}^{L}(k) = \mathcal{A}m_{3}^{L}(\omega k) \mathcal{A}^{-1} = \mathcal{A}m_{1}^{L}(\omega k) v_{2}^{L}(\omega k) v_{3}^{L}(\omega k) \mathcal{A}^{-1} = \mathcal{A}m_{1}^{L}(\omega k) \mathcal{A}^{-1} v_{6}^{L}(k)v_{1}^{L}(k)
\end{align*}
and
\begin{align*}
m_1^{L}(k) = \mathcal{B}\overline{m_6^{L}(\overline{k})}\mathcal{B} = \mathcal{B}\overline{m_1^{L}(\overline{k})}\overline{v_1^{L}(\overline{k})^{-1}}\mathcal{B}
= \mathcal{B}\overline{m_1^{L}}(\overline{k})\mathcal{B}v_1^{L}(k),
\end{align*}
which proves \eqref{Acal symmetries of m1L formal power series} and \eqref{Bcal symmetries of m1L formal power series}, and in particular the relations (\ref{m10m11symm}).

If $r_{1}$ and $r_{2}$ satisfy Assumption \ref{r1r2at0assumption}, then we find as in the proof of Lemma \ref{realitylemma} that the relations \eqref{identity for r1'}--\eqref{identity for r1''r2''} hold.
Using these relations, we can express $\overline{r_{1}'(0)}$, $\overline{r_{2}'(0)}$, and $\overline{r_{2}''(0)}$ in terms of $r_{1}'(0)$, $r_{2}'(0)$, $r_{1}''(0)$, $\overline{r_{1}''(0)}$, and $r_{2}''(0)$. This leads to the asserted expressions for $v_1(0)$, $v_1'(0)$, $(v_6v_1)(0)$, and $(v_6v_1)'(0)$. Using the explicit expression for $(v_6v_1)(0)$, a computation shows that (\ref{m10m11symma}) implies that $\mathfrak{m}_1^{(0)}$ has the form (\ref{m10structure}) for some complex-valued functions $\mathfrak{m}^{(0)}_{11}$, $\mathfrak{m}^{(0)}_{12}$, and $\mathfrak{m}^{(0)}_{33}$.
The relation (\ref{m10m11symmc}) then implies that $\mathfrak{m}_{11}^{(0)}$ and $\mathfrak{m}_{33}^{(0)}$ are real-valued and that $\mathfrak{m}^{(0)}_{12}$ satisfies (\ref{realpartm012}).
\end{proof}

\begin{lemma}\label{matinftylemma}
Suppose $r_{1}$ and $r_{2}$ satisfy Assumption \ref{r1r2assumption}.
For each $(x,t) \in \mathcal{N}$, there exists an asymptotic expansion
\begin{align}\label{mexpansionatinfty}
 m(x,t,k) = I + \frac{m^{(1)}(x,t)}{k} + \frac{m^{(2)}(x,t)}{k^2} + O(k^{-3}), \qquad k \to \infty,
\end{align}
where the coefficient matrices $\{m^{(j)}\}_1^2$ have the form
\begin{align}\label{m1m2coefficients}
m^{(1)} = \begin{pmatrix} \omega^2 m_{33}^{(1)} & \omega \overline{m_{13}^{(1)}} & m_{13}^{(1)} \\ 
\omega^2 m_{13}^{(1)} & \omega m_{33}^{(1)} & \overline{m_{13}^{(1)}} \\ 
\omega^2 \overline{m_{13}^{(1)}} & \omega m_{13}^{(1)} &  m_{33}^{(1)}
\end{pmatrix}, 
\qquad
m^{(2)} = \begin{pmatrix} \omega m_{33}^{(2)} & \omega^2 \overline{m_{13}^{(2)}} & m_{13}^{(2)} \\ 
\omega m_{13}^{(2)} & \omega^2 m_{33}^{(2)} & \overline{m_{13}^{(2)}} \\ 
\omega \overline{m_{13}^{(2)}} & \omega^2 m_{13}^{(2)} &  m_{33}^{(2)}
\end{pmatrix}
\end{align}
for some real-valued functions $\{m_{33}^{(j)}\}_1^2$ and complex-valued functions $\{m_{13}^{(j)}\}_1^2$. 
\end{lemma}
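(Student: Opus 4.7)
The plan is to split the proof into two logically independent steps: first establish that $m(x,t,k)$ admits an asymptotic expansion to order $k^{-3}$ at infinity, and then use the symmetries of Lemma \ref{mregularsymmlemma} to pin down the precise form of the coefficient matrices.

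For existence of the expansion, I would use the standard Cauchy integral representation
\begin{equation*}
m(x,t,k) = I + \frac{1}{2\pi i}\int_{\Gamma}\frac{m_{-}(x,t,s)\bigl(v(x,t,s)-I\bigr)}{s-k}\,ds
\end{equation*}
arising from the jump condition \eqref{mjumpcondition}. By Assumption \ref{r1r2assumption}(iii), both $r_{1},r_{2}$ and all their derivatives are rapidly decreasing, and the six rays of $\Gamma$ are oriented so that each exponential factor $e^{\pm\theta_{ij}}$ in \eqref{vdef} is bounded on the corresponding ray; thus $v-I$ is Schwartz-class along $\Gamma$. Standard small-norm arguments then give boundedness of $m_-$ on $\Gamma$, and a Neumann-type expansion of $1/(s-k)$ under the integral produces explicit moment formulas
\begin{equation*}
m^{(j)}(x,t) = -\frac{1}{2\pi i}\int_{\Gamma} s^{j-1}\,m_{-}(x,t,s)\bigl(v(x,t,s)-I\bigr)\,ds,\qquad j=1,2,
\end{equation*}
with remainder $O(k^{-3})$ uniformly for $k$ bounded away from $\Gamma$.

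With the expansion in hand, the form \eqref{m1m2coefficients} follows algebraically from the two symmetries of Lemma \ref{mregularsymmlemma}. Substituting $m=I+m^{(1)}/k+m^{(2)}/k^{2}+\cdots$ into $m(x,t,k)=\mathcal{A}\,m(x,t,\omega k)\mathcal{A}^{-1}$ and matching coefficients of $k^{-j}$ yields
\begin{equation*}
m^{(j)} = \omega^{-j}\,\mathcal{A}\,m^{(j)}\,\mathcal{A}^{-1},\qquad j=1,2.
\end{equation*}
Since conjugation by $\mathcal{A}$ acts on $3\times 3$ matrices as the cyclic relabeling $(M_{ij})\mapsto(M_{i-1,\,j-1})$ with indices read modulo $3$, this constraint groups the nine entries of $m^{(j)}$ into three orbits of length three, entries within a single orbit being related by a fixed power of $\omega$; in particular the diagonal of $m^{(j)}$ is forced to be proportional to $(\omega^{2j},\omega^{j},1)$. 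Substitution into the second symmetry $m(x,t,k)=\mathcal{B}\,\overline{m(x,t,\overline{k})}\,\mathcal{B}$ gives $m^{(j)}=\mathcal{B}\,\overline{m^{(j)}}\,\mathcal{B}$, which swaps rows $1,2$ and columns $1,2$ and conjugates each entry. Combining both constraints forces the diagonal multiplier $m^{(j)}_{33}$ to be real and reduces the six off-diagonal entries to a single complex parameter $m^{(j)}_{13}$ and its conjugate, in exactly the pattern displayed in \eqref{m1m2coefficients}.

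The main obstacle is the analytic part of step one: upgrading the bare $O(k^{-1})$ bound of condition $(c)$ to a full expansion to order $k^{-3}$ requires appealing to singular-integral/Beals--Coifman theory to guarantee that $m_-$ is bounded on $\Gamma$ and that termwise integration against the Neumann expansion of the Cauchy kernel is justified. This is routine given the Schwartz-class decay of $r_{1},r_{2}$, but it is what separates Lemma \ref{matinftylemma} from its analog Lemma \ref{Matinftylemma}, whose proof takes the expansion as a hypothesis. Once this analytic step is settled, the derivation of \eqref{m1m2coefficients} is a mechanical $3\times 3$ linear-algebra computation and poses no further difficulty.
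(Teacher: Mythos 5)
Your proposal is correct and follows essentially the same route as the paper: the existence of the expansion \eqref{mexpansionatinfty} is deduced from the Cauchy representation \eqref{mrepresentation} together with the rapid decay of $r_1,r_2$ from Assumption \ref{r1r2assumption}, and the form \eqref{m1m2coefficients} then follows from the symmetries of Lemma \ref{mregularsymmlemma} by matching coefficients of $k^{-j}$. Your explicit verification of the relations $m^{(j)}=\omega^{-j}\mathcal{A}m^{(j)}\mathcal{A}^{-1}=\mathcal{B}\overline{m^{(j)}}\mathcal{B}$ and their entrywise consequences is exactly the computation the paper leaves implicit.
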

\begin{proof}
The existence of an expansion of the form \eqref{mexpansionatinfty} follows from Cauchy's formula
\begin{align}\label{mrepresentation}
m(x,t,k) = I + \frac{1}{2\pi i}\int_{\Gamma}\frac{m_{-}(x,t,s)(v(x,t,s)-I)}{s-k}ds,
\end{align}
together with the fact that $r_{1}(k)$ and $r_{2}(k)$ are smooth, and have rapid decay as $k \to +\infty$ and as $k \to -\infty$, respectively (see Assumption \ref{r1r2assumption}).
Formula \eqref{m1m2coefficients} is then a consequence of the symmetries established in Lemma \ref{mregularsymmlemma}.
\end{proof}

\section{Proof of Theorem \ref{mainth}}\label{proofsec}
Suppose that $r_{1}(k)$ and $r_{2}(k)$ satisfy Assumptions \ref{r1r2assumption} and \ref{r1r2at0assumption} and define $v(x,t,k)$ by \eqref{vdef}. Assume that $m(x,t,k)$ solves RH problem \ref{RH3x3} for all $(x,t) \in \mathcal{N}$, where $\mathcal{N}$ is an open subset of $\mathbb{R}\times [0,+\infty)$. 

By Lemma \ref{muniquelemma}, the solution $m$ is unique.
By Lemma \ref{matinftylemma}, the function $q(x,t)$ is well-defined by (\ref{recoverq3x3}). 
Viewing (\ref{mrepresentation}) as a singular integral equation for $m_-$ and using that $r_1(k)$ and $r_2(k)$ have rapid decay as $k \to \infty$, standard arguments imply that $m(x,t,k)$ and $q(x,t)$ are smooth functions of $(x,t) \in \mathcal{N}$, see e.g. \cite[Section 5]{LNonlinearFourier} for details in a similar situation. Assertion $(\ref{partagood})$ of Theorem \ref{mainth} then follows from the following lemma.

\begin{lemma}\label{lemma: lax pair for m}
Define $q(x,t)$ by (\ref{recoverq3x3}) for $(x,t) \in \mathcal{N}$. Then $m$ satisfies the Lax pair equations
\begin{align}\label{mlax}
\begin{cases}
m_x-k[J,m]=\mathcal{U}m,\\
m_t-k^2[J^2,m]=\mathcal{V}m.
\end{cases}
 \qquad (x,t) \in \mathcal{N}, \ k \in \C \setminus \Gamma,
\end{align}
where $\mathcal{U}$  and $\mathcal{V}$ are defined in terms of $q$ by \eqref{expression for Ufrak} and \eqref{expression for Vfrak}, and $J = \mbox{\upshape diag}(\omega, \omega^{2}, 1)$. In particular, $q$ satisfies (\ref{3x3eq}) for $(x,t) \in \mathcal{N}$.
\end{lemma}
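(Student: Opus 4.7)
The plan is to apply the standard dressing procedure to recast the analytic data for $m$ as a Lax pair. The key observation is that the $(x,t)$-dependence of $v$ in \eqref{vdef} enters only through the factors $e^{\pm \theta_{ij}(x,t,k)}$ with $\theta_{ij}(x,t,k) = (l_i - l_j)x + (z_i - z_j)t$. Since $\diag(l_1,l_2,l_3) = kJ$ and $\diag(z_1,z_2,z_3) = k^2 J^2$, one may write $v(x,t,k) = e^{-(kxJ + k^2 tJ^2)}\, v_0(k)\, e^{kxJ + k^2 tJ^2}$ for some $(x,t)$-independent matrix $v_0(k)$. Setting
\[
\Psi(x,t,k) := m(x,t,k)\, e^{kxJ + k^2 tJ^2},
\]
the jump relation becomes $\Psi_+ = \Psi_- v_0$, which is independent of $(x,t)$. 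Consequently, $\Psi_x \Psi^{-1}$ and $\Psi_t \Psi^{-1}$ admit identical boundary values on the two sides of each ray of $\Gamma$, so they extend analytically across $\Gamma \setminus \{0\}$. By Lemma \ref{mat0lemma}, $m$ is analytic at $k=0$, so both combinations are entire functions of $k$ for each $(x,t) \in \mathcal{N}$.

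Next I would compute
\[
\Psi_x \Psi^{-1} = m_x m^{-1} + k\, m J m^{-1}, \qquad \Psi_t \Psi^{-1} = m_t m^{-1} + k^2\, m J^2 m^{-1},
\]
and expand these as $k \to \infty$ using Lemma \ref{matinftylemma}, obtaining
\[
\Psi_x \Psi^{-1} = kJ + [m^{(1)}, J] + O(k^{-1}), \qquad \Psi_t \Psi^{-1} = k^2 J^2 + k\, [m^{(1)}, J^2] + \mathcal{V}_0 + O(k^{-1}),
\]
where $\mathcal{V}_0$ is a polynomial expression in $m^{(1)}$ and $m^{(2)}$. By Liouville's theorem, each of these entire functions must coincide exactly with its polynomial principal part. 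Using $[J,m]m^{-1} = J - mJm^{-1}$ to rearrange, this is precisely the Lax pair \eqref{mlax} with $\mathcal{U} = [m^{(1)}, J]$ and $\mathcal{V} = k\,[m^{(1)}, J^2] + \mathcal{V}_0$.

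It remains to identify $\mathcal{U}$ and $\mathcal{V}$ with the prescribed forms \eqref{expression for Ufrak}--\eqref{expression for Vfrak}. Since $q = m^{(1)}_{13}$ by \eqref{recoverq3x3}, substituting the symmetry-constrained form \eqref{m1m2coefficients} of $m^{(1)}$ into $[m^{(1)}, J]$ expresses every entry of $\mathcal{U}$ as an explicit combination of $q$ and $\bar q$, which matches \eqref{expression for Ufrak}. For $\mathcal{V}$, I would expand the $x$-part of the Lax pair to order $k^{-1}$ at infinity to obtain
\[
m^{(1)}_x = [J, m^{(2)}] + \mathcal{U}\, m^{(1)};
\]
combined with the symmetry structure \eqref{m1m2coefficients} of $m^{(2)}$, this determines its off-diagonal entries in terms of $q$ and $q_x$ and converts $\mathcal{V}_0$ into the asserted formula \eqref{expression for Vfrak}. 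Finally, the assertion that $q$ solves \eqref{3x3eq} follows from the zero-curvature condition $\mathcal{U}_t - \mathcal{V}_x + [\mathcal{U}, \mathcal{V}] = 0$, which is the compatibility of the two equations in \eqref{mlax}, by reading off a single entry at order $k^0$.

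The main obstacle is the $\omega$-bookkeeping in the third paragraph: checking that $[m^{(1)}, J]$ and $k\,[m^{(1)}, J^2] + \mathcal{V}_0$ indeed coincide with the specific forms \eqref{expression for Ufrak}--\eqref{expression for Vfrak} is a careful but essentially mechanical $3 \times 3$ matrix calculation, kept tractable by the rigid symmetry structure of $m^{(1)}$ and $m^{(2)}$ established in Lemma \ref{matinftylemma}.
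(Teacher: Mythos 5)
Your argument is correct and follows essentially the same route as the paper's proof: conjugate away the $(x,t)$-dependence of the jump, show that the combinations $(m_x+kmJ)m^{-1}$ and $(m_t+k^2mJ^2)m^{-1}$ are polynomial in $k$ via Liouville, identify the coefficients with $\mathcal{U}$ and $\mathcal{V}$ using the symmetry structure of $m^{(1)}$ and $m^{(2)}$ (your use of the order-$k^{-1}$ term of the $x$-equation to pin down $m^{(2)}_{13}$ is an equivalent variant of the paper's use of the order-$k$ term of the compatibility condition), and extract \eqref{3x3eq} from the zero-curvature relation. Two small slips to fix in a final write-up: the conjugation should read $v=e^{kxJ+k^2tJ^2}\,v_0(k)\,e^{-kxJ-k^2tJ^2}$ (your signs are reversed, which is inconsistent with your own jump relation $\Psi_+=\Psi_-v_0$), and $m$ is \emph{not} analytic at $k=0$ since its sectorial limits $\mathfrak{m}_j^{(0)}$ differ across $\Gamma$; what you actually need is only that $m$ and $m^{-1}$ are bounded near $k=0$ (which follows from Lemma \ref{mat0lemma} and $\det m=1$), so that the jump-free combinations have removable singularities there.
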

\begin{proof}
Since $\det m = 1$, the inverse $m^{-1}$ is well defined.
Recalling the definition \eqref{vdef} of the jump matrix $v$, we can write
\begin{align*}
v(x,t,k) = e^{kJx + k^{2}J^{2}t}v_0(k)e^{-kJx - k^{2}J^{2}t}
\end{align*}
with a matrix $v_0(k)$ independent of $x$ and $t$. This yields
\begin{align*}
v_{x} = k [J,v] \qquad \mbox{ and } \qquad v_{t} = k^{2}[J^{2},v],
\end{align*}
from which we deduce 
\begin{align*}
\begin{cases}
 m_{x,+} = m_{x,-}v + km_{-}[J,v], \\
 m_{t,+} = m_{t,-}v + k^{2}m_{-}[J^{2},v],
\end{cases} \quad k \in \Gamma,
\end{align*}
or, equivalently,
\begin{align*}
\begin{cases}
m_{x,+}+km_{+}J = (m_{x,-}+km_{-}J)v, \\
m_{t,+}+k^{2}m_{+}J = (m_{t,-}+k^{2}m_{-}J)v,
\end{cases} \quad k \in \Gamma.
\end{align*}
Therefore, the functions
$$(m_x + kmJ)m^{-1} \qquad \text{and} \quad (m_t + k^2mJ^2)m^{-1}$$
are analytic for $k \in \C \setminus \{0\}$. As $k \to \infty$, they are $O(k)$ and $O(k^2)$, respectively, and as $k \to 0$ they are $O(1)$.
Hence there exist $3\times 3$-matrix valued functions $\{F_j(x,t)\}_{j=0}^1$ and $\{G_j(x,t)\}_{j=0}^2$ such that
\begin{align}\label{mlaxFG}
\begin{cases}
m_x + kmJ = (kF_1 + F_0)m, 
	\\
m_t + k^2mJ^2 = (k^2G_2 + kG_1 + G_0)m.
\end{cases}
\end{align}
The Lax pair equations (\ref{mlax}) follow from (\ref{mlaxFG}) if we can show that
\begin{subequations}\label{3x3FGUV}
\begin{align}
& F_1 = J, && F_0 = \mathcal{U},
	\\
& G_2 = J^2, && kG_1 + G_0 = \mathcal{V},
\end{align}
\end{subequations}
where $\mathcal{U}$ and $\mathcal{V}$ are given by (\ref{expression for Ufrak}) and \eqref{expression for Vfrak}, respectively.

To prove (\ref{3x3FGUV}), we note that the terms of $O(k^2)$ as $k \to \infty$ of (\ref{mlaxFG}) show that $G_2 = J^2$, and the terms of $O(k)$ and $O(1)$ then yield 
\begin{subequations}\label{FGintermsofm}
\begin{align}
F_1 = J, \qquad G_1 = [m^{(1)}, J^2],
\end{align}
and
\begin{align}
F_0 = [m^{(1)}, J], \qquad G_0 = [m^{(2)}, J^2] - [m^{(1)}, J^2]m^{(1)}.
\end{align}
\end{subequations}
An explicit computation of these equations gives $F_{0} = \mathcal{U}$ and
\begin{align}
& G_{1} = \begin{pmatrix} 
0 & (\omega^2-1) \bar{q} & (1-\omega^2)q \\
(\omega-1)q & 0 & (1-\omega)\bar{q} \\
(\omega-\omega^2)\bar{q} & (\omega^2-\omega)q & 0 \end{pmatrix}, \\
& G_{0} = \begin{pmatrix}
0 & (1-\omega)\bar{\frak{u}} & (1-\omega^{2})\frak{u} \\
(1-\omega^{2})\frak{u} & 0 & (1-\omega)\bar{\frak{u}} \\
(1-\omega)\bar{\frak{u}} & (1-\omega^{2})\frak{u} & 0
\end{pmatrix}, \label{explicit G1 G0}
\end{align}
where $\frak{u} = m_{13}^{(2)}-m_{33}^{(1)}q + \bar{q}^{2}$. After substituting these expressions for $\{F_j\}_0^1$ and $\{G_j\}_0^2$ into the compatibility condition $m_{xt} = m_{tx}$ and simplifying, we find the relation
\begin{multline}\label{compatibility explicit}
kF_{1,t}+F_{0,t}+(kF_{1}+F_{0})(k^{2}G_{2}+kG_{1}+G_{0}) \\ = k^{2}G_{2,x}+kG_{1,x}+G_{0,x}+(k^{2}G_{2}+kG_{1}+G_{0})(kF_{1}+F_{0}).
\end{multline}
The terms of order $k$ in \eqref{compatibility explicit} yield
\begin{align}\label{compatibility order k}
F_{1}G_{0}+F_{0}G_{1} = G_{1,x}+G_{1}F_{0}+G_{0}F_{1}.
\end{align}
Using the explicit formulas for $F_{0}$, $F_{1}$, and $G_{1}$, we find that \eqref{compatibility order k} is equivalent to
\begin{align}\label{get rid of u}
%q_{x} = \frac{3(\omega-1)\bar{q}^{2} - 3 \omega u}{\omega-1} 
\frak{u} = \frac{1-\omega}{3\omega}(q_{x}-3\bar{q}^{2}).
\end{align}
The relations (\ref{3x3FGUV}) follow by substituting (\ref{get rid of u}) into (\ref{explicit G1 G0}). 

The terms of order $1$ in \eqref{compatibility explicit} yield
\begin{align*}
F_{0,t}+F_{0}G_{0}=G_{0,x}+G_{0}F_{0},
\end{align*}
or, equivalently,
\begin{align}\label{lol4}
q_{t} = (1+\omega)\mathfrak{u}_{x} = \frac{1-\omega^{2}}{3\omega}\big( q_{xx}-6 \bar{q} \bar{q}_{x} \big)
\end{align}
where we have used \eqref{get rid of u} in the last step. Since $\omega = e^{\frac{2\pi i}{3}}$, it follows that $q$ satisfies (\ref{3x3eq}). 
\end{proof}

Assertion $(\ref{partbgood})$ follows from Lemma \ref{matinftylemma} and (\ref{recoverq3x3}).

Let $y_{1}$, $y_{2}$, and $y_{3}$ be the real-valued functions defined in terms of $q$ and $m$ by \eqref{yjdef}. Let 
\begin{align}\label{Tdef}
T(x,t,k) = I + \frac{A(x,t)}{k} + \frac{B(x,t)}{k^2}, \qquad k \in \mathbb{C}\setminus \{0\}, \quad (x,t) \in \mathcal{N},
\end{align}
where $A(x,t)$ and $B(x,t)$ are given in terms of $y_{1}$, $y_{2}$, and $y_{3}$ by (\ref{BAdef}), and define $M$ by
\begin{align}\label{def of M in terms of m}
M(x,t,k) = T(x,t,k)m(x,t,k).
\end{align}
Assertion $(\ref{partcgood})$ is a consequence of the next lemma.

\begin{lemma}
$M(x,t,k)$ defined by \eqref{def of M in terms of m} satisfies RH problem \ref{RHgoodboussinesq} for each $(x,t) \in \mathcal{N}$.
\end{lemma}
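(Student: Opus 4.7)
The plan is to verify, in turn, each of conditions $(a)$--$(e)$ of RH problem \ref{RHgoodboussinesq} for the matrix $M = Tm$. Conditions $(a)$ and $(b)$ are essentially free: $T(x,t,\cdot)$ is rational in $k$ with a double pole only at $k=0$, so it is analytic on $\mathbb{C}\setminus\{0\}$ and has no jump across $\Gamma$, from which $M$ inherits analyticity on $\mathbb{C}\setminus\Gamma$ and the jump relation $M_+ = T m_+ = T m_- v = M_- v$.

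For the symmetries $(e)$, since $m$ already satisfies (\ref{regRHsymm}) by Lemma \ref{mregularsymmlemma} and is invertible, it suffices to prove $T(k) = \mathcal{A}T(\omega k)\mathcal{A}^{-1} = \mathcal{B}\,\overline{T(\bar k)}\,\mathcal{B}$. Matching powers of $k$ in $T = I + A/k + B/k^2$ reduces this to the matrix identities
\begin{align*}
A = \omega^2 \mathcal{A}A\mathcal{A}^{-1} = \mathcal{B}\overline A \mathcal{B}, \qquad B = \omega\,\mathcal{A}B\mathcal{A}^{-1} = \mathcal{B}\overline B \mathcal{B},
\end{align*}
which I would check by direct computation on each of the three ``building-block'' matrices in (\ref{BAdef}); the $\mathcal{B}$-identity requires $y_1,y_2,y_3 \in \mathbb{R}$, which is manifest from (\ref{yjdef}) for $y_2,y_3$ and follows for $y_1$ from Lemma \ref{matinftylemma} (which gives $m_{33}^{(1)}\in\mathbb{R}$ and $\omega\overline{m_{13}^{(2)}}+\omega^2 m_{13}^{(2)}\in\mathbb{R}$).

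For the large-$k$ expansion $(c)$, multiplying (\ref{mexpansionatinfty}) by $T$ yields $M^{(1)} = A + m^{(1)}$ and $M^{(2)} = B + Am^{(1)} + m^{(2)}$. Using the structure (\ref{m1m2coefficients}) together with $m_{13}^{(1)} = q$ from (\ref{recoverq3x3}), the two conditions $M_{12}^{(1)} = M_{13}^{(1)} = 0$ become a $2\times 2$ linear system whose unique solution is the pair $(y_2,y_3)$ appearing in (\ref{yjdef}), while the condition $M_{12}^{(2)} + M_{13}^{(2)} = 0$ collapses to a single linear equation whose unique solution is the stated formula for $y_1$. In other words, the coefficients (\ref{yjdef}) are precisely what is forced by (\ref{singRHMatinftyb}), so condition $(c)$ holds by construction.

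The main obstacle is $(d)$, the behavior at $k = 0$. Substituting (\ref{mregularat0}) into $M = Tm$ and collecting powers of $k$ gives
\begin{align*}
\mathcal{M}_1^{(-2)} = B\mathfrak{m}_1^{(0)}, \qquad \mathcal{M}_1^{(-1)} = A\mathfrak{m}_1^{(0)} + B\mathfrak{m}_1^{(1)}, \qquad \mathcal{M}_1^{(0)} = \mathfrak{m}_1^{(0)} + A\mathfrak{m}_1^{(1)} + B\mathfrak{m}_1^{(2)}.
\end{align*}
Because every row of $B$ equals $y_1(\omega,\omega^2,1)$, and $\mathfrak{m}_1^{(0)}$ has the form (\ref{m10structure}) with (\ref{realpartm012}), a short calculation using $1+\omega+\omega^2 = 0$ shows that the last two columns of $B\mathfrak{m}_1^{(0)}$ vanish and its first column is a scalar multiple of $\omega(1,1,1)^T$, giving (\ref{explicitMcalpm2p}) for $\mathcal{M}_1^{(-2)}$. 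For $\mathcal{M}_1^{(-1)}$, I would exploit the constraints (\ref{m10m11symm}) on $\mathfrak{m}_1^{(1)}$ together with the explicit forms of $v_1(0), v_1'(0), (v_6v_1)(0), (v_6v_1)'(0)$ from Lemma \ref{mat0lemma} to restrict $\mathfrak{m}_1^{(1)}$ to a prescribed affine subspace, and then verify by matrix multiplication that $A\mathfrak{m}_1^{(0)} + B\mathfrak{m}_1^{(1)}$ lies in the three-parameter span of the matrices (\ref{explicitMcalpm2p}) with real coefficients $\beta,\gamma,\delta$. Finally, since the third column of $\mathfrak{m}_1^{(0)}$ equals $\mathfrak{m}_{33}^{(0)}(1,1,1)^T$ and $B(1,1,1)^T = 0$, the third column of $\mathcal{M}_1^{(0)}$ simplifies enough that, using (\ref{m10structure}) and the symmetry-restricted form of $\mathfrak{m}_1^{(1)}$, it collapses to $\epsilon(1,1,1)^T$, yielding (\ref{explicit Mcalp0p third column}). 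The bookkeeping is intricate, but every cancellation is forced by $1+\omega+\omega^2 = 0$, the reality of the $y_j$, and the symmetry relations already established in Lemma \ref{mat0lemma}.
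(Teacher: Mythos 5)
Your proposal is correct and follows essentially the same route as the paper's proof: conditions $(a)$, $(b)$, $(e)$ from the rationality, unit determinant, and $\mathcal{A}$-, $\mathcal{B}$-symmetries of $T$ (using the reality of $y_1,y_2,y_3$), condition $(c)$ from $M^{(1)}=A+m^{(1)}$, $M^{(2)}=B+Am^{(1)}+m^{(2)}$ with the $y_j$ chosen precisely to enforce \eqref{singRHMatinftyb}, and condition $(d)$ from the coefficient identities $\mathcal{M}_1^{(-2)}=B\mathfrak{m}_1^{(0)}$, etc., combined with the structure of $\mathfrak{m}_1^{(0)}$ and the symmetry constraints \eqref{m10m11symm} on $\mathfrak{m}_1^{(1)}$ from Lemma \ref{mat0lemma}. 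The remaining bookkeeping you defer to ``long computations'' is exactly what the paper defers as well.
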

\begin{proof}
Since $M = Tm$, the function $M$ is analytic for $k \in \C \setminus \Gamma$ and satisfies the same jump conditions as $m$ on $\Gamma$. 
Moreover, since $y_{1}$, $y_{2}$, and $y_{3}$ are real-valued functions, it is easy to check that the factor $T$ has unit determinant and obeys the symmetries
$$T(x,t,k) = \mathcal{A} T(x,t,\omega k) \mathcal{A}^{-1} = \mathcal{B} \overline{T(x,t,\bar{k})} \mathcal{B}$$
for $(x,t) \in \mathcal{N}$. Since $m$ obeys the same symmetries by Lemma \ref{mregularsymmlemma}, $M$ does as well. 

Let us prove that $M$ satisfies (\ref{singRHMat0}) as $k \to 0$ with coefficients fulfilling \eqref{explicitMcalpm2p}--\eqref{explicit Mcalp0p third column} for some choice of the functions $\alpha, \beta, \gamma, \delta, \epsilon$.
We deduce from Lemma \ref{mat0lemma} and the definition (\ref{def of M in terms of m}) of $M$ that $M$ has an expansion of the form (\ref{singRHMat0}) as $k \in \bar{D}_1$ approaches $0$. The first few coefficients are given by
\begin{subequations}\label{MBAm}
\begin{align}
&\mathcal{M}_1^{(-2)} = B\mathfrak{m}_1^{(0)},
	\\
&\mathcal{M}_1^{(-1)} = B\mathfrak{m}_1^{(1)} + A\mathfrak{m}_1^{(0)},
	\\
&\mathcal{M}_1^{(0)} = B\mathfrak{m}_1^{(2)} + A\mathfrak{m}_1^{(1)} + \mathfrak{m}_1^{(0)},
\end{align}
\end{subequations}
where $\{\mathfrak{m}_1^{(l)}\}_{l=0}^{2}$ are the coefficient in (\ref{mregularat0}). 
We substitute the expression (\ref{m10structure}) for $\mathfrak{m}_1^{(0)}$ into (\ref{m10m11symmb}) and solve the six entries in the first two rows of this equation for $\mathfrak{m}^{(1)}_{31}$, $\mathfrak{m}^{(1)}_{32}$, $\mathfrak{m}^{(1)}_{13}$, $\mathfrak{m}^{(1)}_{21}$, $\mathfrak{m}^{(1)}_{22}$, and $\mathfrak{m}^{(1)}_{23}$. Substituting the resulting expressions as well as the expression (\ref{m10structure}) for $\mathfrak{m}_1^{(0)}$ into (\ref{MBAm}) and using that $r_2'(0) \in i\R$, long computations show that 
$\mathfrak{M}_1^{(-2)}$, $\mathfrak{M}_1^{(-1)}$, and $\mathfrak{M}_1^{(0)}$ have the structure specified in \eqref{explicitMcalpm2p}--\eqref{explicit Mcalp0p third column},
where $\alpha, \beta, \gamma, \delta, \epsilon$ are given explicitly in terms of $r_1'(0)$, $r_2'(0)$, $\{y_j\}_1^3$, and the entries of $m_1^{(0)}$ and $m_1^{(1)}$. This shows that $M$ has the desired behavior as $k \to 0$.

We finally show that the asymptotics \eqref{Matinfty} hold as $k \to \infty$ with matrix-coefficients $M^{(1)}$ and $M^{(2)}$ satisfying (\ref{singRHMatinftyb}). As $k \to \infty$, $m$ obeys the expansion (\ref{mexpansionatinfty}). Hence $M$ satisfies
$$M = I + \frac{M^{(1)}}{k} + \frac{M^{(2)}}{k^2} + O(k^{-3}), \qquad k\to \infty,$$
where 
\begin{align}\label{M1M2}
M^{(1)} = A + m^{(1)}, \qquad M^{(2)} = B + A m^{(1)} + m^{(2)}.
\end{align}
Using that $\{y_j\}_1^3$ are given by (\ref{yjdef}), we conclude that the three conditions in (\ref{singRHMatinftyb}) are satisfied. This completes the proof.
\end{proof}

Using (\ref{M1M2}), we can write the definition (\ref{recoverugood}) of $u(x,t)$ as
\begin{align}\label{recoveruAm}
 u(x,t) 
 = -\frac{3}{2}\frac{\partial}{\partial x} M_{33}^{(1)}(x,t) = -\frac{3}{2}\frac{\partial}{\partial x} \big(A_{33}(x,t) + m_{33}^{(1)}(x,t)\big),
\end{align}
By considering the terms of order $k^{-1}$ of the $(33)$-entry of the first equation in (\ref{mlax}), we infer that $m_{33}^{(1)}$ satisfies 
\begin{align}\label{m133x}
\frac{\partial}{\partial x} m_{33}^{(1)}(x,t) = 3|q(x,t)|^2, \qquad (x,t) \in \mathcal{N}.
\end{align}
Substituting this into (\ref{recoveruAm}) and using the definition (\ref{BAdef}) of $A(x,t)$, we obtain
\begin{align*}
 u(x,t) 
 = -\frac{3}{2}\frac{\partial}{\partial x} \bigg(\frac{\omega q -  \bar{q}}{1-\omega} + \frac{\omega(q-\bar{q})}{1-\omega^2}\bigg)
 -\frac{9}{2}|q(x,t)|^2
 = - 3 \re\big(\omega q_x(x,t)\big)  -\frac{9}{2}|q(x,t)|^2 ,
\end{align*}
which is the Miura transformation (\ref{miuragood}). Since $q$ is smooth, it follows from (\ref{miuragood}) that $u$ is smooth and real-valued on $\mathcal{N}$. Since $q$ solves (\ref{3x3eq}), it also follows from (\ref{miuragood}) that $u$ satisfies the ``good'' Boussinesq equation (\ref{goodboussinesqnouxx}) and that $\{u, v\}$, with $v$ defined by (\ref{vintermsofq}), satisfy the system (\ref{boussinesqsystem}) on $\mathcal{N}$.
For completeness, we give in the next lemma an alternative proof of the latter two facts; this lemma also shows that $M$ satisfies the Lax pair equations associated to (\ref{boussinesqsystem}).

\begin{lemma}
For $(x,t) \in \mathcal{N}$, define $u(x,t)$ by (\ref{recoverugood}) and define $v(x,t)$ by (\ref{vintermsofq}).
Then $M$ satisfies the Lax pair equations 
\begin{align}\label{Msinglax}
\begin{cases}
M_x - k[J,M] = \mathcal{U} M, 
	\\
M_t - k^2 [J^2,M] = \mathcal{V} M,
\end{cases}
\end{align}
where $\mathcal{U}$ and $\mathcal{V}$ are given in terms of $u$ and $v$ by (\ref{UVexpressions}). In particular, for $(x,t) \in \mathcal{N}$, $u$ and $v$ satisfy (\ref{boussinesqsystem}) and $u$ satisfies (\ref{goodboussinesqnouxx}).
\end{lemma}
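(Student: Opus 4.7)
The plan is to derive the Lax pair~(\ref{Msinglax}) for $M$ by conjugating the Lax pair for $m$ from Lemma~\ref{lemma: lax pair for m}. Denoting by $\mathcal{U}_m,\mathcal{V}_m$ the coefficient matrices in the Lax pair for $m$ (to distinguish them from the $\mathcal{U},\mathcal{V}$ sought here), writing $M=Tm$ with $T$ as in~(\ref{Tdef}) and substituting $m_x-k[J,m]=\mathcal{U}_m m$, $m_t-k^2[J^2,m]=\mathcal{V}_m m$ into the derivatives of $M$ gives, after the commutator cancellations,
\begin{align*}
M_x - k[J,M] &= \big(T_x + k[T,J] + T\mathcal{U}_m\big)T^{-1}M,\\
M_t - k^2[J^2,M] &= \big(T_t + k^2[T,J^2] + T\mathcal{V}_m\big)T^{-1}M.
\end{align*}
It thus suffices to prove that these two prefactors are $k$-independent and, as functions of $(x,t)$, coincide with the matrices $\mathcal{U},\mathcal{V}$ specified by~(\ref{UVexpressions}) in terms of $u,v$.

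To establish $k$-independence I would multiply the first identity on the right by $T$ and expand $T=I+A/k+B/k^2$, $T_x=A_x/k+B_x/k^2$, matching coefficients of $k^0,k^{-1},k^{-2}$ in $\mathcal{U}T=T_x+k[T,J]+T\mathcal{U}_m$. The $k^0$ coefficient yields $\mathcal{U}=\mathcal{U}_m+[A,J]$, which I would verify entry-by-entry against~(\ref{UVexpressions}) using the explicit form of $A$ from~(\ref{BAdef}), the expressions for $y_2,y_3$ in~(\ref{yjdef}), and the Miura relation~(\ref{miuragood}); an analogous computation for $\mathcal{V}$ uses~(\ref{vintermsofq}). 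The $k^{-1}$ and $k^{-2}$ coefficients impose the algebraic constraints
\begin{align*}
\mathcal{U}A=A_x+[B,J]+A\mathcal{U}_m, \qquad \mathcal{U}B=B_x+B\mathcal{U}_m,
\end{align*}
together with analogous identities in $J^2$ coming from the $t$-equation. These constraints are precisely what pins down the specific form of $y_1,y_2,y_3$ in~(\ref{yjdef}); checking them is the main technical obstacle, and would be carried out by substitution of the explicit entries of $A,B$, the reality of the $y_j$ (built into~(\ref{yjdef})), the identity $\partial_x m^{(1)}_{33}=3|q|^2$ from~(\ref{m133x}), and the evolution equation~(\ref{3x3eq}) to deal with $t$-derivatives of $q$. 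Although lengthy, this step is purely algebraic and is forced by the dressing structure rather than requiring new ideas.

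Once the Lax pair~(\ref{Msinglax}) is established, the compatibility condition $M_{xt}=M_{tx}$ is equivalent to the zero-curvature equation $\mathcal{U}_t-\mathcal{V}_x+[\mathcal{U},\mathcal{V}]=0$, which is known from the derivation in~\cite{CL2022} to be equivalent to the system~(\ref{boussinesqsystem}) for $u$ and $v$. The ``good'' Boussinesq equation~(\ref{goodboussinesqnouxx}) then follows from~(\ref{boussinesqsystem}) by differentiating $u_t=v_x$ in $t$ and eliminating $v_t$ via the first equation of~(\ref{boussinesqsystem}), completing the proof.
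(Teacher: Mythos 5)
Your overall strategy---write $M=Tm$, conjugate the Lax pair of $m$ from Lemma \ref{lemma: lax pair for m} by $T$, and pin down the resulting rational (in $k$) prefactors by matching Laurent coefficients---is exactly the paper's strategy, and your identities
\begin{align*}
M_x - k[J,M] = \big(T_x + k[T,J] + T\mathcal{U}_m\big)T^{-1}M, \qquad
M_t - k^2[J^2,M] = \big(T_t + k^2[T,J^2] + T\mathcal{V}_m\big)T^{-1}M
\end{align*}
are correct. The final paragraph (compatibility $\Rightarrow$ zero curvature $\Rightarrow$ \eqref{boussinesqsystem} $\Rightarrow$ \eqref{goodboussinesqnouxx}) also matches the paper.

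However, there is a genuine structural error in the coefficient matching. You assert that the prefactors are $k$-independent and that the $O(k^0)$ coefficient of $\mathcal{U}T=T_x+k[T,J]+T\mathcal{U}_m$ yields $\mathcal{U}=\mathcal{U}_m+[A,J]$. But the target matrices in \eqref{UVexpressions} are \emph{not} $k$-independent: $\mathcal{U}=k^{-2}\mathcal{U}^{(2)}+k^{-1}\mathcal{U}^{(1)}$ has a double pole at $k=0$ and \emph{no} constant term, and $\mathcal{V}=k^{-2}\mathcal{V}^{(2)}+k^{-1}\mathcal{V}^{(1)}+\mathcal{V}^{(0)}$ likewise has a double pole (this is the whole point of the singular RH problem for Boussinesq). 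Concretely, the $O(k^0)$ coefficient you compute is $[A,J]+\mathcal{U}_m=[A+m^{(1)},J]=[M^{(1)},J]=0$, because $M^{(1)}=M^{(1)}_{33}\,\diag(\omega^2,\omega,1)$ is diagonal by Lemma \ref{Matinftylemma} and commutes with $J$; so your proposed identification would give $\mathcal{U}=0$, and the entry-by-entry verification against \eqref{UVexpressions} you describe would fail. The correct matching (after multiplying by $T$ on the right) reads $0=[A,J]+\mathcal{U}_m$ at order $k^0$, $\mathcal{U}^{(1)}=A_x+[B,J]+A\mathcal{U}_m$ at order $k^{-1}$, and $\mathcal{U}^{(2)}+\mathcal{U}^{(1)}A=B_x+B\mathcal{U}_m$ at order $k^{-2}$, together with the additional consistency conditions $\mathcal{U}^{(2)}A+\mathcal{U}^{(1)}B=0$ and $\mathcal{U}^{(2)}B=0$ at orders $k^{-3}$ and $k^{-4}$, which your three-equation system omits entirely (since $\det T=1$, $T^{-1}$ is a polynomial in $k^{-1}$ of degree up to four, so one must also control why the prefactor has a pole of order at most two at $k=0$, as the paper does by a direct computation before invoking the Liouville argument). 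The analogous corrections apply to the $t$-part, where $\mathcal{V}^{(0)}$ is the $O(k^0)$ coefficient but nontrivial coefficients persist down to $k^{-2}$. With the Laurent structure of \eqref{UVexpressions} taken correctly into account, the rest of your plan (substituting the explicit $A$, $B$, $y_j$, and using \eqref{get rid of u}/\eqref{3x3eq} and \eqref{m133x}) goes through as in the paper.
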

\begin{proof}
Since $T$ and $m$ have unit determinant, so does $M$; hence the inverse $M^{-1}$ is well defined.
Since $M = Tm$, we have
\begin{subequations}\label{MMinvT}
\begin{align}\label{MMinvTa}
(M_x + kMJ)M^{-1} = T_xT^{-1} + T(kF_1 + F_0)T^{-1}
\end{align}
and
\begin{align}\label{MMinvTb}
(M_t + k^2MJ^2)M^{-1} = T_{t}T^{-1} + T(k^2G_2 + k G_1 + G_0)T^{-1},
\end{align}
\end{subequations}
where $\{F_j\}_{j=0}^1$ and $\{G_j\}_{j=0}^2$ are the functions in (\ref{mlaxFG}).
It follows from the definition of $T$ that the functions in (\ref{MMinvTa}) and (\ref{MMinvTb}) are analytic for $k \in \C \setminus \{0\}$. Moreover,
\begin{align*}
& T_xT^{-1} + T(kF_1 + F_0)T^{-1} = kJ+O(1) & & \mbox{as } k \to \infty, \\
& T_{t}T^{-1} + T(k^2G_2 + k G_1 + G_0)T^{-1} = k^{2}J^{2}+O(k) & & \mbox{as } k \to \infty,
\end{align*}
and we also verify from a direct computation that
\begin{align*}
& T_xT^{-1} + T(kF_1 + F_0)T^{-1} = O(k^{-2}) & & \mbox{as } k \to 0, \\
& T_{t}T^{-1} + T(k^2G_2 + k G_1 + G_0)T^{-1} = O(k^{-2}) & & \mbox{as } k \to 0.
\end{align*}
Hence there exist $3\times 3$-matrix valued functions $\{f_j(x,t)\}_{j=-2}^0$ and $\{g_j(x,t)\}_{j=-2}^1$ such that
\begin{align}\label{laxFG}
\begin{cases}
T_xT^{-1} + T(kF_1 + F_0)T^{-1} = kJ + f_0 + \frac{f_{-1}}{k} + \frac{f_{-2}}{k^2}, 
	\\
T_xT^{-1} + T(k^2G_2 + k G_1 + G_0)T^{-1} = k^2J^{2} + kg_1 + g_0 + \frac{g_{-1}}{k} + \frac{g_{-2}}{k^2}.
\end{cases}
\end{align}
The Lax pair equations (\ref{Msinglax}) are a consequence of (\ref{MMinvT}) and (\ref{laxFG}) if we can show that
\begin{align}\label{FGUV}
& f_0 + \frac{f_{-1}}{k} + \frac{f_{-2}}{k^2} = \mathcal{U},
	& & kg_1 + g_0 + \frac{g_{-1}}{k} + \frac{g_{-2}}{k^2} = \mathcal{V},
\end{align}
where $\mathcal{U}$ and $\mathcal{V}$ are given by (\ref{UVexpressions}).
To prove (\ref{FGUV}), we substitute (\ref{3x3FGUV}) and the definition (\ref{Tdef}) of $T$ into (\ref{laxFG}) and identify coefficients of powers of $k$. Recalling the definitions of $u$ and $v$, the identities in (\ref{FGUV}) follow from long but straightforward computations in which we use the relation \eqref{get rid of u}. 
The facts that $u$ and $v$ satisfy (\ref{boussinesqsystem}) and that $u$ satisfies (\ref{goodboussinesqnouxx}) follow from the compatibility of the Lax pair equations (\ref{Msinglax}).
\end{proof}

The proof of Theorem \ref{mainth} is complete.

\section{Examples}\label{section:example}
We give two explicit examples to illustrate the transformation \eqref{miuragoodintro} of Theorem \ref{miurath}.

\subsection{Example 1}
Equation (\ref{3x3eq}) admits the family of traveling wave solutions
\begin{align}\label{qonesoliton}
q(x,t) = -\frac{c \sqrt{3}\, \omega^2 e^{\frac{i \sqrt{3}}{2} c (x -c t)}}{2 \sin\big(\frac{3 \sqrt{3}}{2} c (x-c t)\big)},
\end{align}
parametrized by the wave speed $c \in \R$. The solution in (\ref{qonesoliton}) is smooth in the $xt$-plane except on the lines $\frac{3 \sqrt{3}}{2} c (x-c t) \in \pi \Z$. Applying Theorem \ref{miurath}, we find that, for any $c_1 = 3^{1/4}c \in \R$,
$$u(x,t) = \frac{1}{2} \left(1- \frac{3 c_1^2}{\sin^2\left(\frac{1}{2} c_1 (x - c_1 t)\right)}\right)$$
is a solution of the ``good'' Boussinesq equation (\ref{goodboussinesq}) everywhere in the $xt$-plane except on the lines $ \frac{1}{2} c_1 (x - c_1 t) \in \pi \Z$.

The solution (\ref{qonesoliton}) is a (singular) one-soliton of (\ref{3x3eq}) and we explain in Appendix \ref{solitonapp} how it can be derived.

\subsection{Example 2}\label{example2sec}
Equation (\ref{3x3eq}) admits the exact solution
\begin{align}\label{qbreather}
q(x,t) = \frac{4 (4 e^{2 x}-1) e^{-2 x-\frac{i t}{\sqrt{3}}} \big(1 +4 e^{2 x} -4 e^{x+i \sqrt{3} t}\big)}{128 \cos (\sqrt{3} t)-108 \sinh (x)+63 \sinh (3 x)-180 \cosh (x)+65 \cosh (3 x)}.
\end{align}
This solution is smooth in the $xt$-plane except when the denominator vanishes; this happens when 
$$t =  \frac{2 \pi  N \pm \arccos\left(\frac{1}{128} (108 \sinh (x)-63 \sinh (3 x)+180 \cosh (x)-65 \cosh(3 x))\right)}{\sqrt{3}}$$
for some $N \in \Z$.
An application of the Miura-type transformation of Theorem \ref{miurath} to (\ref{qbreather}) yields the following (singular) solution of (\ref{goodboussinesq}):
\begin{align*}
  u(x,t) = \frac{2 -32 \left(1+2 \sqrt{3}\right) \left(y + 4 y^3\right) \cos(\frac{\sqrt{3} t+\pi}{3})+16 y^2 \big(8 \sqrt{3}+5 -4 \sin(\frac{4 \sqrt{3} t+\pi }{6})\big)+32 y^4}{4 \Big(4 \sqrt{3} y \sin(\frac{t}{\sqrt{3}})-4 y \cos(\frac{t}{\sqrt{3}})+4 y^2+1\Big)^2},
\end{align*}
where $y := \exp(3^{-1/4} x)$. This solution is smooth whenever the denominator is nonzero.

The solution (\ref{qbreather}) is a (singular) breather soliton of (\ref{3x3eq}); we show in Appendix \ref{solitonapp} how to construct such solitons and how we arrived at the solution in (\ref{qbreather}).

\section{KdV and mKdV}\label{kdvsec}
The Miura transformation (\ref{originalmiura}) maps solutions of the mKdV equation to solutions of the KdV equation. Underlying this transformation is a correspondence between the associated Riemann--Hilbert problems. In this section, we describe this correspondence; we refer to \cite[Proposition 2.3 and Remark 2.5]{FI1994} for an earlier description.

Our goal is to give a detailed treatment of the mKdV/KdV correspondence with a particular emphasis on the analogy between that correspondence and the correspondence between the ``good'' Boussinesq equation and equation (\ref{3x3eq}) presented in Sections \ref{mainsec}--\ref{proofsec}.
To make this analogy more transparent, in this section, we write $M$ and $m$ for the solutions of the RH problems associated with the KdV and mKdV equations, respectively (in the rest of this paper, $M$ and $m$ denote the solutions to the RH problems associated with the Boussinesq equation and equation \eqref{3x3eq}, respectively). Similarly, the quantities $\mathcal{U}$, $\mathcal{V}$, $\mathcal{A}$, $\mathcal{B}$, $s$, etc., are defined differently in this section compared to earlier sections; but they play analogous roles here, so that the comparison between this section and the rest of the paper becomes evident.

%We refer to \cite[Proposition 2.3 and Remark 2.5]{FI1994} for an earlier discussion of some of the material in this section.
%The material in this section can probably be considered well-known although our only source has been the remarks in \cite{FI1994}.

\subsection{Reflection coefficients}
We first define the reflection coefficients $r_{\mKdV}$ and $r_{\KdV}$ associated with mKdV and KdV. Let $\{\sigma_j\}_{j=1}^3$ be the three Pauli matrices.
The mKdV and KdV equations \eqref{mKdV} and \eqref{KdV} both admit Lax pairs of the same form:
\begin{align}
  \begin{cases}
X_x - [\mathcal{L}, X] = \mathcal{U} X, 
  	\\
X_t - [\mathcal{Z}, X] = \mathcal{V} X,
  \end{cases} \label{mKdV Lax pair}
\end{align}
where $\mathcal{L} = -ik\sigma_{3}$, $\mathcal{Z}=-4ik^{3}\sigma_{3}$, and $\mathcal{U}(x,t,k)$ and $\mathcal{V}(x,t,k)$ are given by
\begin{align}\label{UVdefmKdVKdV}
& \mathcal{U} = \begin{cases} -q\sigma_2 = i \begin{pmatrix} 0 & q \\ -q & 0 \end{pmatrix} & (\mKdV), 
	\\
\displaystyle \frac{u}{2k}(i\sigma_3 - \sigma_1)
= \frac{u}{2k}\begin{pmatrix}
i & -1 \\ -1 & -i
\end{pmatrix} & (\KdV),
\end{cases}
	\\ \nonumber
& \mathcal{V} = \begin{cases}
\begin{pmatrix} -2ikq^2 & 4ik^2q + 2iq^3-2kq_x-iq_{xx} \\
-4ik^2q - 2iq^3 -2kq_x + iq_{xx} & 2ikq^2 \end{pmatrix} & (\mKdV), \\
\displaystyle -2ku \begin{pmatrix}
0 & 1 \\ 1 & 0
\end{pmatrix} + u_{x} \begin{pmatrix}
0 & -i \\ i & 0
\end{pmatrix} + \frac{2u^{2}-u_{xx}}{2k}\begin{pmatrix}
i & -1 \\ -1 & -i
\end{pmatrix} & (\KdV).
\end{cases}
\end{align}
Define the $2 \times 2$-matrix valued function $X(x,k)$ as the unique solution of the Volterra integral equation
\begin{align} 
 & X(x,k) = I - \int_x^{\infty} e^{(x-x')\widehat{\mathcal{L}(k)}} \mathcal{U}(x',0,k) X(x',k) dx', 
\end{align}
and define the scattering matrix $s(k)$ by 
\begin{align*}
& s(k) = I - \int_\R e^{-x\widehat{\mathcal{L}(k)}}\mathcal{U}(x,0,k) X(x,k)dx,
\end{align*}
where $e^{\widehat{\mathcal{L}}}\mathcal{U}X := e^\mathcal{L} \mathcal{U}X e^{-\mathcal{L}}$.
The mKdV reflection coefficient $r_{\mKdV}(k)$ associated to the initial data $q_0(x)$ is defined by
\begin{align}\label{mKdVrdef}
r_{\mKdV}(k) = -\frac{s_{21}(k)}{s_{22}(k)}, \qquad  k \in \mathbb{R},
\end{align}	
where $X$ and $s$  are constructed with $\mathcal{U} = -q_0\sigma_2$; the KdV reflection coefficient $r_{\KdV}(k)$ associated to the initial data $u_0(x)$ is defined by the same formula (\ref{mKdVrdef}) except that  $X$ and $s$  now are constructed with $\mathcal{U} = \frac{u_0}{2k}(i\sigma_3 - \sigma_1)$.

\subsection{Solution of the initial value problem}
In what follows, we state the RH problems associated to mKdV and KdV and recall without proofs how they can be used to solve the initial value problem on the line. 
For simplicity, we assume that no solitons are present; this means more precisely that we assume that $(s(k))_{22}$ is nonzero for $\im k \geq 0$ for both mKdV and KdV.

Given solitonless initial data $q_0$ in the Schwartz class $\mathcal{S}(\R)$, the solution $q(x,t)$ of the initial value problem for the mKdV equation is given by
\begin{align}\label{mKdVrecoverq}
q(x,t) = 2  \lim_{k\to \infty} \big(k \, m(x,t,k)\big)_{12},
\end{align}
where $m$ is the unique solution of the following RH problem with  $r(k)$ taken to be the reflection coefficient $r_{\mKdV}(k)$ associated to $q_0(x) = q(x,0)$.

\begin{RHproblem}[RH problem for mKdV]\label{RHmKdV} 
Let $r: \mathbb{R}\to \mathbb{C}$ be a given function. 
Find a $2 \times 2$-matrix valued function $m(x,t,k)$ with the following properties:
\begin{enumerate}[$(a)$]
\item $m(x,t,\cdot) : \mathbb{C}\setminus \mathbb{R} \to \mathbb{C}^{2 \times 2}$ is analytic.

\item The boundary values of $m(x,t,k)$ as $k$ approaches $\mathbb{R}$ from the left and right exist, are continuous, and satisfy
\begin{align}\label{jumpmKdV}
& m_{+}(x,t,k) = m_{-}(x,t,k)v(x,t,k), \qquad k \in \mathbb{R},
\end{align}
where $v$ is defined in terms of $r(k)$ by 
\begin{align}\nonumber
& v(x,t,k) := \begin{pmatrix} 1 - |r(k)|^2 & -\overline{r(k)}e^{-t\Phi(\zeta,k)} \\ r(k)e^{t \Phi(\zeta,k)} & 1 \end{pmatrix},  
	\\ \label{vdefmKdV}
& \Phi(\zeta, k) := 2i(\zeta k + 4k^3), \qquad \zeta := \frac{x}{t}. 
\end{align}

\item As $k \to \infty$, $m(x,t,k) = I + O(k^{-1})$.

\end{enumerate}
\end{RHproblem}

Similarly, for solitonless initial data $u_0 \in \mathcal{S}(\R)$, let $r(k)$ be the KdV reflection coefficient $r_{\KdV}(k)$ associated to $u_0(x) = u(x,0)$. In the generic case when $r(0) = i$, the solution $u(x,t)$ of the initial value problem for the KdV equation is given by
\begin{align*}
u(x,t) = 2i \partial_x \lim_{k\to \infty} \big(k \, M(x,t,k)\big)_{22}
\end{align*}
where $M$ is the unique solution of the following RH problem.  

\begin{RHproblem}[RH problem for KdV]\label{RHKdV}
Let $r: \mathbb{R}\to \mathbb{C}$ be a given function. 
Find a $2 \times 2$-matrix valued function $M(x,t,k)$ with the following properties:
\begin{enumerate}[$(a)$]
\item $M(x,t,\cdot) : \mathbb{C}\setminus \mathbb{R} \to \mathbb{C}^{2 \times 2}$ is analytic.

\item The limits of $M(x,t,k)$ as $k$ approaches $\mathbb{R}\setminus \{0\}$ from the left and right exist, are continuous on $\mathbb{R}\setminus \{0\}$, and satisfy
\begin{align}\label{jumpKdV}
& M_{+}(x,t,k) = M_{-}(x,t,k)v(x,t,k), \qquad k \in \mathbb{R}\setminus \{0\},
\end{align}
where $v$ is defined in terms of $r(k)$ by (\ref{vdefmKdV}).

\item As $k \to \infty$, $M(x,t,k) = I + k^{-1} M^{(1)}(x,t) + O(k^{-2})$, where the matrix $M^{(1)}$ satisfies $M_{12}^{(1)}  = 0$.

\item There exist complex-valued functions $\alpha(x,t), \beta(x,t), \gamma(x,t), \delta(x,t)$ such that
\begin{align}\label{MatzeroKdV}
M(x,t,k) = \frac{\alpha(x,t)}{k}\begin{pmatrix} 0 & 1 \\ 0 & i \end{pmatrix} + \begin{pmatrix}
\beta(x,t) & -i(\gamma(x,t) + \delta(x,t)) \\ 
i\beta(x,t) & \gamma(x,t) - \delta(x,t)
\end{pmatrix} + O(k) 
\end{align}
as $k \to 0$, $\im k \geq 0$.

\item $M$ satisfies the symmetries
\begin{align}\label{MsymmKdV}
M(x,t, k) = \mathcal{A} M(x,t,- k)\mathcal{A}^{-1} = \mathcal{B} \overline{M(x,t,\overline{k})}\mathcal{B}, \qquad k \in \C \setminus \mathbb{R},
\end{align}
where $\mathcal{A}$ and $\mathcal{B}$ are defined by 
\begin{align}\label{Acal B cal def KdV}
\mathcal{A} = \begin{pmatrix} 0 & -1 \\1 & 0 \end{pmatrix}, \qquad  \mathcal{B} = \begin{pmatrix} 0 & 1 \\1 & 0 \end{pmatrix}.
\end{align}
\end{enumerate}
\end{RHproblem}

\subsection{Properties of the reflection coefficients}\label{kdvrsubsec}
Suppose that $q_0$ and $u_0$ are solitonless initial data in the Schwartz class $\mathcal{S}(\R)$.
Then $r_\mKdV:\mathbb{R} \to \C$ and $r_\KdV:\mathbb{R} \to \C$ are both in the Schwartz class and obey the symmetry
\begin{align}
r(k) = -\overline{r(-k)} \quad \mbox{for } k \in \mathbb{R}.
\end{align}
For KdV, the simple pole of $\mathcal{U}$ at $k = 0$ implies that $s(k)$ has an expansion at the origin of the form
\begin{align}\label{s at 0 KdV}
s(k) = \frac{s^{(-1)}}{k} + s^{(0)} + s^{(1)}k + \cdots \qquad \mbox{as } k \to 0,
\end{align}
where, for some constant $\mathfrak{s}^{(-1)} \in \mathbb{R}$,
\begin{align}\label{spm2p KdV}
s^{(-1)} =  \mathfrak{s}^{(-1)} \begin{pmatrix}
-i & 1 \\
1 & i 
\end{pmatrix}.
\end{align}
For generic initial data, the coefficient $\mathfrak{s}^{(-1)}$ is non-zero; in this case, it is easy to see from \eqref{s at 0 KdV}--\eqref{spm2p KdV} that $r(0) =\lim_{k\to 0} -\frac{s_{21}(k)}{s_{22}(k)} = i$. The matrix $s$ satisfies $s(k) = \mathcal{A} s(- k)\mathcal{A}^{-1} = \mathcal{B} \overline{s(\overline{k})}\mathcal{B}$ and $\det s(k)=1$, from which we get $|r(k)|^{2}=1-1/|s_{22}(k)|^{2}$ for $k \in \mathbb{R}\setminus \{0\}$.
It follows that whereas the reflection coefficient for mKdV is everywhere smaller than $1$ (i.e., $|r_\mKdV(k)| < 1$ for all $k \in \mathbb{R}$), the reflection coefficient for KdV generically satisfies
\begin{align}\label{rkdvproperties}
 |r_\KdV(k)| < 1  \mbox{ for } k \in \mathbb{R}\setminus \{0\} \quad \mbox{and} \quad r_\KdV(0) = i.
\end{align}
This has the important implication that the Miura transformation $u = q_{x} + q^{2}$ maps Schwartz class solutions of mKdV onto a small (measure zero) subset of all Schwartz class solutions of KdV. More precisely, only solutions $u$ for which the associated coefficient $\mathfrak{s}^{(-1)}$ vanishes arise as images of mKdV Schwartz class solutions on the line. 

On the other hand, the RH problems \ref{RHmKdV} and \ref{RHKdV} can be formulated for any choice of $r(k)$ and generate solutions of mKdV and KdV, respectively, for a very large class of functions $r(k)$. 
Solutions corresponding to functions $r(k)$ which do not satisfy the above constraints, will not correspond to Schwartz class solutions, but to solutions with singularities and/or a lack of decay at spatial infinity. The Miura transformation is local in the sense that wherever a solution $q$ of mKdV is smooth, the Miura transformation $q \mapsto u = q_{x} + q^{2}$ is well-defined and generates a KdV solution $u$.

\subsection{The Miura correspondence}
Underlying the Miura transformation is a correspondence between RH problem \ref{RHmKdV} and \ref{RHKdV}. This correspondence exists for a large class of spectral functions $r(k)$. We will restrict ourselves to the class of functions $r \in \mathcal{S}(\R)$ such that $r(0) = i$ and $r(k) = -\overline{r(-k)}$ for $k \in \R$.
In view of (\ref{rkdvproperties}), this class contains the reflection coefficients relevant for generic Schwartz class solutions of KdV on the line. 
Our results are summarized in two theorems. The first theorem (Theorem \ref{mKdVtoKdVth}) describes how to go from mKdV to KdV, i.e., how to construct a solution $M$ of RH problem \ref{RHKdV} given a solution $m$ of RH problem \ref{RHmKdV}, and how the corresponding solutions of KdV to mKdV are related by the Miura transformation. The second theorem (Theorem \ref{KdVtomKdVth}) describes how to go in the opposite direction, from KdV to mKdV.
The analogy between Theorem \ref{mKdVtoKdVth} and Theorem \ref{mainth} should be evident.

\begin{theorem}[From mKdV to KdV]\label{mKdVtoKdVth}
Let $r \in \mathcal{S}(\R)$ be such that $r(k) = -\overline{r(-k)}$ for $k \in \R$ and $r(0) = i$.
Define the $2 \times 2$-matrix valued jump matrix $v(x,t,k)$ in terms of $r(k)$ by (\ref{vdefmKdV}). 
Let $\mathcal{N}$ be an open subset of $\mathbb{R}\times [0,+\infty)$ and suppose RH problem \ref{RHmKdV} has a (necessarily unique) solution $m(x,t,\cdot)$ for each $(x,t) \in \mathcal{N}$.

\begin{enumerate}[$(a)$]
\item \label{partamKdV}
The function $q$ defined by
\begin{align}\label{recoverqmKdV}
q(x,t) = 2 \, \lim_{k\to \infty} \big(k \, m(x,t,k)\big)_{12}
\end{align}
is smooth and real-valued on $\mathcal{N}$ and satisfies the mKdV equation (\ref{mKdV}) for $(x,t) \in \mathcal{N}$.

\item \label{partbmKdV}
Define $A(x,t)$ by
\begin{align}\label{AdefmKdV}
& A(x,t) = -\frac{q(x,t)}{2} \begin{pmatrix}
-i & 1 \\ 1 & i
\end{pmatrix}.
\end{align}
The $2\times 2$-matrix valued function $M(x,t,k)$ defined by
\begin{align}\label{MfrommdefmKdV}
M(x,t,k) = \bigg(I + \frac{A(x,t)}{k}\bigg)m(x,t,k)
\end{align}
satisfies RH problem \ref{RHKdV} for each $(x,t) \in \mathcal{N}$. 

\item \label{partcmKdV}
The function $u$ defined by
\begin{align}\label{recoveruKdV}
u(x,t) = 2i \partial_x \lim_{k\to \infty} \big(k \, M(x,t,k)\big)_{22}
\end{align}
is smooth and real-valued on $\mathcal{N}$ and satisfies the KdV equation (\ref{KdV}) for $(x,t) \in \mathcal{N}$.

\item \label{partdmKdV}
The solutions $u$ and $q$ are related by the Miura transformation
\begin{align}\label{miurakdv}
u(x,t) = q_x(x,t) + q(x,t)^2, \qquad (x,t) \in \mathcal{N}.
\end{align}

\end{enumerate}
\end{theorem}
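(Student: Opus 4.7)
The plan is to follow the same template used to prove Theorem \ref{mainth} in Section \ref{proofsec}, exploiting the close structural analogy between the $2\times 2$ mKdV--KdV setting and the $3\times 3$ Boussinesq setting. For part (\ref{partamKdV}), uniqueness of $m$ is the standard determinant argument: $\det m$ is entire with $\det m \to 1$, hence $\det m \equiv 1$, and then the quotient of two solutions is entire and tends to $I$, hence equals $I$. The jump matrix $v$ satisfies $v(x,t,k) = \mathcal{A}v(x,t,-k)\mathcal{A}^{-1} = \mathcal{B}\overline{v(x,t,\bar{k})}^{-1}\mathcal{B}$ as a consequence of $r(k) = -\overline{r(-k)}$ and the structure of $\Phi$, so uniqueness forces $m$ to inherit the corresponding symmetries $m(x,t,k) = \mathcal{A}m(x,t,-k)\mathcal{A}^{-1} = \mathcal{B}\overline{m(x,t,\bar{k})}\mathcal{B}$. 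Smoothness of $m$ in $(x,t) \in \mathcal{N}$ follows from the standard singular-integral argument applied to the Cauchy representation of $m_-$. The Lax pair $m_x - [\mathcal{L}, m] = \mathcal{U}m$, $m_t - [\mathcal{Z}, m] = \mathcal{V}m$ is then obtained as in Lemma \ref{lemma: lax pair for m}: the quantities $(m_x - [\mathcal{L}, m])m^{-1}$ and $(m_t - [\mathcal{Z}, m])m^{-1}$ are entire in $k$ and of polynomial growth, and matching the expansion $m = I + m^{(1)}/k + m^{(2)}/k^2 + O(k^{-3})$ at infinity identifies the coefficients with $\mathcal{U}, \mathcal{V}$ from (\ref{UVdefmKdVKdV}) provided one sets $q = 2m^{(1)}_{12}$. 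Reality of $q$ follows from the $\mathcal{B}$-symmetry, and compatibility of the Lax pair gives mKdV.

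For part (\ref{partbmKdV}), a direct check using (\ref{AdefmKdV}) shows that $\begin{pmatrix}-i & 1 \\ 1 & i\end{pmatrix}$ is nilpotent, so $A^2 = 0$ and hence $\det T = 1$; reality of $q$ together with the explicit form of $A$ then gives the symmetries $T(x,t,k) = \mathcal{A}T(x,t,-k)\mathcal{A}^{-1} = \mathcal{B}\overline{T(x,t,\bar{k})}\mathcal{B}$. Consequently $M = Tm$ is analytic on $\mathbb{C}\setminus(\mathbb{R}\cup\{0\})$, satisfies the jump condition (\ref{jumpKdV}) and the symmetries (\ref{MsymmKdV}), and has the large-$k$ expansion $M = I + M^{(1)}/k + O(k^{-2})$ with $M^{(1)} = A + m^{(1)}$. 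Since $A_{12} = -q/2$ while $m^{(1)}_{12} = q/2$, the constraint $M^{(1)}_{12} = 0$ is satisfied; this is in fact what motivates the choice (\ref{AdefmKdV}) of $A$. For the behavior near $k = 0$, a Taylor expansion of the jump identity combined with $r(0) = i$ and the symmetries of $m$ pins down $m(x,t,0)$ and $m(x,t,\cdot)'(0)$ modulo a small number of parameters; left-multiplication by $T = I + A/k$ then produces a simple pole whose residue is rank one with column direction $\begin{pmatrix}1\\i\end{pmatrix}$, yielding the form (\ref{MatzeroKdV}) for appropriate $\alpha, \beta, \gamma, \delta$.

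For parts (\ref{partcmKdV}) and (\ref{partdmKdV}), since $T$ is rational in $k$ with its only pole at $0$, gauge-transforming the mKdV Lax pair for $m$ by $T$ produces a Lax pair for $M$ of the same general shape with new coefficients; matching the pole structure in $k$ of the transformed coefficients against (\ref{UVdefmKdVKdV}) forces $u = q_x + q^2$, and the compatibility of the resulting Lax pair is the KdV equation for $u$. To verify the Miura identity (\ref{miurakdv}) directly from (\ref{recoveruKdV}), I would write $M^{(1)}_{22} = A_{22} + m^{(1)}_{22} = -iq/2 + m^{(1)}_{22}$; the $(22)$-entry of the $O(k^{-1})$ contribution to $m_x - [\mathcal{L},m] = \mathcal{U}m$ gives $\partial_x m^{(1)}_{22} = -iq^2/2$, so substituting in (\ref{recoveruKdV}) yields $u = 2i\partial_x M^{(1)}_{22} = q_x + q^2$, which simultaneously establishes the smoothness and reality of $u$ and the Miura identity.

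The main obstacle is the verification of the expansion (\ref{MatzeroKdV}): one must check that after multiplication by the gauge factor $T$, the leading behavior of $M$ at $k = 0$ is a simple pole whose residue is precisely of rank one in the direction $\begin{pmatrix}1 \\ i\end{pmatrix}$. This reduces to a careful bookkeeping of the constraints imposed on the first two Taylor coefficients of $m$ at $0$ by the normalization $r(0) = i$ together with the $\mathcal{A}$- and $\mathcal{B}$-symmetries. This is the only step where a substantive computation is unavoidable; everything else is either standard RH machinery or a direct manipulation of the Lax pair, and both mirror closely the corresponding arguments already carried out for Boussinesq.
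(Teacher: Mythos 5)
Your proposal follows essentially the same route as the paper's proof: uniqueness and symmetries of $m$ from $\det v=1$ and the jump-matrix symmetries, the Lax pair via the standard dressing/Liouville argument with $q=2m^{(1)}_{12}$, the gauge factor $T=I+A/k$ chosen to kill $M^{(1)}_{12}$, the $k\to 0$ analysis via the Taylor-expanded jump identity with $r(0)=i$ producing the rank-one residue in the direction $\begin{pmatrix}1\\ i\end{pmatrix}$, and the Miura identity from $u=2i\partial_x(A_{22}+m^{(1)}_{22})$ with $\partial_x m^{(1)}_{22}=-iq^2/2$. The only cosmetic difference is that reality of $q$ requires combining the $\mathcal{A}$- and $\mathcal{B}$-symmetries (the $\mathcal{B}$-symmetry alone only gives $m^{(1)}_{12}=\overline{m^{(1)}_{21}}$), which the paper handles via the explicit coefficient structure of $m^{(1)}$.
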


\begin{theorem}[From KdV to mKdV]\label{KdVtomKdVth}
Let $r \in \mathcal{S}(\R)$ be such that $r(k) = -\overline{r(-k)}$ for $k \in \R$ and $r(0) = i$.
Define the $2 \times 2$-matrix valued jump matrix $v(x,t,k)$ in terms of $r(k)$ by (\ref{vdefmKdV}). 
Let $\mathcal{N}$ be an open subset of $\mathbb{R}\times [0,+\infty)$ and suppose RH problem \ref{RHKdV} has a (necessarily unique) solution $M(x,t,\cdot)$ for each $(x,t) \in \mathcal{N}$ and that the function $\delta(x,t)$ in (\ref{MatzeroKdV}) is nonzero on $\mathcal{N}$.

\begin{enumerate}[$(a)$]
\item \label{partaKdV}
Define $B(x,t)$ by
\begin{align}
& B(x,t) = \frac{\alpha(x,t)}{2\delta(x,t)} \begin{pmatrix}
-i & 1 \\ 1 & i
\end{pmatrix}, \qquad (x,t) \in \mathcal{N},
\end{align}
where $\alpha(x,t)$ and $\delta(x,t)$ are the functions in (\ref{MatzeroKdV}).
The $2\times 2$-matrix valued function $m(x,t,k)$ defined by
\begin{align}\label{mfromMdefKdV}
m(x,t,k) = \bigg(I + \frac{B(x,t)}{k}\bigg)M(x,t,k)
\end{align}
satisfies RH problem \ref{RHmKdV} for each $(x,t) \in \mathcal{N}$.

\item \label{partbKdV}
The function $u$ defined by (\ref{recoveruKdV}) is smooth and real-valued on $\mathcal{N}$ and satisfies the KdV equation (\ref{KdV}) for $(x,t) \in \mathcal{N}$.

\item \label{partcKdV}
The function $q$ defined by (\ref{recoverqmKdV}) is smooth and real-valued on $\mathcal{N}$ and satisfies the mKdV equation (\ref{mKdV}) for $(x,t) \in \mathcal{N}$. 

\item \label{partdKdV}
The solutions $u$ and $q$ are related by the Miura transformation
\begin{align}\label{miurakdv2}
u(x,t) = q_x(x,t) + q(x,t)^2, \qquad (x,t) \in \mathcal{N}.
\end{align}

\item \label{parteKdV}
For $(x,t) \in \mathcal{N}$, it holds that
\begin{align}\label{uqdeltaalpha}
\delta_{xx}(x,t) = u(x,t) \delta(x,t), \qquad q(x,t) = \frac{\delta_x(x,t)}{\delta(x,t)}, \qquad
\alpha(x,t) = \delta_x(x,t).
\end{align}

\end{enumerate}
\end{theorem}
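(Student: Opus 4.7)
The plan is to reduce parts (a)--(d) to an application of Theorem \ref{mKdVtoKdVth} via uniqueness of RH problem \ref{RHKdV}, and then derive the three identities of (e) from the leading orders of the KdV Lax pair for $M$ near $k = 0$.

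For part (a), I would first check that $m := (I + B/k)M$ is analytic at the origin. Abbreviate $K_{1} := \begin{pmatrix} 0 & 1 \\ 0 & i \end{pmatrix}$ and $K_{2} := \begin{pmatrix} -i & 1 \\ 1 & i \end{pmatrix}$, so that $B = \frac{\alpha}{2\delta}K_{2}$. Two short matrix computations yield the cancellations $K_{2}K_{1} = 0$ and $K_{2}\begin{pmatrix} \beta & -i(\gamma+\delta) \\ i\beta & \gamma-\delta \end{pmatrix} = -2\delta K_{1}$. Substituting the local expansion (\ref{MatzeroKdV}) of $M$ into $(I + B/k)M$, the first identity kills the $k^{-2}$ term and the second reduces the residual $k^{-1}$ term to $\alpha K_{1} + BM^{(0)} = \alpha K_{1} - \alpha K_{1} = 0$. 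The other conditions of RH problem \ref{RHmKdV} (analyticity off $\R$, the jump (\ref{jumpmKdV}) on $\R\setminus\{0\}$, and $m = I + O(k^{-1})$ as $k \to \infty$) transfer from $M$ to $m$ since $I + B/k$ is analytic on $\C \setminus \{0\}$ and tends to $I$ at infinity.

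For parts (b)--(d), I would apply Theorem \ref{mKdVtoKdVth} to the $m$ just constructed. Let $q$ be defined from $m$ by (\ref{recoverqmKdV}) and set $\tilde{M} := (I + \tilde{A}/k)m$ with $\tilde{A} := -\frac{q}{2}K_{2}$ as in (\ref{AdefmKdV}). That theorem asserts that $q$ is smooth, real-valued, and solves mKdV; that $\tilde{M}$ satisfies RH problem \ref{RHKdV}; and that $\tilde{u} := 2i\partial_{x} \lim_{k\to\infty}(k\tilde{M})_{22}$ satisfies KdV together with the Miura identity $\tilde{u} = q_{x} + q^{2}$. Uniqueness of the solution of RH problem \ref{RHKdV} then forces $\tilde{M} = M$, so $\tilde{u} = u$, and parts (b), (c), (d) follow at once.

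For part (e), I would derive the KdV Lax pair for $M$ by the standard argument of Lemma \ref{lemma: lax pair for m} applied to the jump factorization $v = e^{-(ikx+4ik^{3}t)\sigma_{3}} v_{0}(k) e^{(ikx+4ik^{3}t)\sigma_{3}}$, obtaining $M_{x} - [\mathcal{L}, M] = \mathcal{U}M$ with $\mathcal{L} = -ik\sigma_{3}$ and $\mathcal{U} = -\frac{u}{2k}K_{2}$; consistency at $k = 0$ is guaranteed by $K_{2}K_{1} = 0$, which kills the $k^{-2}$ contribution of $\mathcal{U}M$. Matching the $k^{-1}$ coefficient of the Lax equation gives $\alpha_{x} K_{1} = -\frac{u}{2}K_{2}M^{(0)} = u\delta K_{1}$, i.e.\ $\alpha_{x} = u\delta$. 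Reading off $\delta = \frac{1}{2}(iM^{(0)}_{12} - M^{(0)}_{22})$ from (\ref{MatzeroKdV}) and matching the $O(1)$ coefficient of the $(1,2)$ and $(2,2)$ entries of the Lax equation --- the contribution from the next Taylor coefficient of $M$ cancels identically because $-i(K_{2}X)_{12} + (K_{2}X)_{22} = 0$ for every $2\times 2$ matrix $X$ --- yields $\delta_{x} = \alpha$; combining the two gives $\delta_{xx} = u\delta$. Finally, expanding $m = (I + B/k)M$ at $k = \infty$ gives $m^{(1)}_{12} = B_{12} = \alpha/(2\delta)$, hence by (\ref{recoverqmKdV}) $q = \alpha/\delta = \delta_{x}/\delta$. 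The main technical obstacle is the careful matching of singular orders in the Lax pair expansion at $k = 0$: the vanishing of $k^{-2}$ (via $K_{2}K_{1}=0$) has to be secured before the $k^{-1}$ and $O(1)$ matchings can be read off unambiguously.
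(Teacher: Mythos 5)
Your argument for parts (a)--(d) follows essentially the paper's route: the pole-cancellation at $k=0$ via $K_2K_1=0$ and $K_2M^{(0)}=-2\delta K_1$ is exactly the computation in the paper, and (b)--(d) are obtained, as there, by feeding the resulting $m$ into Theorem \ref{mKdVtoKdVth} and invoking uniqueness of RH problem \ref{RHKdV} to identify $\tilde{M}=(I+\tilde{A}/k)m$ with $M$ (the paper leaves this uniqueness step implicit; you rightly make it explicit). For part (e) you take a genuinely different route: you expand the $x$-part of the KdV Lax pair for $M$ at $k=0$ to get $\alpha_x=u\delta$ and $\delta_x=\alpha$, and read $q=\alpha/\delta$ off the large-$k$ expansion of $(I+B/k)M$. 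The paper instead obtains $q=\alpha/\delta$ from the identity $(I+A/k)(I+B/k)=I$ (comparing (\ref{MfrommdefmKdV}) with (\ref{mfromMdefKdV})), gets $\delta_x=q\delta$ from the $O(1)$ terms of the mKdV Lax pair (\ref{laxmKdV}) for $m$, and then deduces $\delta_{xx}=u\delta$ by substituting $q=\delta_x/\delta$ into the Miura map. Both routes work and your $k^{-1}$ and $O(1)$ matchings are correct; the cost of yours is that the KdV Lax pair $M_x-[\mathcal{L},M]=\mathcal{U}M$ with $\mathcal{U}=-\frac{u}{2k}K_2$ and $u$ given by (\ref{recoveruKdV}) is asserted rather than derived. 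Identifying the residue matrix of the dressing requires a computation (the remark that $K_2K_1=0$ kills the $k^{-2}$ term only checks consistency, not the identification); the cheapest fix is to conjugate the already-established Lax pair for $m$ by $I+A/k$, which is available once you know $M=(I+A/k)m$ from parts (b)--(d).

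One point in part (a) should not be skipped. The expansion (\ref{MatzeroKdV}) is stated only for $\im k\geq 0$, so your cancellation shows $m=O(1)$ at the origin only from the closed upper half-plane, whereas RH problem \ref{RHmKdV} requires the boundary values of $m$ to exist and be continuous on all of $\R$, including at $k=0$ from below. The paper handles this by showing that $m$ inherits the symmetries (\ref{msymmmKdV}), which requires $B=-\mathcal{A}B\mathcal{A}^{-1}=\mathcal{B}\overline{B}\mathcal{B}$, i.e., that $\alpha$ and $\delta$ are real-valued --- a fact it proves separately from the $O(k^{-1})$ and $O(1)$ terms of the relation $M_+(k)=\mathcal{B}\overline{M_+(\bar{k})}\mathcal{B}v(k)$, and which your proposal omits entirely. (Alternatively, one can repeat your cancellation in the lower half-plane using the symmetry $M(k)=\mathcal{A}M(-k)\mathcal{A}^{-1}$ from (\ref{MsymmKdV}), but some such argument is needed.) This is a fixable omission rather than a fatal one.
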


\begin{remark}
Given $u$, the Miura transformation (\ref{miurakdv2}) can be viewed as a Ricatti equation for $q$. The relations in (\ref{uqdeltaalpha}) can be interpreted as the standard procedure for obtaining the solution of this Ricatti equation in terms of the solution of a second-order linear ordinary differential equation. Indeed, (\ref{uqdeltaalpha}) expresses the solution of $u = q_x + q^2$ as $q = \delta_x/\delta$ where $\delta$ is the solution of the second-order equation $\delta_{xx} = u \delta$.
\end{remark}

\begin{remark}
Let us comment on the assumption in Theorem  \ref{KdVtomKdVth} that the function $\delta(x,t)$ be nonzero on $\mathcal{N}$.
It is clear from (\ref{uqdeltaalpha}) that the inverse Miura transformation $u \mapsto q$ may introduce singularities at the points where $\delta$ vanishes. At these points RH problem \ref{RHmKdV} does not have a solution. For example, as discussed above, for $r(k)$ associated to generic Schwartz class solutions of KdV, there is a solution $M(x,t,k)$ of the RH problem for KdV for any $(x,t) \in \mathbb{R}\times [0,+\infty)$. But the corresponding solution of mKdV will not be in the Schwartz class (because if it were, we would have $|r(0)|<1$) and the best we can say is that it will be a smooth solution of mKdV on the set where $\delta(x,t)$ is nonzero.
\end{remark}

%With more effort, one can show that the function $u$ defined by (\ref{recoveruKdV}) satisfies the KdV equation (\ref{KdV}) for $(x,t) \in \mathcal{N}$ even if the function $\delta(x,t)$ in (\ref{MatzeroKdV}) has zeros in $\mathcal{N}$.

%One soliton for KdV:
%$$u(x,t) = -\frac{c}{2\cosh^2\big(\frac{\sqrt{c}}{2}(x - ct - x_0)\big)} .$$

\subsection{Proof of Theorem \ref{mKdVtoKdVth}}
Suppose $r \in \mathcal{S}(\R)$ satisfies $r(k) = -\overline{r(-k)}$ for $k \in \R$ and $r(0) = i$, and assume that $m(x,t,\cdot)$ is a solution of RH problem \ref{RHmKdV} for all $(x,t)$ in some open subset $\mathcal{N}$ of $\mathbb{R}\times [0,+\infty)$.
Since $\det v = 1$, standard arguments show that the solution $m$ is unique. 
Moreover, since $r(k) = -\overline{r(-k)}$ for $k \in \mathbb{R}$, we have
\begin{align*}
v(x,t,k) =  \mathcal{A}v(x,t,- k)^{-1} \mathcal{A}^{-1} = \mathcal{B}\overline{v(x,t,\overline{k})}^{-1} \mathcal{B}, \qquad k \in \mathbb{R},
\end{align*}
so that $\mathcal{A} m(x,t,- k)\mathcal{A}^{-1}$ and $\mathcal{B} \overline{m(x,t,\overline{k})}\mathcal{B}$ also obey RH problem \ref{RHmKdV}; by uniqueness, 
\begin{align}\label{msymmmKdV}
m(x,t, k) = \mathcal{A} m(x,t,- k)\mathcal{A}^{-1} = \mathcal{B} \overline{m(x,t,\overline{k})}\mathcal{B}, \qquad k \in \C \setminus \mathbb{R},
\end{align}
%Define $\mathcal{C}_w: L^{2}(\R) \to L^{2}(\R)$ by $\mathcal{C}_w h = \mathcal{C}_-(h w)$ with $(\mathcal{C}h)(z) = \frac{1}{2\pi i} \int_{\R} \frac{h(s)ds}{s - z}$. By standard theory for RH problems, $m$ admits the representation
By Cauchy's formula,
\begin{align}\label{mrepresentationmKdV}
m(x, t, k) = I + \frac{1}{2\pi i}\int_{\R} m_-(x, t, s) (v(x, t, s)-I)  \frac{ds}{s - k}.
\end{align}
%where $w := v - I$ and $\mu$ is the unique solution of the singular integral equation $\mu = I + (I - \mathcal{C}_w)^{-1}\mathcal{C}_w I$. 
It follows from (\ref{msymmmKdV}), (\ref{mrepresentationmKdV}), the fact that $r \in \mathcal{S}(\R)$, and standard theory for singular integral equations (see e.g. \cite[Chapter 2]{TO2016}) that
\begin{align}\label{matinftymKdV}
 m(x,t,k) = I + \frac{m^{(1)}(x,t)}{k} + \frac{m^{(2)}(x,t)}{k^2} + O(k^{-3}), \qquad k \to \infty,
\end{align}
where the coefficient matrices $\{m^{(j)}\}_1^2$ have the form
\begin{align}\label{m1m2coefficients KdV}
m^{(1)} = \begin{pmatrix} -m_{22}^{(1)} &  m_{12}^{(1)} \\
m_{12}^{(1)} & m_{22}^{(1)}
\end{pmatrix}, 
\qquad
m^{(2)} = \begin{pmatrix} m_{22}^{(2)} &  m_{12}^{(2)} \\
-m_{12}^{(2)} & m_{22}^{(2)}
\end{pmatrix}, \end{align}
where 
$m_{12}^{(1)}, m_{22}^{(2)}$ are real-valued and $m_{22}^{(1)}, m_{12}^{(2)}$ are purely imaginary.
In particular, $q = 2 m_{12}^{(1)}$ is well-defined by (\ref{recoverqmKdV}) and real-valued. Smoothness of $m$ (and hence also of $q$) as a function of $(x,t) \in \mathcal{N}$ follows from (\ref{mrepresentationmKdV}) and the rapid decay of $r(k)$.
Standard dressing arguments now show that $m(x,t,k)$ obeys the Lax pair equations
\begin{align}\label{laxmKdV}
  \begin{cases}
  m_x + ik[\sigma_3, m] = \mathcal{U} m, 
  	\\
  m_t + 4ik^3[\sigma_3, m] = \mathcal{V} m,
  \end{cases}
   \qquad (x,t) \in \mathcal{N}, \; k \in \C \setminus \R,
\end{align}
where $\mathcal{U}$ and $\mathcal{V}$ are defined in terms of $q$ by (\ref{UVdefmKdVKdV}).
In particular, $q$ satisfies (\ref{mKdV}) for $(x,t) \in \mathcal{N}$.
This completes the proof of part $(\ref{partamKdV})$.

To prove $(\ref{partbmKdV})$, define $A(x,t)$ and $M(x,t,k)$ as in (\ref{AdefmKdV})--(\ref{MfrommdefmKdV}).
It is immediate that $M(x,t,k)$ is analytic for $k \in \C \setminus \R$ and obeys the jump relation (\ref{jumpKdV}) on $\mathbb{R}\setminus \{0\}$.
By (\ref{matinftymKdV}) and (\ref{MfrommdefmKdV}), we have $M(x,t,k) = I + k^{-1} M^{(1)}(x,t) + O(k^{-2})$ as $k \to \infty$, where $M^{(1)} = m^{(1)} + A$ satisfies $M^{(1)}_{12} = m^{(1)}_{12} + A_{12} = 0$.
The symmetry relations (\ref{msymmmKdV}) and $A(x,t) = - \mathcal{A} \, A(x,t)\mathcal{A}^{-1}
= \mathcal{B}\overline{A(x,t)}\mathcal{B}$ imply that $M$ obeys (\ref{MsymmKdV}).

We next show that there exist functions $\alpha(x,t)$, $\beta(x,t)$, $\gamma(x,t)$, $\delta(x,t)$ such that (\ref{MatzeroKdV}) holds.
We deduce from (\ref{mrepresentationmKdV}), the fact that $r \in \mathcal{S}(\R)$, and standard theory for singular integral equations that there exist matrices $\mathfrak{m}^{(0)}(x,t)$ and $\mathfrak{m}^{(1)}(x,t)$ such that
\begin{align*}
m(x,t,k) = \mathfrak{m}^{(0)}(x,t) + \mathfrak{m}^{(1)}(x,t)k + O(k^2) \qquad \text{as}\ k \to 0, \ \im k \geq 0.
\end{align*}
Substituting this expansion into (\ref{MfrommdefmKdV}), we obtain
\begin{align}\label{MAm0}
M(x,t,k) =  \frac{A \mathfrak{m}^{(0)} }{k} + (\mathfrak{m}^{(0)} + A\mathfrak{m}^{(1)}) + O(k), \qquad k \to 0, \ \im k \geq 0.
\end{align}
By (\ref{msymmmKdV}), the leading coefficient obeys $\mathfrak{m}^{(0)} = \mathcal{A}\mathfrak{m}^{(0)}\mathcal{A}^{-1}v(0)$.
Since $r(0) = i$, we have
\begin{align}\label{vat0mKdV}
& v(x,t,0) = \begin{pmatrix} 0 & i \\ i & 1 \end{pmatrix},  \quad \text{and hence} \quad \mathfrak{m}^{(0)} = \begin{pmatrix} \mathfrak{m}^{(0)}_{11} & i(\mathfrak{m}^{(0)}_{22}-\mathfrak{m}^{(0)}_{11}) \\ 
i \mathfrak{m}^{(0)}_{11} & \mathfrak{m}^{(0)}_{22}
\end{pmatrix}.
\end{align}
Substitution into (\ref{MAm0}) shows that (\ref{MatzeroKdV}) holds with $\alpha := \frac{1}{2}(\mathfrak{m}^{(0)}_{11} - 2\mathfrak{m}^{(0)}_{22})q$, $\beta := \mathfrak{m}^{(0)}_{11} + \frac{iq}{2}(\mathfrak{m}^{(1)}_{11} + i \mathfrak{m}^{(1)}_{21})$, and appropriate choices of $\gamma$ and $\delta$.
This completes the proof of $(\ref{partbmKdV})$.

Let us finally prove $(\ref{partcmKdV})$ and $(\ref{partdmKdV})$. 
From (\ref{laxmKdV}), we see that $2i\partial_x m^{(1)}_{22} = q^2$. Hence, defining $u$ by (\ref{recoveruKdV}), we find $u = 2i\partial_x M^{(1)}_{22} = 2i\partial_x( m^{(1)}_{22} + A_{22})
 = q^2 + q_x$. This proves $(\ref{partdmKdV})$ and, since $q$ satisfies (\ref{mKdV}), it also implies that $u$ satisfies (\ref{KdV}).
This completes the proof of Theorem \ref{mKdVtoKdVth}.

\subsection{Proof of Theorem \ref{KdVtomKdVth}}
Suppose $r \in \mathcal{S}(\R)$ satisfies $r(k) = -\overline{r(-k)}$ for $k \in \R$ and $r(0) = i$, and assume that $M(x,t,\cdot)$ is a solution of RH problem \ref{RHKdV} for all $(x,t)$ in some open subset $\mathcal{N}$ of $\mathbb{R}\times [0,+\infty)$.

We first show that the solution $M$ is unique. The symmetries (\ref{MsymmKdV}) and the fact that $M_{12}^{(1)}  = 0$ imply that
\begin{align}\label{MatinftyKdV} 
M(x,t,k) = & \; I + \frac{M_{22}^{(1)}}{k} \begin{pmatrix} -1 & 0 \\ 
0 & 1
\end{pmatrix}  + O(k^{-2}), \qquad k \to \infty, \; (x,t) \in \mathcal{N},
\end{align}
where $iM_{22}^{(1)}$ is a real-valued function of $(x,t)\in \mathcal{N}$.
It follows from (\ref{MatinftyKdV}) and (\ref{MatzeroKdV}) that
\begin{align*}
& \det M(x,t,k) = 1+O(k^{-2}) & & \mbox{as } k \to \infty, \\
& \det M(x,t,k) = O(1) & & \mbox{as } k \to 0.
\end{align*}
Since $\det M$ is analytic for $k \in \C \setminus \{0\}$, we infer that $M$ has unit determinant. 
Suppose that $M$ and $N$ are two solutions of RH problem \ref{RHKdV}. By the above argument, $\det M$ and $\det N$ are identically equal to one. In particular, the inverse $N^{-1}$ exists and is given by
\begin{align*}
N^{-1} = \begin{pmatrix}
N_{22} & -N_{12} \\
-N_{21} & N_{11}
\end{pmatrix}.
\end{align*}
Expanding this expression for $N^{-1}$ as $k \to 0$, $\im k \geq 0$, and using (\ref{MatzeroKdV}), we find 
\begin{align}\label{N1Aat0 KdV}
N^{-1}(x,t,k) = 
\frac{\alpha_N(x,t)}{k} 
\begin{pmatrix}
i & -1 \\
0 & 0
\end{pmatrix} 
+ \beta_N(x,t)
\begin{pmatrix}
0 & 0 \\
-i & 1
\end{pmatrix} 
+ \begin{pmatrix}
O(1) & O(1) \\
O(k) & O(k)
\end{pmatrix}
\end{align}
as $k \to 0$, $\im k \geq 0$. Similarly, expanding the expression for $N^{-1}$ as $k \to \infty$ and using (\ref{MatinftyKdV}), we find 
\begin{align}\nonumber
N^{-1}(x,t,k) = &\; I + 
\frac{N_{22}^{(1)}(x,t)}{k} 
\begin{pmatrix} 
1 & 0 \\ 
0 & -1
\end{pmatrix}
+ O(k^{-2}), \qquad k\to \infty.
\end{align}
By (\ref{MatzeroKdV}) and (\ref{N1Aat0 KdV}), we have, as $k$ approaches $0$, $\im k \geq 0$,
\begin{align}\label{lol5}
MN^{-1} = \frac{\alpha_N \beta_M - \alpha_M \beta_N}{k}\begin{pmatrix}
i & -1 \\ -1 & -i
\end{pmatrix} + O(1),
\end{align}
showing that $MN^{-1}$ has at most a simple pole at $k = 0$. On the other hand, expanding $MN^{-1}$ as $k \to \infty$ we get
\begin{align}\label{lol6}
M(x,t,k)N(x,t,k)^{-1} = I + \frac{M_{22}^{(1)}-N_{22}^{(1)}}{k} \begin{pmatrix}
-1 & 0 \\ 0 & 1
\end{pmatrix} + O(k^{-2}).
\end{align}
Since $MN^{-1}$ is entire, Liouville's theorem together with \eqref{lol5} and \eqref{lol6} show that 
$$M(x,t,k)N(x,t,k)^{-1} = I + \frac{M_{22}^{(1)}-N_{22}^{(1)}}{k} \begin{pmatrix}
-1 & 0 \\ 0 & 1
\end{pmatrix},$$
but this is only compatible with (\ref{lol5}) if $M_{22}^{(1)}-N_{22}^{(1)} = 0$, i.e., if $MN^{-1} = I$. This proves that the solution $M$ is unique.

We next show that the functions $\alpha, \beta$, and $\delta$ in (\ref{MatzeroKdV}) are real-valued.
Using the symmetry \eqref{MsymmKdV} and suppressing the $(x,t)$-dependence for conciseness, we obtain
\begin{align}\label{MBMvBKdV}
M_+(k) & = \mathcal{B}\overline{M_+(\overline{k})}\overline{v(\overline{k})^{-1}}\mathcal{B}
= \mathcal{B}\overline{M_+(\overline{k})}\mathcal{B}v(k), \qquad k \in \R \setminus \{0\}.
\end{align}
Inserting expansion (\ref{MatzeroKdV}), the terms of $O(k^{-1})$ and $O(1)$ of (\ref{MBMvBKdV}) yield the relations
\begin{align*}
& \alpha\begin{pmatrix} 0 & 1 \\ 0 & i \end{pmatrix}
= \mathcal{B} \bar{\alpha} \begin{pmatrix} 0 & 1 \\ 0 & -i \end{pmatrix}\mathcal{B}v(0),
	\\
& \begin{pmatrix}
\beta & -i( \gamma + \delta ) \\ 
i\beta & \gamma - \delta
\end{pmatrix}
= \mathcal{B} \bar{\alpha} \begin{pmatrix} 0 & 1 \\ 0 & -i \end{pmatrix}\mathcal{B}v'(0)
+ \mathcal{B} \begin{pmatrix}
\bar{\beta} & i(\bar{\gamma} + \bar{\delta}) \\ 
-i\bar{\beta} & \bar{\gamma} - \bar{\delta}
\end{pmatrix} \mathcal{B} v(0).
\end{align*}
Since 
\begin{align*}
 v(0) =&\; \begin{pmatrix} 0 & i \\ i & 1 \end{pmatrix},
\qquad v'(0) = \begin{pmatrix} 0 & 2x - r'(0) \\ - 2x + r'(0) & 0 \end{pmatrix}.
\end{align*}
these relation imply that $\alpha$, $\beta$, $\delta$ are real-valued. 

Assume now that the function $\delta(x,t)$ in (\ref{MatzeroKdV}) is nonzero on $\mathcal{N}$.
Then $m(x,t,k)$ is well-defined by (\ref{mfromMdefKdV}) for $(x,t) \in \mathcal{N}$.
Clearly, $m(x,t,k)$ is analytic for $k \in \C \setminus \R$, obeys the jump relation (\ref{jumpmKdV}) on $\mathbb{R}\setminus \{0\}$, and satisfies $m(x,t,k) = I + O(k^{-1})$ as $k \to \infty$. As a consequence of the reality of $\alpha$ and $\delta$, we have $B(x,t) = - \mathcal{A} B(x,t)\mathcal{A}^{-1}
= \mathcal{B}\overline{B(x,t)}\mathcal{B}$. It follows that $m$ obeys the symmetries in (\ref{msymmmKdV}). Moreover, as $k \to 0$, $\im k \geq 0$, we have
\begin{align*}
m(x,t,k) = &\; \bigg(\frac{B(x,t)}{k} + I\bigg)\bigg\{ \frac{\alpha(x,t)}{k}\begin{pmatrix} 0 & 1 \\ 0 & i \end{pmatrix} + \begin{pmatrix}
\beta(x,t) & -i(\gamma(x,t) + \delta(x,t)) \\ 
i\beta(x,t) & \gamma(x,t) - \delta(x,t)
\end{pmatrix} + O(k)\bigg\}
	\\
= &\; \frac{1}{k}\bigg\{\alpha(x,t)\begin{pmatrix} 0 & 1 \\ 0 & i \end{pmatrix}
+ B(x,t) \delta(x,t) \begin{pmatrix}
0 & -i \\
0 & -1
\end{pmatrix} \bigg\} + O(1)
= O(1).
\end{align*}
Thanks to (\ref{msymmmKdV}), $m$ is non-singular also as $k \to 0$ from the lower half-plane.
It follows that $m(x,t,\cdot)$ satisfies RH problem \ref{RHmKdV} for every $(x,t) \in \mathcal{N}$, completing the proof of $(\ref{partaKdV})$.
An application of Theorem \ref{mKdVtoKdVth} then immediately yields also assertions $(\ref{partbKdV})$--$(\ref{partdKdV})$.
Comparing (\ref{MfrommdefmKdV}) and (\ref{mfromMdefKdV}), we conclude that
$$I + \frac{A(x,t)}{k} = \bigg(I + \frac{B(x,t)}{k}\bigg)^{-1},$$
which holds if and only if $q = \alpha/\delta$. On the other hand, substituting $m = (I + B/k)M$ and the expansion (\ref{MatzeroKdV}) of $M$ into the $x$-part of the Lax pair (\ref{laxmKdV}) and considering the terms of $O(1)$ of $m_{12x} + i m_{22x}$, we obtain the relation $\delta_x = q\delta$. We conclude that $\delta_x = \alpha$ and $q = \delta_x/\delta$; substitution of $q = \delta_x/\delta$ into the Miura transformation (\ref{miurakdv}) then gives $u = \delta_{xx}/\delta$.
This completes the proof of Theorem \ref{KdVtomKdVth}.

\appendix

\section{The initial value problems}\label{initialapp}
We recall the origin of RH problem \ref{RHgoodboussinesq} and \ref{RH3x3} and how they can be used to solve the initial value problems for (\ref{boussinesqsystem}) and (\ref{3x3eq}) on the real line. We also discuss the origin of Assumptions \ref{r1r2assumption} and \ref{r1r2at0assumption} on the spectral functions $r_1$ and $r_2$. The material in this appendix can be found in \cite{L3x3, CL2021, CL2022}.

\subsection{The initial value problem for (\ref{3x3eq})} 
Let $\omega = e^{\frac{2\pi i}{3}}$, define $\{l_j(k), z_j(k)\}_{j=1}^3$ by \eqref{lmexpressions}, and define $J$, $\mathcal{L}$, and $\mathcal{Z}$ by
\begin{align*}
J = \mbox{diag}(\omega, \omega^{2}, 1), \qquad \mathcal{L} = k J = \mbox{diag}(l_{1},l_{2},l_{3}), \qquad \mathcal{Z} = k^{2} J^{2} = \mbox{diag}(z_{1},z_{2},z_{3}).
\end{align*}
Equation (\ref{3x3eq}) is the compatibility condition of the Lax pair equations \cite{L3x3}
\begin{align}\label{3x3lax}
\begin{cases}
X_x-[\mathcal{L},X]=\mathcal{U}X,\\
X_t-[\mathcal{Z},X]=\mathcal{V}X,
\end{cases}
\end{align}
where $k \in \C$ is the spectral parameter and
\begin{align}
\mathcal{U} = &\; \begin{pmatrix} 0 & (1-\omega^2) \bar{q} & (1-\omega)q \\
(1-\omega)q & 0 & (1-\omega^2)\bar{q} \\
(1-\omega^2)\bar{q} & (1-\omega)q & 0 \end{pmatrix}, \label{expression for Ufrak}
	\\ \nonumber
 \mathcal{V} = &\; k \begin{pmatrix} 
0 & (\omega^2-1) \bar{q} & (1-\omega^2)q \\
(\omega-1)q & 0 & (1-\omega)\bar{q} \\
(\omega-\omega^2)\bar{q} & (\omega^2-\omega)q & 0 \end{pmatrix}
	\\
& + (\bar{q}_x - 3q^2)\begin{pmatrix} 0 & \omega & 0 \\ 0 & 0 & \omega \\ \omega & 0 & 0 \end{pmatrix}
 + (q_x - 3\bar{q}^2)\begin{pmatrix} 0 & 0 & \omega^2 \\ \omega^2 & 0 & 0 \\ 0 & \omega^2 & 0 \end{pmatrix}. \label{expression for Vfrak}
\end{align}
Given initial data $q_0(x) = q(x,0)$ in $\mathcal{S}(\R)$,  we define the spectral functions $r_1(k)$  and $r_2(k)$ as follows. Let $X(x,k)$ and $X^A(x,k)$ be the unique solutions of the Volterra integral equations
\begin{subequations}\label{XXAdef}
\begin{align}  
 & X(x,k) = I - \int_x^{\infty} e^{(x-x')\widehat{\mathcal{L}(k)}} (\mathcal{U}_0 X)(x',k) dx',
	\\\label{XXAdefb intro}
 & X^A(x,k) = I + \int_x^{\infty} e^{-(x-x')\widehat{\mathcal{L}(k)}} (\mathcal{U}_0^T X^A)(x',k) dx',	
\end{align}
\end{subequations}
where $\mathcal{U}_0(x,k) = \mathcal{U}(x,0,k)$ and $\widehat{\mathcal{L}}$ denotes the operator which acts on a $3 \times 3$ matrix $A$ by $\widehat{\mathcal{L}}A = [\mathcal{L}, A]$ (i.e., $e^{\widehat{\mathcal{L}}}A = e^\mathcal{L} A e^{-\mathcal{L}}$), and $\mathcal{U}_0^T$ denotes the transpose of $\mathcal{U}_0$. Let
\begin{align}\label{sdef}
& s(k) := I - \int_\R e^{-x\widehat{\mathcal{L}(k)}}(\mathcal{U}_{0}X)(x,k)dx,
 	\qquad s^A(k) := I + \int_\R e^{x\widehat{\mathcal{L}(k)}}(\mathcal{U}_{0}^T X^A)(x,k)dx.
\end{align}
The two spectral functions $\{r_j(k)\}_1^2$ are defined by
\begin{align}\label{3x3r1r2def}
\begin{cases}
r_1(k) = \frac{(s(k))_{12}}{(s(k))_{11}}, & k \in (0,\infty),
	\\ 
r_2(k) = \frac{(s^A(k))_{12}}{(s^A(k))_{11}}, \quad & k \in (-\infty,0).
\end{cases}
\end{align}	
The functions $(s(k))_{11}$ and $(s^A(k))_{11}$ which appear in the denominators of \eqref{3x3r1r2def} are analytic in $D_{1}$ and $D_{4}$, respectively, see \cite{CL2021}. 
We refer to the initial data $q_0$ as {\it solitonless} if $(s(k))_{11}$ and $(s^A(k))_{11}$ are nonzero for $k \in \bar{D}_1$ and $k \in \bar{D}_4$, respectively.

It is shown in \cite[Lemma 2.2]{CL2021} that the 
spectral functions $\{r_j(k)\}_1^2$ satisfy the properties in Assumption \ref{r1r2assumption}
for any solitonless initial data in the Schwartz class.
Moreover, it is shown in \cite[Theorem 1]{CL2021} that if $q(x,t)$ is a Schwartz class solution of (\ref{3x3eq}) on $\R \times [0,T)$ with solitonless initial data $q_0(x) = q(x,0)$, then RH problem \ref{RH3x3} has a unique solution $m(x,t,k)$ for any $(x,t) \in \R \times [0,T)$ and the solution $q(x,t)$ can be obtained for any $(x,t) \in \R \times [0,T)$ from the formula
$$q(x,t) = \lim_{k\to\infty} k \, m_{13}(x,t,k).$$

\begin{remark}
  The spectral functions $\{r_j(k)\}_1^2$ associated to solitonless Schwartz class initial data $q_0$ for (\ref{3x3eq}) do {\upshape not} satisfy the properties in Assumption \ref{r1r2at0assumption}. Instead they satisfy $|r_1(0)| < 1$ and $|r_2(0)| < 1$, see \cite[Lemma 2.2]{CL2021}. An analogous situation occurs also for mKdV and KdV, where the reflection coefficient $r(k)$ for mKdV satisfies $|r(0)| < 1$, whereas the reflection coefficient for KdV generically satisfies $r(0) = i$, see Section \ref{kdvrsubsec}.
\end{remark}

\subsection{The initial value problem for (\ref{boussinesqsystem})} 
The system (\ref{boussinesqsystem}) also admits a Lax pair of the form (\ref{3x3lax}). More precisely, (\ref{boussinesqsystem}) is the compatibility condition of (\ref{3x3lax}) provided that $\mathcal{U}$ and $\mathcal{V}$ are given by (see \cite{CL2022})
\begin{subequations}\label{UVexpressions}
\begin{align}
& \mathcal{U}(x,t,k) = \frac{\mathcal{U}^{(2)}(x,t)}{k^{2}} + \frac{\mathcal{U}^{(1)}(x,t)}{k}, \label{expression for U} \\
& \mathcal{V}(x,t,k) = \frac{\mathcal{V}^{(2)}(x,t)}{k^{2}} + \frac{\mathcal{V}^{(1)}(x,t)}{k} + \mathcal{V}^{(0)}(x,t), \label{expression for V}
\end{align}
\end{subequations}
with
\begin{align*}
& \mathcal{U}^{(2)} = - \frac{v+u_{x}}{3} \begin{pmatrix}
\omega & \omega^{2} & 1 \\
\omega & \omega^{2} & 1 \\
\omega & \omega^{2} & 1 
\end{pmatrix}, \qquad \mathcal{U}^{(1)} = - \frac{2u}{3}\begin{pmatrix}
\omega^{2} & \omega & 1 \\
\omega^{2} & \omega & 1 \\
\omega^{2} & \omega & 1 
\end{pmatrix}, \\
& \mathcal{V}^{(2)} = \frac{-3v_{x}+u_{xx}}{9}\begin{pmatrix}
\omega & \omega^{2} & 1 \\
\omega & \omega^{2} & 1 \\
\omega & \omega^{2} & 1
\end{pmatrix}, \qquad \mathcal{V}^{(0)} = \frac{2u}{3} \begin{pmatrix}
0 & \omega & \omega^{2} \\
\omega^{2} & 0 & \omega \\
\omega & \omega^{2} & 0
\end{pmatrix}, \\
& \mathcal{V}^{(1)} = \frac{v}{3}\begin{pmatrix}
-2\omega^{2} & \omega^{2} & \omega^{2} \\
\omega & -2 \omega & \omega \\
1 & 1 & -2 
\end{pmatrix} + \frac{(1-\omega)u_{x}}{9}\begin{pmatrix}
0 & 1 & -1 \\
-\omega^{2} & 0 & \omega^{2} \\
\omega & -\omega & 0
\end{pmatrix}.
\end{align*}
The functions $r_1(k)$ and $r_2(k)$ are still defined by (\ref{3x3r1r2def}) but with $\mathcal{U}_0(x,k) = \mathcal{U}(x,0,k)$ given by (\ref{UVexpressions}). We refer to the initial data $\{u_0(x) = u(x,0), v_0(x) = v(x,0)\}$ as {\it solitonless} if $(s(k))_{11}$ and $(s^A(k))_{11}$ are nonzero for $k \in \bar{D}_1 \setminus \{0\}$ and $k \in \bar{D}_4 \setminus \{0\}$, respectively.

The double poles of $\mathcal{U}$ and $\mathcal{V}$ in (\ref{UVexpressions}) imply that each of the four functions $s_{11}$, $s_{12}$, $s^A_{11}$, and $s^A_{12}$ that appear in (\ref{3x3r1r2def}) has at most a double pole at $k = 0$ \cite{CL2022}. Furthermore, $s_{12}$ has a double pole if and only if $s_{11}$ has a double pole, and $s^A_{12}$ has a double pole if and only if $s^A_{11}$ has a double pole. We say that the initial data $\{u_0(x), v_0(x)\}$ is {\it generic} if the behavior of these functions is generic at $k = 0$ in the sense that
$$\lim_{k \to 0} k^2 (s(k))_{11} \neq 0, \qquad \lim_{k \to 0} k^2 (s^A(k))_{11} \neq 0.$$

It is shown in \cite[Theorem 2.3]{CL2022} that, for any generic solitonless initial data in the Schwartz class, the spectral functions $\{r_j(k)\}_1^2$ satisfy the properties in Assumptions \ref{r1r2assumption} and \ref{r1r2at0assumption}.
Moreover, it is shown in \cite[Theorem 2.6]{CL2022} that if $\{u(x,t), v(x,t)\}$ is a Schwartz class solution of (\ref{boussinesqsystem}) on $\R \times [0,T)$ with generic solitonless initial data, then RH problem \ref{RHgoodboussinesq} has a unique solution $M(x,t,k)$ for any $(x,t) \in \R \times [0,T)$ and the solution $\{u(x,t), v(x,t)\}$ can be obtained for any $(x,t) \in \R \times [0,T)$ from the formulas
\begin{align}\label{recoveruv}
\begin{cases}
 \displaystyle{u(x,t) = -\frac{3}{2}\frac{\partial}{\partial x}\lim_{k\to \infty}k\big(M_{33}(x,t,k) - 1\big),}
	\vspace{.1cm}\\
 \displaystyle{v(x,t) = -\frac{3}{2}\frac{\partial}{\partial t}\lim_{k\to \infty}k\big(M_{33}(x,t,k) - 1\big).}
\end{cases}
\end{align}

\section{Soliton solutions of (\ref{3x3eq})}\label{solitonapp}
In this appendix, we construct soliton solutions of (\ref{3x3eq}); in particular, we derive the exact solutions in (\ref{qonesoliton}) and (\ref{qbreather}).

Soliton solutions of (\ref{3x3eq}) can be generated by adding poles to RH problem \ref{RH3x3} and letting the jump matrix $v$ be the identity matrix. 
To see how to add the poles, let $X(x,t,k)$ be the solution of the Volterra integral equation
\begin{align} \label{Xdef}
 & X(x,t,k) = I - \int_x^{\infty} e^{(x-x')kJ} (\mathcal{U} X)(x',t,k) e^{-(x-x')kJ} dx'.
\end{align}
For compactly supported initial data, the solution $m(x,t,k)$ of RH problem \ref{RH3x3} is related to $X$ for $k \in D_1$ via 
\begin{align}\label{mXT1}
m(x,t,k) = X(x,t,k) e^{xkJ+tk^{2}J^2} T_1(k) e^{-xkJ-tk^{2}J^{2}}, 
\end{align}
where $T_1(k)$ is given in terms of the entries of the scattering matrix $s(k)$ in (\ref{sdef}) by
\begin{align}\nonumber
 T_1(k) = \begin{pmatrix}
  1 & -\frac{s_{12}}{s_{11}} & \frac{m_{31}(s)}{m_{33}(s)} \\
 0 & 1 & -\frac{m_{32}(s)}{m_{33}(s)} \\
 0 & 0 & 1
 \end{pmatrix}
\end{align}
with $m_{ij}(s)$ the $(ij)$th minor of the matrix $s$, see \cite{CL2021} (see also \cite[Lemma 4.4]{CL2022} for a similar situation with more details provided).
The relation (\ref{mXT1}) implies that the three columns $[m]_j$, $j = 1,2,3$, of $m(x,t,k)$ obey the following relations in $D_1$:
\begin{subequations}
\begin{align}\label{MXYcolumnsd}
& [m]_1 = [X]_1,
	\\\label{MXYcolumnse}
& [m]_2 = -\frac{s_{12}}{s_{11}} e^{\theta_{12}} [X]_1 + [X]_2,
	\\ \label{MXYcolumnsf}
& [m]_3 = \frac{m_{31}(s)}{m_{33}(s)}e^{\theta_{13}} [X]_1
-\frac{m_{32}(s)}{m_{33}(s)} e^{\theta_{23}} [X]_2 + [X]_3.	
\end{align}
\end{subequations}

\subsection{One-solitons}
One-soliton solutions arise when $s_{11}(k)$ has a simple zero at some point $k_0 > 0$. 

Suppose $s_{11}$ has a simple zero at $k_0 > 0$ and that $s_{12}(k_0) \neq 0$. 
According to (\ref{MXYcolumnse}), it follows that $[m(x,t,k)]_2$ has a simple pole at $k_0$ satisfying the residue condition
$$\underset{k = k_0}{\res}[m(x,t,k)]_2 = e_1 e^{\theta_{12}(x,t,k_0)} [m(x,t,k_0)]_1,$$
where $e_1 := -\frac{s_{12}(k_0)}{\dot{s}_{11}(k_0)} \in \C$ is a constant.
Using the symmetry $m(x,t,k) = \mathcal{A}m(x,t,\omega k) \mathcal{A}^{-1}$, we conclude that $[m]_1$ and $[m]_3$ have simple poles at $\omega k_0$ and $\omega^2 k_0$, respectively, and that the following residue conditions hold:
\begin{subequations}
\begin{align}
& \underset{k = \omega k_0}{\res} [m(x,t,k)]_1 = \omega e_1 e^{\theta_{12}(x,t,k_0)} [m(x,t, \omega k_0)]_3,
	\\ 
& \underset{k = \omega^2 k_0}{\res} [m(x,t, k)]_3 = \omega^2 e_1 e^{\theta_{12}(x,t,k_0)} [m(x,t,\omega^2 k_0)]_2.
\end{align}
\end{subequations}
Together with the normalization condition $m(x,t,k) = I + O(k^{-1})$ as $k \to \infty$ this gives
\begin{align}\nonumber
& [m(x,t,k)]_1 = \begin{pmatrix} 1 \\ 0 \\ 0 \end{pmatrix}
 + \frac{ \omega e_1 e^{\theta_{12}(x,t,k_0)} [m(x,t,\omega k_0)]_3}{k - \omega k_0},
	\\\nonumber
& [m(x,t,k)]_2 =  \begin{pmatrix} 0 \\ 1 \\ 0 \end{pmatrix} + \frac{e_1 e^{\theta_{12}(x,t,k_0)} [m(x,t,k_0)]_1}{k -  k_0},
	\\\label{m1m2m3}
& [m(x,t,k)]_3 =  \begin{pmatrix} 0 \\ 0 \\ 1 \end{pmatrix} + \frac{\omega^2 e_1 e^{\theta_{12}(x,t,k_0)} [m(x,t,\omega^2 k_0)]_2}{k - \omega^2 k_0}.
\end{align}
We evaluate the first equation at $k = k_0$, the second equation at $k = \omega^2 k_0$, and the third equation at $k = \omega k_0$. 
The top entries of the equations in (\ref{m1m2m3}) then give
\begin{align*}
 m_{11}(x,t,k_0) 
 & = 1
 + \frac{ \omega e_1 e^{\theta_{12}(x,t,k_0)} m_{13}(x,t,\omega k_0)}{k_0 - \omega k_0},
	\\
m_{12}(x,t,\omega^2 k_0) & =  \frac{e_1 e^{\theta_{12}(x,t,k_0)} m_{11}(x,t,k_0)}{\omega^2 k_0 -  k_0},
	\\
m_{13}(x,t,\omega k_0) & = \frac{\omega^2 e_1 e^{\theta_{12}(x,t,k_0)} m_{12}(x,t,\omega^2 k_0)}{\omega k_0 -  \omega^2 k_0}.
\end{align*}
This gives three algebraic equations which we can solve for the three unknowns $m_{11}(x,t,k_0)$, $m_{12}(x,t,\omega^2 k_0)$, and $m_{13}(x,t,\omega k_0)$. Similarly, the middle entries of the equations in (\ref{m1m2m3}) can be solved for $m_{21}(x,t,k_0)$, $m_{22}(x,t,\omega^2  k_0)$, and $m_{23}(x,t,\omega k_0)$. 
This gives explicit expressions for
$$q(x,t) := \lim_{k\to \infty} k \, m_{13}(x,t,k)  = \omega^2 e_1 e^{\theta_{12}(x,t,k_0)} m_{12}(x,t,\omega^2 k_0)$$
and
$$\bar{q}(x,t) := \lim_{k\to \infty} k \, m_{23}(x,t,k)  = \omega^2 e_1 e^{\theta_{12}(x,t,k_0)} m_{22}(x,t,\omega^2 k_0),$$
 which can be verified to satisfy (\ref{3x3eq}) for any choice of $e_1 \in \C$ and $k_0 \in \R$. (The above definitions of $q$ and $\bar{q}$ follow from \cite[Eq.\ (3.11a)]{CL2021}.)
However, in the above computation, we have not taken the $\mathcal{B}$-symmetry of (\ref{regRHsymm}) into account; thus $\bar{q}$  is in general not the complex conjugate of $q$. 
A direct calculation shows that $\bar{q}$ is the complex conjugate of $q$ if and only if
$$|e_1|^2 = 3k_0^2.$$
Writing $e_1 = \sqrt{3} k_0 e^{i \arg e_1}$, this leads to the following class of (singular) one-soliton solutions of (\ref{3x3eq}):
\begin{align}\label{q3x3soliton}
q(x,t) = -\frac{1}{2} \frac{\sqrt{3} k_0 \omega^2 e^{\frac{i}{2} \left(\sqrt{3}
   k_0 (x-k_0 t)+ \arg(e_1) +\frac{\pi }{6}\right)} }{\sin\left(\frac{3}{2}
   \left(\sqrt{3} k_0 (x-k_0 t)+ \arg(e_1) + \frac{\pi }{6}\right)\right)}.
\end{align} 
Letting $c := k_0$ and $\arg(e_1) = -\pi/6$, we obtain the solution in (\ref{qonesoliton}). 

\begin{remark}
The one-soliton solutions we found in (\ref{q3x3soliton}) are singular. This is expected because for regular solutions of (\ref{3x3eq}) which decay as $x \to \pm \infty$, $s_{11}(k)$ cannot vanish for $k \geq 0$. Indeed, for such solutions, $|s_{12}(k)/s_{11}(k)| \leq 1$ for $k \geq 0$ by \cite[Eq. (A.2)]{CL2021}; hence, if $s_{11}(k_0) = 0$ for some $k_0 \geq 0$, we must have $s_{12}(k_0) = 0$. But if $s_{11}(k_0) = s_{12}(k_0) = 0$, then the $\mathcal{B}$-symmetry $s(k) = \mathcal{B} \overline{s(\overline{k})}\mathcal{B}$ implies that $s_{22}$ and $s_{21}$ also vanish at $k_0$. Consequently, $\det s(k_0) = 0$, which contradicts the fact that $s(k)$ has unit determinant. This means that (\ref{3x3eq}) does not admit regular one-solitons with decay at spatial infinity. 
\end{remark}

\subsection{Breather solitons}
Breather solitons arise when $s_{11}(k)$ has a simple nonreal zero. 

Suppose $k_0$ is a simple zero of $s_{11}$ in the open set $D_1$ so that $\im k_0 > 0$.
Let us consider the case where $s_{21}(k)$ also has a simple zero at $k_0$. Then $m_{33}(s) = s_{11}s_{22} - s_{12} s_{21}$ and $m_{32}(s) = s_{11}s_{23} - s_{13}s_{21}$ both vanish at $k_0$.
Taking the residues of (\ref{MXYcolumnse}) and (\ref{MXYcolumnsf}) at $k_0$ and suppressing the $(x,t)$-dependence for brevity, we obtain
\begin{align*}
& \underset{k = k_0}{\res}[m]_2 = C_1 [m(k_0)]_1,
	\qquad
\underset{k = k_0}{\res} [m]_3 =  C_2 [m(k_0)]_1,	
\end{align*}
where $C_1 = C_1(x,t)$ and $C_2 = C_2(x,t)$ are given by
$$ C_1 := -\underset{k = k_0}{\res}\frac{s_{12}(k)}{s_{11}(k)} e^{\theta_{12}(x,t,k_0)}, \qquad
C_2 := \underset{k = k_0}{\res} \frac{m_{31}(s(k))}{m_{33}(s(k))} e^{\theta_{13}(x,t,k_0)}.$$
We write these conditions as
\begin{subequations}\label{mCDexpansions}
\begin{align}
m(k) \begin{pmatrix} 0 \\ 1 \\ 0 \end{pmatrix} = C_1 m(k) \begin{pmatrix} 1 \\ 0 \\ 0 \end{pmatrix} \frac{1}{k - k_0} + O(1), \qquad k \to k_0,
	\\
m(k) \begin{pmatrix} 0 \\ 0 \\ 1 \end{pmatrix} = C_2 m(k) \begin{pmatrix} 1 \\ 0 \\ 0 \end{pmatrix} \frac{1}{k - k_0} + O(1), \qquad k \to k_0.
\end{align}
\end{subequations}
Applying the symmetry $m(k) = \mathcal{A}m(\omega k) \mathcal{A}^{-1}$, we get
\begin{align*}
& \mathcal{A}m(\omega k)\mathcal{A}^{-1}\begin{pmatrix} 0 \\ 1 \\ 0\end{pmatrix} = C_1 \mathcal{A}m(\omega k)\mathcal{A}^{-1} \begin{pmatrix} 1 \\ 0 \\ 0\end{pmatrix} \frac{1}{k - k_0}+ O(1), \qquad k \to k_0,
	\\
& \mathcal{A}m(\omega k)\mathcal{A}^{-1}\begin{pmatrix} 0 \\ 0 \\ 1 \end{pmatrix} = C_2 \mathcal{A}m(\omega k)\mathcal{A}^{-1} \begin{pmatrix} 1 \\ 0 \\ 0\end{pmatrix} \frac{1}{k - k_0}+ O(1), \qquad k \to k_0,
	\\
& \mathcal{A}^2m(\omega^2 k)\mathcal{A}^{-2}\begin{pmatrix} 0 \\ 1 \\ 0 \end{pmatrix} = C_1 \mathcal{A}^2m(\omega^2 k)\mathcal{A}^{-2} \begin{pmatrix} 1 \\ 0 \\ 0\end{pmatrix} \frac{1}{k - k_0}+ O(1), \qquad k \to k_0,
	\\
& \mathcal{A}^2m(\omega^2 k)\mathcal{A}^{-2}\begin{pmatrix} 0 \\ 0 \\ 1 \end{pmatrix} = C_2 \mathcal{A}^2m(\omega^2 k)\mathcal{A}^{-2} \begin{pmatrix} 1 \\ 0 \\ 0\end{pmatrix} \frac{1}{k - k_0}+ O(1), \qquad k \to k_0.
\end{align*}
Replacing $k$ with $\omega^2 k$ in the first and second equations, and replacing $k$ with $\omega k$ in the third and fourth equations, we can write these equations as
\begin{subequations}\label{mCDexpansions2}
\begin{align}
& m(k)\begin{pmatrix} 1 \\ 0 \\ 0\end{pmatrix} = \omega C_1 m(k) \begin{pmatrix} 0 \\ 0 \\ 1 \end{pmatrix} \frac{1}{k - \omega k_0}+ O(1), \qquad k \to \omega k_0,
	\\
& m(k)\begin{pmatrix} 0 \\ 1 \\ 0 \end{pmatrix} = \omega C_2 m(k)\begin{pmatrix} 0 \\ 0 \\ 1 \end{pmatrix} \frac{1}{k - \omega k_0}+ O(1), \qquad k \to \omega k_0,
	\\
& m(k) \begin{pmatrix} 0 \\ 0 \\ 1 \end{pmatrix} = \omega^2 C_1 m(k) \begin{pmatrix} 0 \\ 1 \\ 0\end{pmatrix} \frac{1}{k - \omega^2 k_0}+ O(1), \qquad k \to \omega^2 k_0,
	\\
& m(k)\begin{pmatrix} 1 \\ 0 \\ 0 \end{pmatrix} = \omega^2 C_2 m(k)\begin{pmatrix} 0 \\ 1 \\ 0\end{pmatrix} \frac{1}{k - \omega^2 k_0}+ O(1), \qquad k \to \omega^2 k_0.
\end{align}
\end{subequations}

Applying the symmetry $m(k) = \mathcal{B} \overline{m(\bar{k})} \mathcal{B}$ to the equations in (\ref{mCDexpansions}) and (\ref{mCDexpansions2}), we find six further residue conditions. 
Altogether we find that there are twelve relevant residue conditions given by
\begin{align*}
& [m(k)]_2 =  \frac{C_1 [m]_1}{k - k_0} + O(1) 
\;\; \text{and}\;\; 
[m(k)]_3 =  \frac{C_2 [m]_1}{k - k_0} + O(1)
&& \text{as}\; k \to k_0.
	\\
& [m(k)]_1 =  \frac{\omega C_1 [m]_3}{k - \omega k_0}+ O(1) 
\;\; \text{and}\;\;
 [m(k)]_2 =  \frac{\omega C_2 [m]_3}{k - \omega k_0}+ O(1)
 && \text{as}\; k \to \omega k_0,
	\\
& [m(k)]_3 = \frac{\omega^2 C_1 [m]_2 }{k - \omega^2 k_0}+ O(1)
\;\; \text{and}\;\;
[m(k)]_1 =  \frac{\omega^2 C_2 [m]_2}{k - \omega^2 k_0}+ O(1) 
&&\text{as}\; k \to \omega^2 k_0,
	\\
& [m(k)]_1 =  \frac{\overline{C_1} [m]_2}{k - \bar{k}_0} + O(1)
\;\; \text{and}\;\;
 [m(k)]_3 =  \frac{\overline{C_2} [m]_2}{k - \bar{k}_0} + O(1)
 &&\text{as}\; k \to \bar{k}_0.
	\\
& [m(k)]_2 = \frac{\omega^2 \overline{C_1}  [m]_3}{k - \omega^2 \bar{k}_0}+ O(1)
\;\; \text{and}\;\;
[m(k)]_1 = \frac{\omega^2 \overline{C_2}  [m]_3}{k - \omega^2 \bar{k}_0}+ O(1)
&&\text{as}\; k \to \omega^2 \bar{k}_0,
	\\
& [m(k)]_3 = \frac{\omega \overline{C_1} [m]_1}{k - \omega \bar{k}_0}+ O(1)
\;\; \text{and}\;\;
[m(k)]_2 = \frac{\omega \overline{C_2} [m]_1}{k - \omega \bar{k}_0}+ O(1)
 &&\text{as}\; k \to \omega \bar{k}_0.
\end{align*}
Hence, using that $m \to I$ as $k \to \infty$,
\begin{align*}
[m(k)]_1 =& \begin{pmatrix} 1 \\ 0 \\ 0 \end{pmatrix} 
+  \frac{\omega C_1 [m(\omega k_0)]_3}{k - \omega k_0}
+ \frac{\omega^2 C_2 [m(\omega^2 k_0)]_2}{k - \omega^2 k_0}
+ \frac{\overline{C_1} [m(\bar{k}_0)]_2}{k - \bar{k}_0}
+ \frac{\omega^2 \overline{C_2}  [m(\omega^2 \bar{k}_0)]_3}{k - \omega^2 \bar{k}_0},
	\\
 [m(k)]_2 =& \begin{pmatrix} 0\\ 1 \\ 0 \end{pmatrix} 
+\frac{C_1 [m(k_0)]_1}{k - k_0}
+ \frac{\omega C_2 [m(\omega k_0)]_3}{k - \omega k_0}
+ \frac{\omega^2 \overline{C_1}  [m(\omega^2 \bar{k}_0)]_3}{k - \omega^2 \bar{k}_0}
+\frac{\omega \overline{C_2} [m(\omega \bar{k}_0)]_1}{k - \omega \bar{k}_0},
	\\
[m(k)]_3 = & \begin{pmatrix} 0 \\ 0 \\ 1 \end{pmatrix} 
+ \frac{C_2 [m(k_0)]_1}{k - k_0}
+ \frac{\omega^2 C_1 [m(\omega^2 k_0)]_2 }{k - \omega^2 k_0}
+  \frac{\overline{C_2} [m(\bar{k}_0)]_2}{k - \bar{k}_0} 
 + \frac{\omega \overline{C_1} [m(\omega \bar{k}_0)]_1}{k - \omega \bar{k}_0}.
\end{align*}
We evaluate the first of the above equations at $k_0$ and $\omega \bar{k}_0$, the second at $\omega^2 k_0$ and $\bar{k}_0$, and the third at $\omega k_0$ and $\omega^2 \bar{k}_0$. This gives a system of equations for $[m(k_0)]_1$, $[m(\omega \bar{k}_0)]_1$, $[m(\omega^2 k_0)]_2$, $[m(\bar{k}_0)]_2$, $[m(\omega k_0)]_3$, and $[m(\omega^2 \bar{k}_0)]_3$. 
Solving this system and substituting the resulting expressions into the formula
\begin{align}\nonumber
q(x,t) =&\; \lim_{k\to \infty} k \, m_{13}(x,t,k) 
	\\\nonumber
= &\; c_{2} e^{\theta_{13}(x,t,k_0)} m_{11}(x,t,k_0)
+ \omega^2 c_1 e^{\theta_{12}(x,t,k_0)} m_{12}(x,t,\omega^2 k_0)
	\\ \label{qbreatherfinal}
& +  \overline{c_{2} e^{\theta_{13}(x,t,k_0)}} m_{12}(x,t,\bar{k}_0)
+ \omega \overline{c_1 e^{\theta_{12}(x,t,k_0)}} m_{11}(x,t,\omega \bar{k}_0),
%	\\
%\overline{q(x,t)} = \lim_{k\to \infty} k m_{23}(x,t,k)  
%= &\; d_1 e^{\theta_{13}(k_0)} m_{21}(k_0)
%+ \omega^2 c_1 e^{\theta_{12}(k_0)} m_{22}(\omega^2 k_0)
%	\\
%& +  \overline{d_1 e^{\theta_{13}(k_0)}} m_{22}(\bar{k}_0)
%+ \omega \overline{c_1 e^{\theta_{12}(k_0)}} m_{21}(\omega \bar{k}_0).
\end{align}
where we have written $C_1 = c_1 e^{\theta_{12}(x,t,k_0)}$ and $C_{2} = c_{2} e^{\theta_{13}(x,t,k_0)}$, we obtain a family of breather soliton solutions of (\ref{3x3eq}) parametrized by $c_1, c_{2} \in \C$ and $k_0 \in D_1$. Applying the Miura-type transformation of Theorem \ref{mainth} to these solutions, we obtain solutions of the ``good'' Boussinesq equation (\ref{goodboussinesq}).
The solution in (\ref{qbreather}) is obtained from (\ref{qbreatherfinal}) in the special case when $c_1 = 0$, $c_{2} = 1$, and $k_0 = e^{\frac{\pi i}{6}}/\sqrt{3}$.

\subsection*{Acknowledgements}
Support is acknowledged from the Novo Nordisk Fonden Project, Grant 0064428, the European Research Council, Grant Agreement No. 682537, the Swedish Research Council, Grant No. 2015-05430, Grant No. 2021-04626, and Grant No. 2021-03877, the G\"oran Gustafsson Foundation, and the Ruth and Nils-Erik Stenb\"ack Foundation.

\bibliographystyle{plain}
\bibliography{is}

\begin{thebibliography}{99}
\small

\bibitem{BS1988}
J. L. Bona and R. L.. Sachs, Global existence of smooth solutions and stability of solitary waves for a generalized Boussinesq equation, {\it Comm. Math. Phys.} {\bf 118} (1988), 15--29. 

\bibitem{B1872}
J. Boussinesq, Th\'eorie des ondes et des remous qui se propagent le long d'un canal rectangulaire horizontal, en communiquant au liquide contenu dans ce canal des vitesses sensiblement pareilles de la surface au fond, {\it J. Math. Pures Appl.} {\bf 17} (1872), 55--108. 

\bibitem{BLS2017} 
A. Boutet de Monvel, J. Lenells, and D. Shepelsky, Long-time asymptotics for the Degasperis--Procesi equation on the half-line, {\it Ann. Inst. Fourier (Grenoble)}  {\bf 69} (2019), 171--230.

\bibitem{BS2013}
A. Boutet de Monvel and D. Shepelsky, A Riemann--Hilbert approach for the Degasperis--Procesi equation, {\it Nonlinearity} {\bf 26} (2013), 2081--2107.

\bibitem{BSZ2016}
A. Boutet de Monvel, D. Shepelsky, and L. Zielinski, A Riemann-Hilbert approach for the Novikov equation, {\it SIGMA Symmetry Integrability Geom. Methods Appl.} {\bf 12} (2016), Paper No. 095, 22 pp. 


\bibitem{CL2021}
C. Charlier and J. Lenells, Long-time asymptotics for an integrable evolution equation with a 3x3 Lax pair, {\it Physica D}  {\bf 426} (2021), 132987.

\bibitem{CL2022}
C. Charlier and J. Lenells, The ``good'' Boussinesq equation: a Riemann-Hilbert approach, {\it Indiana Univ. Math. J.}  {\bf 71} (2022), 1505--1562.

\bibitem{CLW2022}
C. Charlier, J. Lenells, and D. Wang, The "good" Boussinesq equation: long-time asymptotics, {\it Analysis \& PDE}, to appear, arXiv:2003.04789, 34pp.

\bibitem{CT2017}
E. Compaan and N. Tzirakis, Well-posedness and nonlinear smoothing for the ``good'' Boussinesq equation on the half-line, {\it J. Differential Equations} {\bf 262} (2017), 5824--5859. 

\bibitem{CIL2010}
A. Constantin, R. Ivanov, and J.  Lenells, Inverse scattering transform for the  Degasperis--Procesi equation, {\it Nonlinearity}  {\bf 23}  (2010), 2559--2575. 

\bibitem{DTT1982}
P. Deift, C. Tomei, and E. Trubowitz, Inverse scattering and the Boussinesq equation, {\it Comm. Pure Appl. Math.} {\bf 35} (1982), 567--628. 

\bibitem{FST1983}
G. E. Fal'kovich, M. D. Spector, and S. K. Turitsyn, Destruction of stationary solutions and collapse in the nonlinear string equation, {\it Phys. Lett. A} {\bf 99} (1983), 271--274.

\bibitem{F2008}
L. G. Farah, Large data asymptotic behaviour for the generalized Boussinesq equation, {\it Nonlinearity} {\bf 21} (2008), 191--209. 

\bibitem{F2009}
L. G. Farah, Local solutions in Sobolev spaces with negative indices for the ``good'' Boussinesq equation, {\it Comm. Partial Differential Equations} {\bf 34} (2009), 52--73. 

\bibitem{FI1994}
A. S. Fokas and A. R. Its, An initial-boundary value problem for the Korteweg--de Vries equation, 
{\it Math. Comput. Simulation} {\bf 37} (1994), 293--321.
 
\bibitem{HM2015}
A. A. Himonas and D. Mantzavinos, The ``good'' Boussinesq equation on the half-line, {\it J. Differential Equations} {\bf 258} (2015), 3107--3160.

\bibitem{H1973}
R. Hirota, Exact $N$-soliton solutions of the wave equation of long waves in shallow-water and in nonlinear lattices, {\it J. Math. Phys.}{\bf 14} (1973), 810--814. 

\bibitem{HL2020} 
L. Huang and J. Lenells, Asymptotics for the Sasa--Satsuma equation in terms of a modified Painlev\'{e} II transcendent, {\it J. Diff. Eq.} {\bf 268} (2020), 7480--7504. 

\bibitem{KT2010} 
N. Kishimoto and K. Tsugawa, Local well-posedness for quadratic nonlinear Schr\"odinger equations and the ``good'' Boussinesq equation, {\it 
Differential Integral Equations} {\bf 23} (2010), 463--493. 

\bibitem{L3x3}
J. Lenells, Initial-boundary value problems for integrable evolution equations with $3\times 3$ Lax pairs, {\it Physica D} {\bf 241} (2012), 857--875.

\bibitem{LNonlinearFourier}
J. Lenells, Nonlinear Fourier transforms and the mKdV equation in the quarter plane, {\it Stud. Appl. Math.} {\bf 136} (2016), 3--63.

\bibitem{L1993}
F. Linares, Global existence of small solutions for a generalized Boussinesq equation, {\it J. Differential Equations} {\bf 106} (1993), 257--293. 

\bibitem{LS1995}
F. Linares and M. Scialom, Asymptotic behavior of solutions of a generalized Boussinesq type equation, {\it Nonlinear Anal.} {\bf 25} (1995), 1147--1158.

\bibitem{L1997}
Y. Liu, Decay and scattering of small solutions of a generalized Boussinesq equation, J. Funct. Anal. {\bf 147} (1997), 51--68. 

\bibitem{LGX2018}
H. Liu, X. Geng, and B. Xue, The Deift--Zhou steepest descent method to long-time asymptotics for the Sasa--Satsuma equation, {\it J. Diff. Eq.} {\bf 265} (2018), 5984--6008.
 
\bibitem{M1981}
H. P. McKean, Boussinesq's equation on the circle, {\it Comm. Pure Appl. Math.} {\bf 34} (1981), 599--691.

\bibitem{Mik1981}
A. V. Mikhailov, The reduction problem and the inverse scattering method, {\it Physica D}  {\bf 3} (1981), 73--117.

\bibitem{M1968} R. M. Miura, Korteweg-de Vries equation and generalizations. I. A remarkable explicit nonlinear transformation, \textit{J. Math. Phys.} \textbf{9} (1968), 1202--1204.

\bibitem{MGK1968} 
R. M. Miura, C. S. Gardner, and M. D. Kruskal, Korteveg-de Vries equation and generalizations. II. Existence of conservation laws and constants of motion, \textit{J. Math. Phys.} \textbf{9} (1968), 1204--1209.

\bibitem{S1974}
A. C. Scott, The application of B\"acklund transforms to physical problems. B\"acklund transformations, the inverse scattering method, solitons, and their applications, pp. 80--105. Lecture Notes in Math., Vol. 515, Springer, Berlin, 1976.


\bibitem{TO2016}
T. Trogdon and S. Olver, {\it Riemann--Hilbert problems, their numerical solution, and the computation of nonlinear special functions}, Society for Industrial and Applied Mathematics (SIAM), Philadelphia, PA, 2016. 


\bibitem{WZ2022}
D.-S. Wang and X. Zhu, Long-time asymptotics of the good Boussinesq equation with $q_{xx}$-term and its modified version, {\it J. Math. Phys.} {\bf 63} (2022), 123501.


\bibitem{XF2016}
J. Xu and E. Fan, Initial-boundary value problem for integrable nonlinear evolution equation with $3\times 3$ Lax pairs on the interval, {\it Stud. Appl. Math.}  {\bf 136} (2016), 321--354.

\bibitem{ZK1965}
N. J. Zabusky and M. D. Kruskal, Interaction of ``solitons'' in a collisionless plasma and the recurrence of initial states, {\it Phys. Rev. Lett.} {\bf 15} (1965), 240--243.

\bibitem{Z1974}
V. E. Zakharov, On stochastization of one-dimensional chains of nonlinear oscillations, {\it Soviet Phys. JETP} {\bf 38} (1974), 108--110.

\bibitem{ZS1974}
V. E. Zakharov and A. B. Shabat, A scheme for integrating the nonlinear equations of mathematical physics by the method of the inverse scattering problem. I, {\it Funct. Anal. Appl.} {\bf 8} (1974), 226--235.

\end{thebibliography}

\end{document}